\documentclass[11pt]{article} %

\usepackage{url}
\usepackage{amsmath}
\usepackage{amsthm}
\usepackage{amssymb}
\usepackage{mathtools}
\usepackage{thmtools}
\usepackage{dsfont}
\usepackage{graphicx}
\usepackage{subcaption}
\usepackage{caption}
\usepackage[square,semicolon,numbers]{natbib}

\usepackage{booktabs}    %
\usepackage{multirow}    %
\usepackage{array}

\usepackage[margin=1in]{geometry}
\usepackage{charter}

\usepackage{amsfonts,amssymb}
\usepackage{bbm}
\usepackage{mathrsfs}
\usepackage{float}
\usepackage{bm}

\usepackage{amsthm,thmtools,thm-restate}
\usepackage{mathtools} %
\usepackage{amsmath} %
\usepackage{algorithm}
\usepackage[noend]{algpseudocode}
\usepackage{etoolbox}

\makeatletter
\newcommand{\linelabel}[1]{%
  \edef\@currentlabel{\theALG@line}%
  \label{#1}%
}
\makeatother
\algrenewcommand\algorithmicrequire{\textbf{Input:}}
\algrenewcommand\algorithmicensure{\textbf{Output:}}

\usepackage{tabularx}
\usepackage{tabularx}

\usepackage[dvipsnames]{xcolor} %
\definecolor{ptblue}{RGB}{15,76,129} %
\definecolor{ptemerald}{HTML}{009473} %
\definecolor{bluegray}{rgb}{0.4, 0.6, 0.8}
\definecolor{ptilluminating}{HTML}{F5DF4D} %
\definecolor{ptgray}{HTML}{939597}
\usepackage{cprotect}
\definecolor{cobalt}{rgb}{0.0, 0.28, 0.67}
\usepackage{hyperref}
\hypersetup{
linktocpage,
colorlinks=true,
citecolor=cobalt, %
urlcolor=ptblue, %
linkcolor=cobalt, %
}

\usepackage{cleveref}

\usepackage{thm-restate}
\usepackage{enumitem}
\usepackage{verbatim}

\DeclareMathOperator*{\argmax}{arg\,max}
\DeclareMathOperator*{\argmin}{arg\,min}

\theoremstyle{plain}
\newtheorem{theorem}{Theorem}
\newtheorem{proposition}[theorem]{Proposition}
\newtheorem{lemma}[theorem]{Lemma}
\newtheorem{corollary}[theorem]{Corollary}
\newtheorem{claim}[theorem]{Claim}
\theoremstyle{definition}
\newtheorem{definition}{Definition}

\newtheorem{example}{Example}

\theoremstyle{remark}
\newtheorem{remark}{Remark}
\theoremstyle{definition}

\crefname{assumption}{assumption}{assumptions}
\Crefname{assumption}{Assumption}{Assumptions}

\newcommand{\RR}{\mathbb{R}}

\title{\bfseries Packing Linear Programs and Fractional Knapsack using Comparison Oracles}

\author{
Ritabrata Barat\thanks{Indian Institute of Science. \texttt{ritobrotobarat.100@gmail.com}}\hfill
\and
Siddharth Barman\thanks{Indian Institute of Science. \texttt{barman@iisc.ac.in}}\hfill
\and
Nirjhar Das\thanks{Indian Institute of Science. \texttt{nirjhardas@iisc.ac.in}}\hfill
\and
Sukruta Midigeshi\thanks{Work done while at Microsoft Research, India. \texttt{sukruta.midigeshi@gmail.com}}
}
\date{}

\begin{document}

\maketitle

\begin{abstract}
We study the problem of recovering the objective of a packing linear program when the algorithm has access only to comparison information about optimal solutions under varying constraint matrices. This model is motivated by recent work on optimization with comparison oracles~\cite{gupta2026combinatorial} and on learning from preference feedback~\cite{kaufmann2025a}. It also strengthens revealed-preference and inverse-optimization frameworks by replacing direct observations of optimal solutions with ordinal information obtained via comparison queries.

We focus on the important special case of the fractional knapsack problem, in which the packing linear program has a single budget constraint specified by item prices (sizes), and the underlying linear objective is determined by the item values. The fractional knapsack formulation captures a monopoly-pricing setting in which a seller seeks to infer a buyer' unknown (additive) valuations for divisible items from comparison information alone. Here, the algorithm (seller) can query an oracle with two price vectors, and the oracle returns which of the corresponding optimal solutions has larger total packing of the items. Our results also apply to oracles that report which of the corresponding optimal solutions attains the larger objective value. Such comparison oracles provide a meaningful abstraction of discrete-choice surveys, where buyers repeatedly choose between alternatives offered at different prices.

For fractional knapsack, we develop a polynomial-time algorithm that recovers the item values up to scale using $O(n \log(1/\delta) + B^2)$ comparison queries, where $n$ is the number of items, $B$ is the known capacity of the knapsack, and $\delta$ is grid resolution of the values. We complement this result with an $\Omega(n \log(1/\delta))$ lower bound on the query complexity of fractional knapsack. 

A key insight of our work is that, in the comparison-oracle model, fractional knapsack is essentially as general as packing linear programs: the algorithm for fractional knapsack solves the packing setting by treating a row of the constraint matrix as the price vector and setting the remaining rows to zero, while the $\Omega(n \log(1/\delta))$ query lower bound continues to hold for packing linear programs as well. Consequently, the upper bound obtained via the knapsack reduction is essentially best possible for packing linear programs, up to a linear-factor gap.

Finally, we extend our knapsack algorithm to the profit-maximization setting, yielding a comparison-oracle analogue of the revealed-preference result of \cite{amin2015onlineProfitMax}.
\end{abstract}

\newpage
\thispagestyle{empty}

\tableofcontents

\thispagestyle{empty}
\newpage

\setcounter{page}{1}

\section{Introduction}
Packing linear programs constitute a central primitive in combinatorial optimization, approximation algorithms, and operations research \cite{toth20172,hochbaum1997approximating,plotkin1995fast,young2001sequential}. They model resource-allocation problems in which one seeks to maximize a nonnegative linear objective subject to nonnegative capacity (budget) constraints, and arise in settings ranging from fractional knapsack \cite{korte2008combinatorial} and multi-commodity flows \cite{garg2007faster} to ad allocation \cite{mehta2010online} and scheduling \cite{li2023nearly}.

In many recent applications, the feasible region---encoded by a constraint matrix---is under the algorithm designer's control, while the objective coefficients represent latent values, utilities, or profits that are not directly observable; see the monopoly-pricing and profit-maximization settings below for specific examples. This raises the question of whether one can recover the underlying objective of a packing LP by probing the system with different constraints and observing the induced optimal solution. Indeed, this question is closely related to the literature on inverse optimization~\cite{chan2025inverse} and to revealed-preference models such as those studied in \cite{amin2015onlineProfitMax}. These frameworks, however, typically assume access to the induced optimal solutions in their entirety. In many modern settings, by contrast, such cardinal information about the underlying optimal solution is unavailable, and one has access only to ordinal feedback. This viewpoint is central to recent work on optimization using comparison oracles~\cite{gupta2026combinatorial,kane2017active} and on learning from preference feedback, including RLHF settings~\cite{kaufmann2025a}, where the algorithm receives pairwise comparisons rather than explicit numerical information. Comparison-based feedback is also a more realistic model than exact quantitative data in settings such as preference elicitation \cite{peters2021preference,wang2026preference}, dueling-bandit models \cite{bengs2021preference}, and ranking systems \cite{hofmann2013balancing,szorenyi2015online}.

Motivated by such considerations, we study the problem of recovering the objective of a packing linear program when the algorithm has access only to comparison information about optimal solutions under varying constraint matrices. Specifically, we consider a model in which the algorithm may query two constraint matrices $\mathbf{P}, \mathbf{Q} \in [0,1]^{m\times n}$. For an a priori unknown but fixed objective vector $\mathbf{v} \in \mathbb{R}^n_{>0}$ and known $\mathbf{b} \in \mathbb{R}^m_{>0}$, let $\mathbf{x}^*(\mathbf{P})$ and $\mathbf{x}^*(\mathbf{Q})$ denote optimal solutions to the packing programs
\[
\max_{\mathbf x\in[0,1]^n} \ \mathbf v^\top \mathbf x \quad \text{s.t.} \quad \mathbf{P} \mathbf{x} \le \mathbf{b} 
\qquad\text{and}\qquad
\max_{\mathbf x\in[0,1]^n} \ \mathbf v^\top \mathbf x \quad \text{s.t.} \quad \mathbf{Q} \mathbf{x} \le \mathbf{b},
\]
respectively. The oracle then reports which of $\mathbf{x}^*(\mathbf{P})$ and $\mathbf{x}^*(\mathbf{Q})$ has the larger sum of coordinates. Our results apply more generally when the oracle reports whether $\mathbf{w}^\top \mathbf{x}^*(\mathbf{P}) \geq \mathbf{w}^\top \mathbf{x}^*(\mathbf{Q})$, for a fixed but unknown comparison weight vector $\mathbf{w} \in \mathbb{R}_{> 0}^n$. The special case $\mathbf{w} = \mathbf{v}$ case corresponds to an oracle that reports which of the corresponding optimal solutions attains the larger objective value. The comparison-oracle model thus strengthens revealed-preference and inverse-optimization frameworks by replacing direct observations of optimal solutions with comparison queries between the optimal solutions induced by two different constraints. 

We focus in particular on the important special case of the fractional knapsack problem, in which the packing linear program has a single budget constraint specified by item prices (sizes), and the---a priori unknown---linear objective is determined by the item values. Formally, for a queried price vector $\mathbf{p} \in [0,1]^n$ and a known capacity $B \in \mathbb{R}_{>0}$, the underlying problem is $\max_{\mathbf x\in[0,1]^n} \ \mathbf v^\top \mathbf x \ \ \text{ s.t. } \  \mathbf p^\top \mathbf x\le B.$ 

This formulation captures a monopoly-pricing setting in which a seller seeks to infer an additive buyer's unknown valuations $\mathbf{v} \in \mathbb{R}^n_{>0}$ for divisible items from comparison information alone. The algorithm (seller) can query an oracle with two price vectors $\mathbf{p}, \mathbf{q} \in [0,1]^n$, and the oracle reports which of the corresponding optimal solutions, $\mathbf{x}^*(\mathbf{p})$ or $\mathbf{x}^*(\mathbf{q})$, has larger total consumption (equivalently, larger total packing of the items). As noted above, our results also apply to oracles that report which of the corresponding optimal solutions attains the larger objective value. Such oracles provide a meaningful abstraction of discrete-choice surveys \cite{louviere2000stated,train2009discrete}, where buyers repeatedly choose between alternatives offered at different prices.

\medskip
 
\noindent {\bf Our Results.} For fractional knapsack, we develop a polynomial-time algorithm that recovers the item values up to scale using $O(n \log(1/\delta) + B^2)$ comparison queries, where $n$ is the number of items, $B$ is the known capacity of the knapsack, and $\delta$ is the grid resolution of the values (\Cref{theorem:query-upper-bound-KwC}). We work with scaled versions of the knapsack and packing problems, in which the value vector $\mathbf v \in [0,1]^n$ and the queried price vectors satisfy $\mathbf p \in [0,1]^n$. Under such constrained price vectors, an essentially necessary condition for recovery is $B<n$ (see \Cref{remark:bounded-assumption} in \Cref{section:notation}), and hence the stated query upper bound is at most quadratic in $n$.

We complement this algorithmic result by establishing that recovering the values in the fractional knapsack setting necessarily requires $\Omega(n \log(1/\delta))$ comparison queries (\Cref{query-lower-bound}). Since $B<n$ and $n<1/\delta$ (see \Cref{footnote:n-less-than-1/delta}), the upper and lower query bounds have a multiplicative gap of at most $O(n/\log n)$. Furthermore, the bounds are essentially tight whenever $B=o\left(\sqrt{n\log(1/\delta)}\right)$.

A key insight of our work is that, in the comparison-oracle model, fractional knapsack is essentially as general as packing linear programs. In particular, our algorithm for fractional knapsack can be used directly to solve the packing setting by treating a row of the constraint matrix $\mathbf{P}\in[0,1]^{m\times n}$ as the price vector $\mathbf{p}\in[0,1]^n$ and setting the remaining rows to zero (\Cref{thm:plco-upper-bound}). Moreover, the $\Omega(n\log(1/\delta))$ query lower bound continues to hold for packing linear programs as well (\Cref{thm:plco-lower-bound}). Consequently, the query upper bound obtained via this knapsack reduction is essentially the best possible for packing linear programs, up to an $O(n/\log n)$-factor gap.

Finally, we extend our knapsack algorithm to the profit-maximization setting, thereby obtaining a comparison-oracle analogue of the revealed-preference result of \cite{amin2015onlineProfitMax}. In the profit-maximization problem, a seller posts prices $\mathbf p \in [0,1]^n$ for $n$ divisible items to a buyer with unknown additive valuations $\mathbf v \in [0,1]^n$ and seeks to maximize the resulting profit. Given a price vector $\mathbf p$, the buyer chooses a value-maximizing bundle $\mathbf x^*(\mathbf p)$ subject to her budget---equivalently, solves a fractional knapsack problem---and the seller’s profit is then $\mathbf p^\top \mathbf x^*(\mathbf p) - \mathbf c^\top \mathbf x^*(\mathbf p)$, where $\mathbf c \in [0,1]^n$ is a known cost vector. We show that profit maximization can be solved in polynomial-time using $O(n \log(1/\delta) + B^2)$ comparison queries (\Cref{theorem:profit-maximization}).

\medskip

\noindent {\bf Our Techniques.} Our algorithm for fractional knapsack proceeds in three phases. Our algorithm for fractional knapsack proceeds in three phases. We begin by examining the optimal packing under the uniform price vector $\mathbf 1_n$. Utilizing the greedy packing structure of the optimal solution $\mathbf x^*(\mathbf 1_n)$, and querying one-coordinate perturbations of $\mathbf 1_n$, we partition the items into three sets: the integrally packed items $S$ under price vector $\mathbf 1_n$ along with the fractionally packed item $f$  and the unpacked items $U$ (\Cref{subsection:identify-tripartition}). We then recover the scaled values of items in $U\cup \{f\}$ using carefully chosen one-coordinate perturbations and utilizing how such perturbations affect the greedy packing order in optimal solutions (\Cref{subsection:Uvalues}). 

The technically most involved part is recovering the values of the items in $S$ (\Cref{subsection:Svalues}). Our key novel construct here is a family of \emph{triad prices}---parameterized by a scalar $\alpha$---that are designed so that the oracle’s response exhibits a sharp transition when $\alpha$ crosses the ratio $\widehat v / v_{\overline s}$, where $\widehat v$ is the value of the least-valued recovered item and $v_{\overline s}$ is the value of the least-valued unrecovered item. By combining a binary search over $\alpha$ with monotonicity properties of triad prices, we identify this threshold and thereby recover the scaled value of $\overline s$. Repeating this procedure recovers the scaled values of all items in $S$, and hence of the entire instance. Altogether, this yields a polynomial-time algorithm with query complexity $O(n\log(1/\delta)+B^2)$.

Our results crucially rely on understanding how the optimal fractional knapsack solution changes under carefully selected perturbations of the price vectors. As mentioned, a key technical contribution is the construction and analysis of triad prices, which allow us to convert comparison-oracle responses into a threshold test for unknown value ratios. Their design is intricate: the prices must simultaneously preserve enough structure of the (greedy) optimal solutions to isolate the least-valued unrecovered item ($\overline s$), while also ensuring that the induced threshold behavior is monotone and sharp enough to support an efficient binary search. 

After localizing the scaled values (i.e., value ratios) to sufficiently small intervals, we determine them exactly via known continued-fraction-based method for computing the simplest intervening rational; see \cite{murakami2010continued} and Appendix~\ref{appendix:Stern-Brocot}.

We obtain the query complexity lower bound for knapsack and packing via a novel reduction from the \textsc{Index} problem in the one-way communication model (\Cref{sec:lower_bound}).

\medskip
\noindent
{\bf Additional Related Work.} The problems studied in this work are related to the literature on inverse optimization; see, e.g.,~\cite{chan2025inverse} for a survey. In a standard inverse-optimization formulation for linear programs, Ahuja and Orlin~\cite{ahuja2001inverse} study the problem of minimally perturbing a given objective vector so that a prescribed feasible solution becomes optimal, and show that the resulting inverse problem can itself be formulated as a linear program when perturbations are measured in the \(\ell_1\) or \(\ell_\infty\) norm. Our comparison-oracle model is more parsimonious: rather than observing a solution itself, the learner receives only ordinal information comparing optimal solutions. Consequently, the result of Ahuja and Orlin~\cite{ahuja2001inverse} does not, on its own, yield a recovery algorithm in the comparison-query setting. Similarly, in the setting of fractional knapsack and profit maximization, the current model strengthens the revealed-preference framework of \cite{amin2015onlineProfitMax}: rather than observing the buyer's demand bundle at posted prices, the seller receives only ordinal information comparing optimal bundles. We also note that, unlike prior works~\cite{zadimoghaddam2012efficiently,beigman2006learning,balcan2014learning}, which study revealed preferences in statistical learning settings, our focus is on the query complexity of algorithms that can issue adaptive queries.

\section{Notation and Preliminaries}
\label{section:notation}
This work develops query efficient algorithms for fractional knapsack, packing linear programs, and profit maximization using comparison oracles. We will write $\mathbf{e}_j$ for the $j$-th standard basis vector in $\mathbb{R}^n$ and $\mathbf{1}_n$ for the all-ones vector.

In the fractional knapsack setup, the objective is to infer the a priori unknown values $v_1,\ldots, v_n \in (0,1]$ of $n$ items that can be fractionally packed in a knapsack of known capacity $B \in \RR_{>0}$.\footnote{Note that $B$ itself can be obtained through a single revealed-preference query, which returns the entire optimal fractional knapsack solution for given prices.} The algorithm has access only to comparison information about the optimal fractional solutions \(\mathbf x^*(\mathbf p)\) under different price vectors $\mathbf{p}$.

Specifically, for any price vector $\mathbf p=(p_1, \ldots, p_n) \in [0,1]^n$, where $p_i \in [0,1]$ denotes the price of item $i$, write \(\mathbf x^*(\mathbf p)\) to denote the
optimal solution returned by the standard greedy algorithm for the fractional knapsack problem: 
$%
    \max_{\mathbf x\in[0,1]^n} \ \mathbf v^\top \mathbf x
    \ \text{ s.t. } 
    \mathbf p^\top \mathbf x\le B $. %
That is, given $\mathbf p$, the optimal solution $\mathbf x^*(\mathbf p) \in [0,1]^n$ is obtained by ordering the items in decreasing order of their value-to-price ratios $v_i/p_i$ and selecting them greedily until the budget constraint becomes tight, possibly including the final item fractionally; ties in the value-to-price ratios are broken according to an unknown fixed ordering of the items.  An item $i$ is said to be integrally packed if $x^*_i(\mathbf p) = 1$, unpacked if $x^*_i(\mathbf p) = 0$, and fractionally packed if $0 < x^*_i(\mathbf p) <1$.

The algorithm has access to a comparison oracle that, when queried with any two price vectors $\mathbf p,\mathbf q\in[0,1]^n$, returns the relative ordering of the weighted total packings $\mathbf w^\top \mathbf x^*(\mathbf p)$ and $\mathbf w^\top \mathbf x^*(\mathbf q)$; that is, returns $\mathrm{sign}(\mathbf w^\top \mathbf x^*(\mathbf p) - \mathbf w^\top \mathbf x^*(\mathbf q))$, where $\mathbf w \in \mathbb{R}^n_{>0}$ is a fixed but unknown weight vector with positive components. Note that when $\mathbf w$ is the all-ones vector, i.e., $\mathbf{w}=\mathbf{1}_n$, then oracle reports which of the two optimal solutions, $\mathbf x^*(\mathbf p)$ and $\mathbf x^*(\mathbf q)$, has the larger sum of coordinates. Also, the special case $\mathbf{w} = \mathbf{v}$ case corresponds to an oracle that reports which of the corresponding optimal solutions attains the larger objective value. We write $\mathcal{O}_{\mathbf w}(\mathbf p,\mathbf q)$ for the output of the comparison oracle on price vectors $\mathbf p$ and $\mathbf q$, namely
$\mathcal{O}_{\mathbf w}(\mathbf p,\mathbf q) = \operatorname{sign} \left(\mathbf w^\top \mathbf x^*(\mathbf p) \ - \ \mathbf w^\top \mathbf x^*(\mathbf q) \right)$ %
Note that although the oracle's responses depend on $\mathbf w$, the algorithm itself does have the knowledge of $\mathbf{w}$. %

In this work, we address instances in which the item values are distinct, lie in the range $(0,1]$, and are integer multiples of a known parameter $\delta \in (0,1)$;\footnote{In fact, our results hold even if we have a bound on $\delta$ and its exact value is not known.} that is, $v_i \neq v_j$ for all $i \neq j$, and $v_i \in \delta \mathbb{Z}_{>0}$ for every $i$.\footnote{The distinct value condition necessitates that $\delta < 1/n.$\label{footnote:n-less-than-1/delta}} We now formally define the \emph{Knapsack using Comparison Oracles} problem.
\begin{definition}[Knapsack using Comparison Oracle (KCO)]
\label{definition:knapsack-with-comparison}
For an unknown value vector $\mathbf{v} \in (0,1]^n$ with distinct coordinates $v_i \in \delta \mathbb{Z}_{>0}$, a known knapsack capacity $B > 0$ satisfying $B \leq (n - 1) \left(\frac{\min_{i \in [n]} v_i}{\max_{i \in [n]} v_i} - \frac{\delta^2}{n^2}\right)$, and given query access to comparison oracle $\mathcal{O}_\mathbf{w}: [0,1]^n \times [0,1]^n \to \{-\mathsf 1, \mathsf 0, \mathsf 1\}$, the \emph{Knapsack using Comparison Oracle} ({KCO}) problem is to recover a vector $\widehat{\mathbf{v}}$ such that $\widehat{\mathbf{v}} = \mathbf{v}/\lambda$ for some scalar $\lambda > 0$.
\end{definition}
\begin{remark}
The recovery of \(\mathbf v\) is possible only up to a positive scaling factor. Indeed, for any \(\lambda>0\), replacing \(\mathbf v\) by \(\lambda \mathbf v\) does not alter the optimal solution for any \(\mathbf p\), since
$
    \argmax_{\mathbf x\in[0,1]^n:\mathbf p^\top \mathbf x\le B} \allowbreak
    \mathbf v^\top \mathbf x
    =
    \argmax_{\mathbf x\in[0,1]^n:\mathbf p^\top \mathbf x\le B}
    (\lambda\mathbf v)^\top \mathbf x.
$
Consequently, the comparison oracle produces identical responses under \(\mathbf v\) and \(\lambda\mathbf v\). Therefore, no algorithm with access only to comparison queries can distinguish between these two value vectors, and recovery is possible only up to positive scaling.
\end{remark}

We also note that the queried price vectors are constrained to lie in $[0,1]^n$. This convention was likewise adopted in \cite{amin2015onlineProfitMax} and excludes unrealistic scenarios in which the price of an item can be made arbitrarily large.

\begin{remark}
\label{remark:bounded-assumption}
In defining the KCO problem, we focus on instances satisfying two conditions. The first is that all item values are distinct. This is a non-degeneracy assumption that appears in prior work on knapsack and profit maximization~\cite{amin2015onlineProfitMax}. The second condition is that the knapsack capacity is essentially bounded by $n \frac{\min_i v_i}{\max_j v_j}$. In Appendix~\ref{appendix:bounded-capacity}, we show that this assumption is necessary in the following sense: if the bound is violated even by a factor of $(1+o(1))$, then there exist fractional knapsack instances for which no algorithm can recover the value vector $\mathbf v$ up to scaling using only the comparison oracle. This impossibility holds both when $\min_i v_i/\max_j v_j$ is large (in particular, $1-o(1)$) and when it is small (specifically, $o(1)$).
\end{remark}

\section{Fractional Knapsack using Comparison Oracle}
\label{section:knapsack-with-comparison}

This section develops a polynomial-time algorithm for the Knapsack using Comparison Oracle problem and establishes an upper bound on its query complexity. The main result of this section is the following theorem.

\begin{restatable}{theorem}{TheoremMainKCO} 
\label{theorem:query-upper-bound-KwC}
The Knapsack using Comparison Oracle problem admits a polynomial-time algorithm with query complexity $O(n \log(1/\delta) + B^2)$.  
\end{restatable}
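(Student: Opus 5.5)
The plan follows the three-phase structure outlined in the introduction, with a query budget for each phase.

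\textbf{Phase 1: the base ordering at $\mathbf 1_n$.} Under uniform prices the greedy solution packs items in decreasing order of value. The first $\lfloor B\rfloor$ items are integral, item $f$ is fractional with $x_f=B-\lfloor B\rfloor$, and the rest form $U$. The capacity assumption $B\le (n-1)(\min v/\max v-\delta^2/n^2)$ ensures $U\neq\emptyset$.

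To classify an item $j$, query $\mathcal O_{\mathbf w}(\mathbf 1_n-\epsilon\mathbf e_j,\mathbf 1_n)$ for a small $\epsilon$:
\begin{itemize}
\item if $j\in U$, the solution does not change and the oracle returns $\mathsf 0$;
\item if $j\in S$, the freed budget $\epsilon$ raises $x_f$, so the oracle returns $\mathsf 1$.
\end{itemize}
To make the fractional item identifiable, I would use perturbations in which $j$ becomes cheap enough to jump ahead of every packed item; the change in total packing then depends on whether $j$ was fractional. This costs $O(n)$ queries, possibly plus $O(B)$ extra queries to disambiguate $f$.

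\textbf{Phase 2: values on $U\cup\{f\}$.} For $u\in U$, set $p_u=\alpha$ and keep all other prices at $1$. Item $u$ enters the packing exactly when $v_u/\alpha$ exceeds $v_f$, the marginal ratio. Entering displaces part of $f$, which changes $\mathbf w^\top\mathbf x$ detectably relative to $\mathbf 1_n$. The transition therefore occurs at $\alpha=v_u/v_f$, and I would binary search for it to precision $\delta^2$. Because $v_u/v_f$ is a ratio of integers of size at most $1/\delta$, the continued-fraction simplest-rational procedure (\Cref{appendix:Stern-Brocot}) recovers it exactly. Each item takes $O(\log(1/\delta))$ queries, so this phase uses $O(n\log(1/\delta))$ queries in total.

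\textbf{Phase 3: values on $S$, the main obstacle.} Lowering the price of an item in $S$ never changes which items are packed. Raising prices is capped at $1$, so an item in $S$ cannot simply be pushed out. I would use the triad prices: lower the prices of the already-recovered low items (scaled by a parameter $\alpha$ relative to $\widehat v$) so that they compete with the least-valued unrecovered item $\overline s$.

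Since lower prices are allowed, I can make the budget bind in a way where, as $\alpha$ crosses $\widehat v/v_{\overline s}$, the greedy order swaps $\overline s$ with a recovered item. This swap produces a strict change in the packed amount, which the oracle detects by comparison with a reference price vector. Two things must then be verified:
\begin{itemize}
\item \emph{Monotonicity:} the oracle's response is monotone in $\alpha$, which makes binary search valid. I would prove this by a case analysis on the greedy order.
\item \emph{Identification of $\overline s$:} locating the least unrecovered item without knowing it. I would handle this by choosing the prices so that only the minimum unrecovered ratio matters, since the other items in $S$ stay integral.
\end{itemize}

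Each step of Phase 3 costs $O(\log(1/\delta))$ queries for the binary search. Certifying the boundary configuration adds an overhead of $O(B)$ queries per item, and since $|S|\le B$ this overhead sums to $O(B^2)$. Combined with the $O(|S|\log(1/\delta))$ cost of the binary searches, the phase stays within the bound.

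\textbf{Total.} Summing the three phases gives $O(n\log(1/\delta)+B^2)$ queries. Every step runs in polynomial time.
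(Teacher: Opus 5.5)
Your three-phase skeleton and per-phase query budgets match the paper, and your high-level idea for Phase 3 is the right one: lower the prices of recovered items, scaled by $\alpha$, so that they compete with $\overline s$. But Phase 3 as written is a statement of intent, and the detection mechanism you describe would fail.

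Because $\mathbf w$ is unknown, the oracle is informative only through ``$\mathsf 0$ versus nonzero''. A change in $\mathbf x^*$ does not force a nonzero answer. Swapping $\overline s$ with a recovered item moves packing mass from one item to another, so $\mathbf w^\top\Delta\mathbf x$ vanishes for suitable $\mathbf w$. Comparing $\mathbf p(\alpha)$ against a fixed reference is worse, since many prices move with $\alpha$ and $\Delta\mathbf x$ has no sign structure.

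The missing ideas in the paper are these:
(a) Test the packing status of a \emph{known} item, the triad item $k_\alpha$, rather than the unknown $\overline s$.
(b) Test it by lowering only $p_{k_\alpha}$ while the greedy order is unchanged. Then $\mathbf x^*$ either stays identical or increases componentwise with some strict increase, so the answer is nonzero for every positive $\mathbf w$ exactly when $k_\alpha$ is packed.
(c) Use three nested perturbations $\mathbf q^1,\mathbf q^2,\mathbf q^3$, so that at least one consecutive pair preserves the order; only $k_\alpha$ and a single item $\ell^*$ can swap.

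The threshold itself comes from the prefix $R_1$, priced at $(\alpha-2\varepsilon)v_r/\widehat v$, whose total price first reaches $B-n+|R|$. When $\overline s$ precedes $k_\alpha$, $R_1$ together with the unit-priced unrecovered items exhausts the budget before $k_\alpha$ is reached. When $k_\alpha$ precedes $\overline s$, the items ahead of it cost less than $B$. So $k_\alpha$ is unpacked iff $\alpha$ exceeds the grid point $\alpha^*_{\overline s}$ bracketing $\widehat v/v_{\overline s}$.

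Your phrase ``only the minimum unrecovered ratio matters'' yields the ratio $\widehat v/v_{\overline s}$ but not which item carries it. The paper prices all of $R$ at level $\alpha^*_{\overline s}-2\varepsilon$, so that $\overline s$ is the unique unpacked unrecovered item. It then tests each unresolved item with a one-coordinate decrease. This identification step, not ``certifying the boundary configuration'', is the $O(B)$-per-iteration cost behind the $B^2$ term.

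Both the existence of $R_1$ for all $\alpha\ge\alpha^*_{\overline s}-\beta$ and the fact that $\overline s$ ends up unpacked rely on $B\le(n-1)(v_{\min}/v_{\max}-\delta^2/n^2)$. You use this hypothesis only to get $U\neq\emptyset$. Appendix~\ref{appendix:bounded-capacity} shows it is essentially necessary, which already signals that your Phase 3 sketch cannot be complete.

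The same unknown-$\mathbf w$ issue undermines your Phase 2 claim that $u$ entering ``changes $\mathbf w^\top\mathbf x$ detectably''. With $p_u=\alpha$ below the crossing, while $u$ still ranks below all of $S$, the quantity $\mathbf w^\top\bigl(\mathbf x^*(\mathbf 1_n)-\mathbf x^*(\mathbf 1_n-(1-\alpha)\mathbf e_u)\bigr)$ equals $w_f\alpha-w_u$ or $(B-\lfloor B\rfloor)(w_f-w_u/\alpha)$. Both vanish at $\alpha=w_u/w_f$, which can lie inside the region that should read nonzero and derail a deterministic binary search. The paper randomizes the grid step $\gamma$ so that this happens with probability zero.

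In Phase 1, your ``jump ahead'' test does not separate $f$ from $S$. With $p_j=\eta\le B-\lfloor B\rfloor$, the solutions for $j\in S$ and $j=f$ coincide: all of $S\cup\{f\}$ is packed and the best item of $U$ sits at level $B-\lfloor B\rfloor-\eta$. For larger $\eta$ they differ by $(w_f-w_{\widehat s})(\eta-B+\lfloor B\rfloor)$, which is $\mathsf 0$ for $\mathbf w=\mathbf 1_n$. What works is comparing $\mathbf 1_n-\theta\mathbf e_i$ with $\mathbf 1_n-\theta\mathbf e_j$ for small $\theta$. Then $x_f$ rises by $\theta$ if the perturbed item lies in $S$, but only by $\theta(B-\lfloor B\rfloor)/(1-\theta)$ if it is $f$.
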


Throughout this section, we assume, without loss of generality, that the knapsack capacity $B$ is non-integral. This assumption is used solely to ensure that, under the price vector $\mathbf{1}_n$, the optimal solution $\mathbf{x}^*(\mathbf{1}_n)$ necessarily contains a fractionally packed item. In particular, since $B \notin \mathbb{Z}$, the greedy procedure that computes $\mathbf{x}^*(\mathbf{1}_n)$ yields an item $f$ satisfying $0 < x_f^*(\mathbf{1}_n) < 1$. The assumption $B \notin \mathbb{Z}$ is indeed without loss of generality: if $B \in \mathbb{Z}$, we can instead work with the price vector $\rho \mathbf{1}_n$, where $\rho \in (0,1]$ is chosen arbitrarily close to $1$. In this case, $\mathbf{x}^*(\rho \mathbf{1}_n)$ contains a fractionally packed item; see \Cref{lemma:fractional-item-exists-at-uniform-size} in Appendix \ref{appendix:BZwlog}. Moreover, all arguments presented in this section for the price vector $\mathbf{1}_n$ continue to hold for $\rho \mathbf{1}_n$, with $\rho$ sufficiently close to $1$, hence it suffices to address the case $B \notin \mathbb{Z}$.
Also, our algorithm uses randomness solely for tie-breaking, its guarantees hold almost surely.

Our algorithm for proving \Cref{theorem:query-upper-bound-KwC} consists of three phases. 
\begin{itemize}
    \item The first phase (\Cref{alg:get-partition}) identifies the support of the optimal solution under the price vector $\mathbf{p}=\mathbf{1}_n$. Specifically, for the optimal solution $\mathbf{x}^*(\mathbf{1}_n)$, we determine---in a query-efficient manner---the set $S$ of integrally packed items, the unique fractionally packed item $f$, and the set $U$ of unpacked items. That is, $S= \left\{i \in [n] \mid \mathbf x_i^*(\mathbf{1}_n)=1 \right\}$, $U= \{j \in [n] \mid \mathbf x_j^*(\mathbf{1}_n)=0\}$, while $f \in [n]$ is the unique item for which $0<x_f^*(\mathbf{1}_n)<1$.\footnote{As noted above, the condition $B\notin \mathbb{Z}$ ensures the existence of a fractionally packed item $f$.}
     
    This phase builds on the insight that a small decrease in the price of an item $u \in U$ (i.e., an unpacked item) does not change the optimal solution (see \Cref{claim:item-in-U-iff-oracle-returns-equal}), whereas small decrements in prices of a packed or the fractional item changes the optimal solution in a way that is detectable via comparison queries (\Cref{claim:fractional-item-has-higher-oracle-value-than-items-in-S}). This phase is detailed in \Cref{alg:get-partition}, and \Cref{lemma:identification-of-sets-S-U-f} establishes that it correctly identifies the sets $S$, $U$, and the fractional item $f$.

    \item With the tripartition $S$, $U$, and $\{f\}$ in hand, the second phase (\Cref{alg:get-value-ratio-for-items-in-U}) recovers the values $v_u$ for all $u \in U$ up to a common scaling factor. For a given item $u \in U$, we gradually decrease its price $p_u$ while keeping the prices of all other items fixed. This increases its value-to-price ratio $v_u/p_u$. The first point at which this ratio exceeds that of the fractional item $f$ can be detected via comparison queries, as the optimal bundle changes precisely at that point. By decreasing $p_u$ in sufficiently small increments and utilizing the fact that the values lie on the $\delta$-grid, the ratio $v_u/v_f$ can be uniquely determined from the observed crossing. The details of this phase are given in \Cref{alg:get-value-ratio-for-items-in-U}, and its guarantee is provided by \Cref{lemma:estimating-profit-of-items-in-U}.

     \item The technically more involved third phase (\Cref{alg:get-value-ratio-of-items-in-S}) recovers the remaining value ratios for the integrally packed items $s \in S$. At this stage, the values of the items in $U \cup \{f\}$ are known up to a common scaling factor. The algorithm then proceeds iteratively, recovering the scaled values of the items in $S$ one at a time. In each iteration, the algorithm leverages the already recovered values of the items in $U \cup \{f\}$, together with the values obtained in previous iterations, to construct a carefully chosen {\it triad price vector} (\Cref{Definition:TriadPrices}) that identifies the least-valued item among the currently unresolved ones in $S$. Repeating this procedure eventually recovers the ratios $v_i / \min_{j \in [n]} v_j$ for all $i \in S$. The correctness of this phase is established in \Cref{lemma:performance-guarantee-of-alg-for-computing-value-ratio-S}.
\end{itemize}

Combining the three phases yields a recovery of the entire value vector up to a common scaling factor. As shown in the following sections, the first, second, and third phases use $O(n)$, $O(n\log(1/\delta))$, and $O(\lfloor B\rfloor\log(1/\delta) + \lfloor B\rfloor^2)$ queries, respectively. Summing these bounds establishes the claimed query complexity of the overall algorithm.

\subsection{Identifying the Optimal Solution Partition}
\label{subsection:identify-tripartition}
This section develops an algorithm (\Cref{alg:get-partition}) that, for the optimal solution $\mathbf{x}^*(\mathbf{1}_n)$ corresponding to the all-ones price vector, identifies---using $O(n)$ comparison queries---the set $S = \{i \in [n] \mid x_i^*(\mathbf{1}_n)=1\}$ of integrally packed items, the set $U = \{i \in [n] \mid x_i^*(\mathbf{1}_n)=0\}$ of unpacked items, and the unique fractionally packed item $f$ satisfying $0 < x_f^*(\mathbf{1}_n) < 1$.

A key observation is that, for a suitably small $\theta>0$, decreasing the price of a single item $i$ from $1$ to $1-\theta$ affects the optimal solution in a predictable manner determined by whether the item lies in $S$, $U$, or is the unique fractional item $f$. \Cref{lemma:opt-changes} quantifies this change in the optimal solution for $\theta \coloneqq \frac{1}{2}\min\{\delta,\ 1-B+\lfloor B\rfloor\}$. 

Using \Cref{lemma:opt-changes}, we obtain Corollaries~\ref{claim:item-in-U-iff-oracle-returns-equal} and~\ref{claim:fractional-item-has-higher-oracle-value-than-items-in-S}, which characterize the oracle responses induced by single-item price perturbations. %
\Cref{claim:item-in-U-iff-oracle-returns-equal} asserts that an item \(u\) belongs to \(U\) if and only if a $\theta$ decrease in its price leaves the oracle response unchanged. Note that \(\mathcal{O}_{\mathbf w}(\mathbf{1}_n,\mathbf{1}_n-\theta \mathbf e_u)\) returns the sign of \(\mathbf w^\top \mathbf x^*(\mathbf{1}_n)-\mathbf w^\top \mathbf x^*(\mathbf{1}_n-\theta \mathbf e_u)\). Therefore, whenever the optimal solution remains unchanged, the oracle's response is \(\mathsf{0}\). We use this characterization to find $U$ in Line \ref{line:find-U} of \Cref{alg:get-partition}. Furthermore, \Cref{claim:fractional-item-has-higher-oracle-value-than-items-in-S} enables us to identify the fractional item among the items in $[n] \setminus U = S \cup \{f\}$ by querying the oracle on the price vectors ${\bf 1}_n-\theta{\bf e}_i$ and ${\bf 1}_n-\theta{\bf e}_j$ for any distinct $i,j \in S \cup \{f\}$. The while-loop in the algorithm utilizes this characterization to recover $f$ and the set $S$. 

\begin{lemma}
\label{lemma:opt-changes}
Let $\theta=\frac{1}{2}\min\{\delta,\;1-B+\lfloor B\rfloor\}$. Then the optimal solution under the perturbed price vector $\mathbf{1}_n-\theta\mathbf{e}_i$ satisfies:
\begin{align*}
\mathbf{x}^*(\mathbf{1}_n-\theta\mathbf{e}_i)
=
\begin{cases}
\mathbf{x}^*(\mathbf{1}_n) & \ \ i\in U, \\ 
\mathbf{x}^*(\mathbf{1}_n) + \theta \ \mathbf{e}_f & \ \ i\in S, \\ 
\mathbf{x}^*(\mathbf{1}_n)+\widetilde{\theta} \ \mathbf{e}_f & \ \ i=f,
\end{cases}    
\end{align*}
where $\widetilde{\theta} = \frac{\theta(B-\lfloor B\rfloor)}{1-\theta}$.
\end{lemma}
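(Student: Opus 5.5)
Under $\mathbf 1_n$ the greedy order is the order of decreasing values, so $S$ consists of the $\lfloor B\rfloor$ highest-valued items, $f$ is the next one with $x^*_f(\mathbf 1_n)=B-\lfloor B\rfloor$, and $U$ is the rest. The plan is to recompute the greedy solution for $\mathbf p=\mathbf 1_n-\theta\mathbf e_i$ in each of the three cases. In each case we first check that the \emph{ordering} of value-to-price ratios relevant to the greedy procedure is unchanged, and then recompute how the budget is consumed. The only ratio that changes is $v_i/(1-\theta)$, which exceeds $v_i$.

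\textbf{Case $i\in U$.} We claim $i$ still ranks after $f$, i.e.\ $v_i/(1-\theta)<v_f$. Since values lie on the $\delta$-grid and are distinct, $v_f\ge v_i+\delta$. Hence it suffices to show $v_i/(1-\theta)<v_i+\delta$, i.e.\ $v_i\theta<\delta(1-\theta)$. This follows from $v_i\le 1$ and $\theta\le\delta/2$, which give $\theta\le \delta/2<\delta(1-\theta)$ whenever $\theta<1/2$; and $\theta<1/2$ holds since $\delta<1$. Thus items of $S\cup\{f\}$ still precede $i$, all their prices are still $1$, and the budget runs out at $f$ exactly as before. So the solution is unchanged.

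\textbf{Case $i\in S$.} Item $i$ can only move earlier in the order, so it remains among the integrally packed items, and $f$ is still the first item after $S$. To see this, note $f$'s ratio is unchanged, and every item of $U$ still has ratio below $v_f$. The items of $S$ now consume $\lfloor B\rfloor-\theta$ of the budget, leaving $B-\lfloor B\rfloor+\theta$ for $f$, which has price $1$. Because $\theta\le\frac12(1-B+\lfloor B\rfloor)$, this residual is less than $1$. So $f$ stays fractional with $x_f=B-\lfloor B\rfloor+\theta$, the items of $U$ stay at $0$, and the claimed increment is $\theta\mathbf e_f$.

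\textbf{Case $i=f$.} Now $f$ has price $1-\theta$ and ratio $v_f/(1-\theta)$. Its ratio may now exceed those of some items in $S$, so the order changes, and this is the main subtlety. We handle it by showing that $f$ still cannot be integrally packed, so it remains the unique fractional item, and the set of items with positive mass is the same as before. Take any tie-breaking among the reordered items. Every item of $S$ has price $1$ and ratio at least $v_s>v_f$; the grid argument used for $U$ shows $v_f/(1-\theta)<v_s$, so each $s\in S$ still precedes $f$. The ratio of $f$ still exceeds that of every item in $U$. Hence the items of $S$ are packed first and consume $\lfloor B\rfloor$, leaving $B-\lfloor B\rfloor$ for $f$. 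Then $x_f=(B-\lfloor B\rfloor)/(1-\theta)$, which is less than $1$ because $B-\lfloor B\rfloor+\theta<1$ by the choice of $\theta$. The increment is therefore
\[
\frac{B-\lfloor B\rfloor}{1-\theta}-(B-\lfloor B\rfloor)=\frac{\theta(B-\lfloor B\rfloor)}{1-\theta}=\widetilde\theta .
\]

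The key step throughout is the grid inequality $v/(1-\theta)<v+\delta$ for $v\le1$. It guarantees that a $\theta$-perturbation never crosses two grid-separated values, and the bound $\theta\le\frac12(1-B+\lfloor B\rfloor)$ keeps $f$ strictly fractional.
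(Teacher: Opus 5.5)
Your proof is correct and follows essentially the same route as the paper: the same three-case analysis, the same grid-separation step (you write it additively as $v/(1-\theta)<v+\delta$, the paper multiplicatively as $v_u\le v_f(1-\delta)$), and the same residual-budget accounting for $f$. One cosmetic point: in the case $i=f$ your remark that the order ``may change'' is unwarranted, because your own bound $v_f/(1-\theta)<v_f+\delta\le v_s$ shows the greedy ordering is exactly preserved, which is how the paper handles that case.
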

\begin{proof}
Under the price vector $\mathbf{1}_n$, all items have identical prices. Consequently, the greedy construction of the optimal solution $\mathbf{x}^*(\mathbf{1}_n)$ implies that $v_s > v_f > v_u$ 
for every $s \in S$ and $u \in U$.\footnote{Recall that the item values are distinct.} We prove the lemma through a case-analysis. We have three cases depending on whether the item is in $U$, $S$, or is the fractional item $f$.

\noindent
\textbf{Case 1.} Consider any item $u \in U$ and note that for such an item we have $v_f > v_u$. We next show that, even after reducing the price of $u$ from $1$ to $1-\theta$, the fractional item $f$ remains ahead of $u$ in the value-to-price-ratio
ordering used by the greedy algorithm. Indeed, since $v_u < v_f$ and both values lie on the $\delta$-grid, we have $v_u \leq v_f-\delta \leq v_f(1-\delta)$, where the second inequality follows from $v_f \leq 1$. Therefore, $v_u/(1-\delta) \leq v_f$. Since $\theta < \delta$, it follows that $v_u/(1-\theta) < v_u/(1-\delta) \leq v_f$. Thus, decreasing the price of $u$ from $1$ to $1-\theta$ does not alter the position of $u$ relative to the fractional item $f$ in the value-to-price-ratio ordering. Consequently, the greedy ordering induced by $\mathbf{1}_n-\theta\mathbf{e}_u$ coincides with that of $\mathbf{1}_n$, and hence $\mathbf{x}^*(\mathbf{1}_n-\theta\mathbf{e}_u) = \mathbf{x}^*(\mathbf{1}_n)$.

\noindent
\textbf{Case 2.} Consider any item $i \in S$. Under the price vector $\mathbf{1}_n-\theta\mathbf{e}_i$, the value-to-price ratio of item $i$ increases from $v_i$ to $v_i/(1-\theta)$, whereas the
ratios of all other items remain unchanged. Since $i$ already precedes $f$ in the greedy ordering under $\mathbf{1}_n$, after this price perturbation, item $i$ continues to appear before $f$ in the greedy ordering. 

Hence, all items in $S$ continue to be packed integrally. However, under the price vector $\mathbf{1}_n-\theta\mathbf{e}_i$, item $i$ has price $1-\theta$ rather than $1$. Consequently, after 
packing all items in $S$, the remaining capacity becomes 
\begin{align}
B-\sum_{s\in S} p_s = B-(|S|-\theta) = B-\lfloor B\rfloor+\theta \label{ineq:resd-cap}    
\end{align}
The final equality uses the identity $|S|=\lfloor B \rfloor$, which follows from the greedy construction of the optimal solution. Since $\theta < 1-(B-\lfloor B\rfloor) = 1- x_f^*(\mathbf 1_n)$, the residual capacity $B-\lfloor B\rfloor+\theta$ remains strictly smaller than $1$. Hence, the next item in the greedy ordering is again $f$, which is packed fractionally with $ x_f^*(\mathbf 1_n-\theta\mathbf e_i) = B-\lfloor B\rfloor+\theta$. That is, $ x_f^*(\mathbf 1_n-\theta\mathbf e_i) = x_f^*(\mathbf 1_n)+\theta$, while the packing levels of all other items are unchanged: $x_a^*(\mathbf 1_n-\theta\mathbf e_i) = x_a^*(\mathbf 1_n)$ for all $a \neq f$. In other words, $\mathbf x^*(\mathbf 1_n - \theta \mathbf e_i) = \mathbf x^*(\mathbf 1_n) + \theta \mathbf e_f$.

\noindent
\textbf{Case 3.} Finally, consider the item $i=f$. Under the price vector $\mathbf{1}_n-\theta\mathbf{e}_f$, the value-to-price ratio of $f$  becomes $v_f/(1-\theta)$. Since $v_s>v_f$ for every $s\in S$ and $\theta<\delta$, the same grid-separation argument as above implies that $v_f/(1-\theta)<v_s$ for every $s\in S$. Hence, the greedy ordering remains unchanged: all items in $S$ are packed integrally before $f$ is considered. After packing the items in $S$, the remaining capacity is $B-\lfloor B\rfloor$, while the price of $f$ is $1-\theta$. Therefore,
\[
x_f^*(\mathbf 1_n-\theta\mathbf e_f) = \frac{B-\lfloor B\rfloor}{1-\theta} = x_f^*(\mathbf 1_n) + \frac{B-\lfloor B\rfloor}{1-\theta} - (B-\lfloor B\rfloor) = x_f^*(\mathbf 1_n) + \widetilde{\theta}.
\]
Here, the penultimate equality uses the fact that $x_f^*(\mathbf 1_n)=B-\lfloor B\rfloor$, whereas the final equality follows from the definition $\widetilde{\theta} = \frac{\theta(B-\lfloor B\rfloor)}{1-\theta}$. Furthermore, since $\theta < 1- \left( B-\lfloor B\rfloor \right)$, we have $x_f^*(\mathbf 1_n-\theta\mathbf e_f) = \frac{B-\lfloor B\rfloor}{1-\theta} <1$, and hence $f$ remains fractionally packed, and all items in $U$ remains unpacked. Therefore, $x^*_a(\mathbf 1_n - \theta \mathbf e_a) = x^*_a(\mathbf 1_n)$ for all $a \neq f$. Combining these equations, we obtain $\mathbf x^*(\mathbf 1_n - \theta \mathbf e_i) = \mathbf x^*(\mathbf 1_n) + \widetilde{\theta} \mathbf e_f$. This establishes the case $i=f$ and completes the proof of the lemma. 
\end{proof}

\Cref{claim:item-in-U-iff-oracle-returns-equal,claim:fractional-item-has-higher-oracle-value-than-items-in-S} hold for the choice $\theta = \frac{1}{2}\min\{\delta, 1 - B + \lfloor B \rfloor\}$.

\begin{corollary}
\label{claim:item-in-U-iff-oracle-returns-equal}
An item $u \in  U$ if and only if $\mathcal{O}_{\bf w}({\bf 1}_n, {\bf 1}_n - \theta {\bf e}_u) = \mathsf{0}$.
\end{corollary}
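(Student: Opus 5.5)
The plan is to read both directions of the equivalence off \Cref{lemma:opt-changes}, using only that the oracle returns $\operatorname{sign}\left(\mathbf w^\top \mathbf x^*(\mathbf 1_n) - \mathbf w^\top \mathbf x^*(\mathbf 1_n-\theta\mathbf e_u)\right)$ with $\mathbf w\in\mathbb R^n_{>0}$. The three cases $u\in U$, $u\in S$, and $u=f$ exhaust $[n]$, since $S$, $U$, $\{f\}$ partition the items when $B\notin\mathbb Z$.

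\emph{Forward direction.} Suppose $u\in U$. The first case of \Cref{lemma:opt-changes} gives $\mathbf x^*(\mathbf 1_n-\theta\mathbf e_u)=\mathbf x^*(\mathbf 1_n)$. The two weighted packings therefore coincide, and the oracle returns $\mathsf 0$.

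\emph{Reverse direction, by contrapositive.} Suppose $u\notin U$, so $u\in S$ or $u=f$.
\begin{itemize}
\item If $u\in S$, the lemma gives $\mathbf x^*(\mathbf 1_n-\theta\mathbf e_u)=\mathbf x^*(\mathbf 1_n)+\theta\mathbf e_f$. Hence $\mathbf w^\top \mathbf x^*(\mathbf 1_n)-\mathbf w^\top\mathbf x^*(\mathbf 1_n-\theta\mathbf e_u)=-\theta w_f<0$, because $\theta>0$ and $w_f>0$.
\item If $u=f$, the difference is $-\widetilde\theta\, w_f$ with $\widetilde\theta=\theta(B-\lfloor B\rfloor)/(1-\theta)$.
\end{itemize}

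The only step that needs care is checking that these quantities are strictly positive, so the oracle returns $-\mathsf 1\neq\mathsf 0$.
\begin{itemize}
\item $\theta>0$: we need $\delta>0$ and $1-(B-\lfloor B\rfloor)>0$. Both hold, since $\delta\in(0,1)$ and the fractional part of $B$ is less than $1$.
\item $\theta<1$: indeed $\theta\le \delta/2<1$, so the denominator $1-\theta$ is positive.
\item $B-\lfloor B\rfloor>0$: this uses the standing assumption $B\notin\mathbb Z$.
\end{itemize}
Together these give $\widetilde\theta>0$. With positivity of $\mathbf w$, both nonzero differences are strictly negative, which completes the equivalence.
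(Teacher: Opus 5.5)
Your proposal is correct and follows the paper's proof essentially verbatim: both directions are read off \Cref{lemma:opt-changes}, and the converse uses that the change in the optimal solution is $\eta\,\mathbf e_f$ with $\eta\in\{\theta,\widetilde\theta\}$ and $w_f>0$. Your explicit check that $\theta>0$ and $\widetilde\theta>0$ (the latter via $B\notin\mathbb Z$) spells out the positivity that the paper's proof uses without comment.
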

\begin{proof} 
From \Cref{lemma:opt-changes}, we have $\mathbf x^*(\mathbf 1_n - \theta \mathbf e_u) = \mathbf x^*(\mathbf 1_n)$ for every $u \in U$. Hence, 
\begin{align*}
    \mathbf w^\top \big(\mathbf x^*(\mathbf 1_n - \theta \mathbf e_u) - \mathbf x^*(\mathbf 1_n)\big) = 0~.
\end{align*}
For the converse direction, consider any item $i \notin U$, i.e., $i \in S \cup \{f\}$. By \Cref{lemma:opt-changes}, we have $\mathbf x^*(\mathbf 1_n-\theta\mathbf e_i) = \mathbf x^*(\mathbf 1_n)
+ \eta \mathbf e_f$, where $\eta=\theta$ if $i\in S$ and $\eta=\widetilde{\theta}$ if $i=f$. Therefore, 
\[\mathbf w^\top \bigl(\mathbf x^*(\mathbf 1_n-\theta\mathbf e_i) -\mathbf x^*(\mathbf 1_n) \bigr) = \mathbf w^\top(\eta\mathbf e_f) = \eta w_f \neq 0. \]
This completes the proof.
\end{proof}

\Cref{claim:item-in-U-iff-oracle-returns-equal} enables \Cref{alg:get-partition} to correctly identify the set $U$ using comparison queries. The next corollary shows how to identify the fractional item $f$ and, consequently, the set $S$.

\begin{corollary}
\label{claim:fractional-item-has-higher-oracle-value-than-items-in-S}
For any distinct items $i,j \in S \cup \{f\} = [n] \setminus U$, the oracle distinguishes the fractional item from the integrally packed ones as follows:
\begin{enumerate}
    \item[(i)] $\mathcal{O}_{\bf w}({\bf 1}_n-\theta{\bf e}_i, \ {\bf 1}_n-\theta{\bf e}_j)=\mathsf{-1}$ if and only if $i=f$ and $j\in S$.
    \item[(ii)] $\mathcal{O}_{\bf w}({\bf 1}_n-\theta{\bf e}_i, \ {\bf 1}_n-\theta{\bf e}_j)=\mathsf{1}$ if and only if $i\in S$ and $j=f$.
    \item[(iii)] $\mathcal{O}_{\bf w}({\bf 1}_n-\theta{\bf e}_i,\ {\bf 1}_n-\theta{\bf e}_j)=\mathsf{0}$ if and only if both $i, j \in S$.
\end{enumerate}
\end{corollary}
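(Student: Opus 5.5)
The plan is to compute the oracle value directly from \Cref{lemma:opt-changes} and then compare $\theta$ with $\widetilde{\theta}$. For any $i \in S\cup\{f\}$ the lemma gives $\mathbf x^*(\mathbf 1_n-\theta\mathbf e_i) = \mathbf x^*(\mathbf 1_n) + \eta_i \mathbf e_f$. Here $\eta_i=\theta$ if $i\in S$, and $\eta_f=\widetilde{\theta}$. For distinct $i,j\in S\cup\{f\}$ this yields
\[
\mathbf w^\top \mathbf x^*(\mathbf 1_n-\theta\mathbf e_i) - \mathbf w^\top \mathbf x^*(\mathbf 1_n-\theta\mathbf e_j) = w_f(\eta_i-\eta_j).
\]
Since $w_f>0$, the oracle response equals $\operatorname{sign}(\eta_i-\eta_j)$. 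So everything reduces to the sign of $\eta_i-\eta_j$.

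The key step is to show that $\widetilde{\theta}<\theta$ strictly. Write $\beta = B-\lfloor B\rfloor = x_f^*(\mathbf 1_n)\in(0,1)$. Then $\widetilde{\theta}<\theta$ is equivalent to $\beta/(1-\theta)<1$, i.e., $\theta<1-\beta$. This holds because $\theta\le \frac12(1-\beta)<1-\beta$, where the strict inequality uses $\beta<1$ (guaranteed by $B\notin\mathbb Z$). Also $\widetilde{\theta}>0$ since $\beta>0$ and $\theta<1$.

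Next I would run the case analysis over distinct pairs; since $i\neq j$, at most one of them equals $f$. If $i=f$ and $j\in S$, then $\eta_i-\eta_j=\widetilde{\theta}-\theta<0$ and the response is $\mathsf{-1}$. If $i\in S$ and $j=f$, the difference is $\theta-\widetilde{\theta}>0$ and the response is $\mathsf 1$. If $i,j\in S$, the difference is $0$ and the response is $\mathsf 0$. These three cases exhaust all distinct pairs in $S\cup\{f\}$ and produce three different outputs. Hence each forward implication also gives the ``only if'' direction in (i)--(iii).

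The only real obstacle is the strict inequality $\widetilde{\theta}\ne\theta$. It relies on the choice of $\theta$ being at most half of $1-\beta$ and on the non-integrality of $B$. Everything else follows directly from \Cref{lemma:opt-changes}.
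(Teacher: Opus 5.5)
Your proposal is correct and follows essentially the paper's argument: both use \Cref{lemma:opt-changes} to reduce the oracle response to $\operatorname{sign}\bigl(w_f(\eta_i-\eta_j)\bigr)$, derive $\widetilde{\theta}<\theta$ from $\theta<1-(B-\lfloor B\rfloor)$, and obtain the ``only if'' directions by exclusion; the paper differs only in that it derives (ii) from (i) by antisymmetry of the sign rather than computing it directly. One cosmetic point: naming $B-\lfloor B\rfloor$ as $\beta$ clashes with the paper's $\beta=\delta^3/(3n^2)$.
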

\begin{proof} 
For part (i) of the lemma, consider $i = f$ and $j \in S$. For the optimal solutions under the price vectors ${\bf 1}_n - \theta {\bf e}_i $ and ${\bf 1}_n - \theta {\bf e}_j$, we have 
\begin{align*}
    {\bf w}^\top \left( {\bf x}^*({\bf 1}_n - \theta {\bf e}_f) - {\bf x^*}({\bf 1}_n - \theta {\bf e}_j) \right) &= {\bf w}^\top \Bigl( {\bf x}^*({\bf 1}_n - \theta {\bf e}_f) - {\bf x^*}({\bf 1}_n) + {\bf x^*}({\bf 1}_n)  - {\bf x^*}({\bf 1}_n - \theta {\bf e}_j) \Bigr)\\
    & = {\bf w}^\top \Bigl( {\bf x}^*({\bf 1}_n - \theta {\bf e}_f) - {\bf x^*}({\bf 1}_n) \Bigr) \ + \ {\bf w}^\top \Bigl({\bf x^*}({\bf 1}_n)  - {\bf x^*}({\bf 1}_n - \theta {\bf e}_j) \Bigr) \\
    &= \mathbf w^\top (\widetilde{\theta} \mathbf e_f) - \mathbf w^\top (\theta \mathbf e_f) \tag{via~\Cref{lemma:opt-changes}}\\
    &= \theta w_f \cdot \left( \frac{B - \lfloor B \rfloor}{1 - \theta} - 1 \right) \tag{recall $\widetilde{\theta} = \frac{\theta (B - \lfloor B \rfloor)}{1 - \theta}$}  \\
    & < 0.
\end{align*}
The last inequality follows from the fact that $\theta < 1 - B + \lfloor B \rfloor$. Hence, $\mathcal O_{\bf w}({\bf 1}_n - \theta {\bf e}_f, {\bf 1}_n - \theta {\bf e}_j) = \mathsf{-1}$.

For part (ii), observe that $\mathrm{sign}(a) = - \mathrm{sign}(-a)$ for all $a \in \mathbb{R}$. Therefore, for any $i \in S$,
\begin{align*}
    \mathcal O_{\mathbf w}(\mathbf 1_n - \theta e_i, \mathbf 1_n - \theta \mathbf e_f) &= \mathrm{sign}\left(\mathbf w^\top \left(\mathbf x^*(\mathbf 1_n - \theta \mathbf e_i) - \mathbf x^*(\mathbf 1_n - \theta \mathbf e_f)\right)\right) \\
    &= - \mathrm{sign}\left(\mathbf w^\top \left(\mathbf x^*(\mathbf 1_n - \theta \mathbf e_f) - \mathbf x^*(\mathbf 1_n - \theta \mathbf e_i)\right)\right) \\
    &= - \mathcal O_{\mathbf w}(\mathbf 1_n - \theta e_f, \mathbf 1_n - \theta \mathbf e_i) \\
    &= \mathsf{1} \tag{by part (i)}
\end{align*}

Finally, for part (iii), consider any two items $i, j \in S$. By \Cref{lemma:opt-changes}, $\mathbf x^*(\mathbf 1_n - \theta \mathbf e_i) = \mathbf x^*(\mathbf 1_n) + \theta \mathbf e_f = \mathbf x^*(\mathbf 1_n - \theta \mathbf e_j)$. Therefore, $\mathbf w^\top (\mathbf x^*(\mathbf 1_n - \theta \mathbf e_i) - \mathbf x^*(\mathbf 1_n - \theta \mathbf e_j)) = 0$.
 
We have established the ``if'' direction of each part of the corollary. The corresponding ``only if'' directions follow immediately by exclusion. For instance, the ``only if'' direction of part~(i) follows
from the ``if'' directions of parts~(ii) and~(iii), and similarly for the remaining parts. This completes the proof of the corollary.
\end{proof}

\begin{algorithm}[ht]
\caption{Find tripartition $S$, $U$, and $\{f\}$ for the optimal solution $\mathbf{x}^*(\mathbf{1}_n)$}
\label{alg:get-partition}

\begin{algorithmic}[1]

\Require Items $[n]$, parameter $\theta \coloneqq \frac{1}{2}\min\{\delta,1-B+\lfloor B\rfloor\}$, and access to oracle $\mathcal{O}_{\mathbf w}(\cdot,\cdot)$.
\Ensure Sets $S,U$ and fractional item $f$.

\State Initialize subsets $S=\emptyset$ and $U=\emptyset$.

\State For each item $a\in[n]$, query price vectors $\mathbf 1_n$ and $\mathbf 1_n-\theta\mathbf e_a$. If the oracle response $\mathcal{O}_{\mathbf w}(\mathbf 1_n,\mathbf 1_n-\theta\mathbf e_a)=0$, then update $U\leftarrow U\cup\{a\}$.
\linelabel{line:find-U}

\While{$[n]\setminus(U\cup S)\neq\emptyset$}

    \State If $|[n]\setminus(U\cup S)|=1$, then set $f$ to be that unique item and exit, returning the tripartition $S$, $U$, and $f$.

    \State Otherwise, select any two items $a,b\in[n]\setminus(U\cup S)$ and based on the oracle response
    \(
    r\coloneqq
    \mathcal{O}_{\mathbf w}
    (\mathbf 1_n-\theta\mathbf e_a,
    \mathbf 1_n-\theta\mathbf e_b)
    \)
    perform one of the following three updates.

    \State If response $r=\mathsf 0$, then assign both items to $S$, i.e., update $S\leftarrow S\cup\{a,b\}$.

    \State If $r=\mathsf{-1}$, then set $f=a$, and return the tripartition with $f$, $U$, and $S\coloneqq[n]\setminus(U\cup\{f\})$.

    \State If $r=\mathsf 1$, then set $f=b$, and return the tripartition with $f$, $U$, and $S\coloneqq[n]\setminus(U\cup\{f\})$.

\EndWhile

\end{algorithmic}
\end{algorithm}

We establish the correctness of \Cref{alg:get-partition} in the following lemma.

\begin{lemma}
\label{lemma:identification-of-sets-S-U-f}
For the optimal solution $\mathbf x^*(\mathbf{1}_n)$ under the price vector $\mathbf{1}_n$, \Cref{alg:get-partition} identifies the set of integrally packed items $S = \{i \in [n]: x^*_i(\mathbf{1}_n) = 1\}$, the set of unpacked items $U = \{i \in [n]: x^*_i(\mathbf{1}_n) = 0\}$, and the fractionally packed item $f$ using $O(n)$ comparison queries.
\end{lemma}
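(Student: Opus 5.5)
The argument has two parts: show that Line~\ref{line:find-U} recovers $U$ exactly, then show that the while-loop keeps an invariant which forces it to output the correct $f$ and $S$. Query counting comes last.

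\textbf{Step 1: Line~\ref{line:find-U} outputs exactly $U$.} This follows from \Cref{claim:item-in-U-iff-oracle-returns-equal}. An item $a$ is added to $U$ if and only if $\mathcal{O}_{\mathbf w}(\mathbf 1_n,\mathbf 1_n-\theta\mathbf e_a)=\mathsf 0$, which holds if and only if $a\in U$. The line makes $n$ queries. After it, the set of unclassified items is $[n]\setminus U = S\cup\{f\}$. Here $f$ exists and is unique because $B\notin\mathbb Z$ and the greedy order under $\mathbf 1_n$ packs at most one item fractionally.

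\textbf{Step 2: loop invariant.} At the start of every iteration, the set $S_{\mathrm{alg}}$ built so far satisfies $S_{\mathrm{alg}}\subseteq S$, and $f\in R:=[n]\setminus(U\cup S_{\mathrm{alg}})$. The invariant holds initially since $S_{\mathrm{alg}}=\emptyset$. Each iteration is one of the following cases.
\begin{itemize}
\item If $|R|=1$, the invariant gives $R=\{f\}$. The returned item is correct, and $S_{\mathrm{alg}}=[n]\setminus(U\cup\{f\})=S$.
\item If $|R|\ge 2$, the algorithm picks distinct $a,b\in R\subseteq S\cup\{f\}$, and \Cref{claim:fractional-item-has-higher-oracle-value-than-items-in-S} applies.
\begin{itemize}
\item Response $\mathsf 0$ holds if and only if $a,b\in S$. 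Adding both to $S_{\mathrm{alg}}$ preserves $S_{\mathrm{alg}}\subseteq S$. Since $f\notin\{a,b\}$, $f$ stays in $R$.
\item Response $\mathsf{-1}$ holds if and only if $a=f$ and $b\in S$.
\item Response $\mathsf 1$ holds if and only if $b=f$ and $a\in S$.
\end{itemize}
In the last two cases the returned $f$ is correct, and the returned $S=[n]\setminus(U\cup\{f\})$ is correct by Step~1.
\end{itemize}

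\textbf{Step 3: termination and query count.} Every non-returning iteration removes two items from $R$. The loop therefore runs at most $\lceil(n-|U|)/2\rceil\le n$ iterations, each making one query. Together with the $n$ queries of Step~1, the total is $O(n)$.

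The main obstacle is minor. We must make sure the loop cannot end with $R=\emptyset$, which would leave $f$ unassigned. The invariant $f\in R$ rules this out: the only way items leave $R$ without a return is through response $\mathsf 0$, and that response never involves $f$.
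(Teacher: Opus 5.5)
Your proposal is correct and follows the paper's proof. You recover $U$ from Line~\ref{line:find-U} via \Cref{claim:item-in-U-iff-oracle-returns-equal}, handle the while-loop via \Cref{claim:fractional-item-has-higher-oracle-value-than-items-in-S}, and arrive at the same $O(n)$ query count. Your explicit invariant ($S_{\mathrm{alg}}\subseteq S$ and $f\in R$) simply spells out the loop-correctness and termination step that the paper states in a single line.
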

\begin{proof}
The query complexity follows immediately. Line~\ref{line:find-U} performs at most $n$ comparison queries, and the while-loop executes $O(n)$ times, performing one query per
iteration. Therefore, \Cref{alg:get-partition} uses $O(n)$ comparison queries overall.

By \Cref{claim:item-in-U-iff-oracle-returns-equal}, $\mathcal{O}_{\bf w}({\bf 1}_n,\; {\bf 1}_n-\theta{\bf e}_a) =\mathsf{0}$ if and only if $a \in U$. Consequently, the set constructed in 
Line~\ref{line:find-U} is exactly $U$.

Once the algorithm has correctly identified the set $U$, the remaining items are precisely those in $S \cup \{f\} = [n]\setminus U$. The correctness of the while-loop then follows from \Cref{claim:fractional-item-has-higher-oracle-value-than-items-in-S}, which characterizes the oracle responses for pairs of items in $S \cup \{f\}$. Consequently, the while-loop correctly identifies the $f$ and $S$. This completes the proof of the lemma. 
\end{proof}

Having recovered the tripartition $S,U,\{f\}$, we now turn to recovering the item values up to a common scaling factor. This recovery is carried out in two subsequent phases. \Cref{subsection:Uvalues} recovers the scaled values of all items $u \in U$, while \Cref{subsection:Svalues} recovers the scaled values of the items $s \in S$.

\subsection{Recovering Scaled Values of Unpacked Items}
\label{subsection:Uvalues}
This section presents an algorithm (\Cref{alg:get-value-ratio-for-items-in-U}) for recovering, up to a scaling factor, the values of items that remain unallocated under the price vector $\mathbf 1_n$. An insight underlying our query-efficient algorithm is that, for a suitably chosen parameter $\gamma>0$, decreasing the price of an unpacked item $u\in U$ alone by sufficiently many multiplicative steps of $\gamma$ induces a predictable change in the optimal solution (\Cref{lemma:x*-under-price-1-t-gamma-e_u}). We make this result algorithmically useful by showing that  this change is detectable via comparison-oracle queries (\Cref{claim:for-items-in-U-equal-iff-t-less-than-t*}).

We now state and prove the results outlined above. Throughout this section, let %
\begin{align} 
\gamma_1 := \frac{\delta^2}{2}\left(1-\frac{\delta^2}{2}\right), \qquad \gamma_2 := \frac{\delta^2}{2}\left(1-\frac{\delta^2}{3}\right), \qquad \gamma \sim \operatorname{Unif}[\gamma_1,\gamma_2]. \end{align}

Our analysis focuses on prices of the form $1 - t \gamma$ for integer values of $t$. The following lemma shows that, for every item $u \in U$, the value ratio $v_u/v_f$ lies in one of the intervals induced by grid points $1 - t \gamma$, where $t$ ranges over an appropriately bounded set of integers. 

\begin{lemma}
\label{lemma:unique-t*-for-U-switching}
For every item $u \in U$, there exists a unique integer $t^*_u \in \left\{1, 2, \ldots, \lfloor \frac{2}{\delta^2} \rfloor \right\}$ such that the value ratio $\frac{v_u}{v_f}$ lies in the interval $\bigl(1 - t^*_u \gamma, \ 1 - (t^*_u - 1) \gamma\bigr]$.
\end{lemma}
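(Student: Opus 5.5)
The argument locates the ratio $r_u := v_u/v_f$ within the grid $\{1 - t\gamma\}$. Existence follows from a lower bound on $r_u$, and uniqueness from the disjointness of the half-open intervals.

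First I pin down the range of $r_u$. By the greedy structure under $\mathbf 1_n$ (as in the proof of \Cref{lemma:opt-changes}), $v_u < v_f$, so $r_u < 1$. Since values are distinct multiples of $\delta$, we have $v_u \ge \delta$, and $v_f \le 1$ gives $r_u \ge \delta$. The upper bound $r_u \le 1 - \delta$ (from $v_u \le v_f - \delta$) is not needed.

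Next, consider the intervals $I_t := (1 - t\gamma, 1 - (t-1)\gamma]$ for $t = 1, \dots, T$ with $T := \lfloor 2/\delta^2 \rfloor$. They are pairwise disjoint because $\gamma > 0$ and each is half-open on the same side. Their union is $(1 - T\gamma, 1]$. Uniqueness of $t^*_u$ is therefore immediate, and existence reduces to showing $1 - T\gamma < \delta$. Since $r_u < 1$ and $r_u \ge \delta > 1 - T\gamma$, this puts $r_u$ in the union. Equivalently, we may take $t^*_u = \lceil (1 - r_u)/\gamma \rceil$. Then $t^*_u \ge 1$ because $r_u < 1$, and $t^*_u \le T$ provided $(1 - r_u)/\gamma \le T$.

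The main obstacle is verifying $T\gamma > 1 - \delta$ for every $\gamma \in [\gamma_1, \gamma_2]$, so it suffices to check $\gamma = \gamma_1$. Use $T \ge 2/\delta^2 - 1$ and $\gamma_1 = \frac{\delta^2}{2}\bigl(1 - \frac{\delta^2}{2}\bigr)$. Then
\[
T\gamma_1 \ \ge\ \Bigl(\tfrac{2}{\delta^2} - 1\Bigr)\tfrac{\delta^2}{2}\Bigl(1 - \tfrac{\delta^2}{2}\Bigr) \ =\ \Bigl(1 - \tfrac{\delta^2}{2}\Bigr)^2 \ \ge\ 1 - \delta^2 .
\]
Since $\delta < 1$, we have $1 - \delta^2 > 1 - \delta$, which gives the required inequality. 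Hence $1 - T\gamma < \delta \le r_u$, and the interval $I_{t^*_u}$ with $t^*_u \in \{1, \dots, T\}$ contains $r_u$.

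I would also note that randomizing $\gamma$ plays no role in this lemma; it is only used later to avoid ties at the grid endpoints. The statement holds deterministically for every $\gamma$ in the range.
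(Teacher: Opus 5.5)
Your proof is correct and follows the paper's argument essentially verbatim: you place $r_u=v_u/v_f$ in $[\delta,1)$, then use $\gamma\ge\gamma_1$ and $T\ge 2/\delta^2-1$ to get $1-T\gamma\le 1-(1-\delta^2/2)^2<\delta$, so $r_u$ lies in the disjoint union of the half-open intervals, which gives both existence and uniqueness. One small slip in your aside: the index satisfying $(t-1)\gamma\le 1-r_u<t\gamma$ is $\lfloor (1-r_u)/\gamma\rfloor+1$, not $\lceil (1-r_u)/\gamma\rceil$ (the two differ exactly when $(1-r_u)/\gamma$ is an integer, which your deterministic claim does not exclude), but the union-covering argument you actually rely on is unaffected.
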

\begin{proof}
Given that $u \in U$ is unpacked under the price vector $\mathbf 1_n$, we have $v_u < v_f$. Furthermore, since $v_u \ge \delta$ and $v_f \leq 1$, the ratio $v_u/v_f$ lies in the interval $[\delta,1)$. Hence, for the lemma, it suffices to consider grid points $1-t\gamma$ covering this range. To this end, using the lower bound $\gamma \geq \frac{\delta^2}{2}(1 - \frac{\delta^2}{2})$, we obtain  
\begin{align*}
    1 - \left\lfloor \frac{2}{\delta^2} \right\rfloor \gamma \leq 1 - \left(\frac{2}{\delta^2} - 1\right) \frac{\delta^2}{2} \left(1 - \frac{\delta^2}{2}\right) = 1-\left(1-\frac{\delta^2}{2}\right)^2<\delta, 
\end{align*}
where the final inequality uses the fact that $\delta \in (0,1)$. 
Hence, 
\begin{align*}
    [\delta, 1) \subset \left(1 - \left\lfloor \frac{2}{\delta^2}\right\rfloor \gamma, \ 1\right] = \bigsqcup_{t=1}^{\lfloor 2/\delta^2\rfloor} (1 - t \gamma, 1 - (t-1)\gamma].
\end{align*}
This containment guarantees the existence of an integer $t \in \left\{1,2,\ldots,\left\lfloor \frac{2}{\delta^2}\right\rfloor\right\}$ such that $\frac{v_u}{v_f} \in \bigl(1-t\gamma, \ 1-(t-1)\gamma\bigr]$. We define $t_u^*$ to be this integer $t$. The lemma stands proved. 
\end{proof}

Note that under the price vector $\mathbf 1_n - t_u^*\gamma \mathbf e_u$, the value-to-price ratio of item $u$, namely $\frac{v_u}{1-t_u^*\gamma}$, exceeds that of item $f$, which is equal to $v_f$. Furthermore, for every integer $t < t_u^*$, under the price vector $\mathbf 1_n - t\gamma \mathbf e_u$ the value-to-price ratio of item $u$ is at most that of item $f$. Consequently, when considering price vectors of the form $\mathbf 1_n - t \gamma \mathbf e_u$, the relative ordering of items $u$ and $f$ in the greedy construction changes precisely at $t=t_u^*$. The following lemma characterizes the resulting change in the optimal solution and establishes a monotonicity property that is crucial for query efficiency.

\begin{lemma}
\label{lemma:x*-under-price-1-t-gamma-e_u}
Let $u \in U$, and let $t^*_u$ be defined as in \Cref{lemma:unique-t*-for-U-switching}. Also let $\widehat{s} = \argmin_{s \in S} v_s$ be the least-valued item in $S$. Then, for every integer $t \in \left\{1, 2, \ldots, \lfloor \frac{2}{\delta^2} \rfloor \right\}$, we have 
\begin{align*}
    \mathbf x^*(\mathbf 1_n - t \gamma \mathbf e_u) 
    = \begin{cases}
        \mathbf x^*(\mathbf 1_n), & t < t^*_u\\
        \mathbf x^*(\mathbf 1_n) + \min \left\{1, \frac{B - \lfloor B \rfloor}{1 - t \gamma} \right\} \mathbf e_u + \eta \mathbf e_f, & t \geq t^*_u, \text{ and } 1 - t\gamma > {v_u}/{v_{\widehat{s}}} \\
        \mathbf x^*(\mathbf 1_n) +  \mathbf e_u + \min\{0, B - \lfloor B \rfloor + t\gamma - 1\} \mathbf e_{\widehat{s}} + \eta \mathbf e_f, & t \geq t^*_u, \text{ and } 1 - t\gamma < {v_u}/{v_{\widehat{s}}}~.
    \end{cases}
\end{align*}
where $\eta \coloneqq \max \{ \lfloor B \rfloor - B, \ t \gamma - 1\}$.
\end{lemma}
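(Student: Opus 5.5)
Under $\mathbf p=\mathbf 1_n-t\gamma\mathbf e_u$, every item other than $u$ has price $1$, so its value-to-price ratio is simply its value. I would therefore read off the greedy order directly. The items of $S$ come first, in decreasing order of value, then $f$, then the rest of $U$, and $u$ is inserted at the position determined by $v_u/(1-t\gamma)$. By \Cref{lemma:unique-t*-for-U-switching}, $v_u/(1-t\gamma)\le v_f$ exactly when $t<t^*_u$. A small point to settle first: for $t=t^*_u$ the ratio $v_u/v_f$ may sit above $1-t\gamma$ and possibly at the endpoint, so I would check which way the tie goes in the statement's convention. The randomness of $\gamma$ makes boundary equalities such as $v_u/v_f=1-t\gamma$ or $1-t\gamma=v_u/v_{\widehat s}$ occur with probability zero. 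This is why the statement can use strict inequalities and ignore ties.

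\textbf{Case $t<t^*_u$.} Here $u$ still comes after $f$. The items of $S$ together with $f$ exhaust the capacity exactly as under $\mathbf 1_n$, because $|S|=\lfloor B\rfloor$ and $x_f^*=B-\lfloor B\rfloor$. So $u$ receives nothing and $\mathbf x^*(\mathbf p)=\mathbf x^*(\mathbf 1_n)$. As in Case~1 of \Cref{lemma:opt-changes}, the item $u$ is never packed, so lowering its price has no effect.

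\textbf{Case $t\ge t^*_u$.} Now $u$ precedes $f$, and I would split on where $u$ lands relative to $\widehat s$, the last item of $S$ in the greedy order. This is decided by comparing $v_u/(1-t\gamma)$ with $v_{\widehat s}$, i.e.\ $1-t\gamma$ with $v_u/v_{\widehat s}$.

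\emph{Subcase $1-t\gamma>v_u/v_{\widehat s}$.} All of $S$ is packed first and uses capacity $\lfloor B\rfloor$, leaving residual capacity $r=B-\lfloor B\rfloor$. Item $u$ then gets $\min\{1,r/(1-t\gamma)\}$. If it is fully packed, $f$ gets the leftover $r-(1-t\gamma)$; otherwise $f$ gets $0$. In either case $f$ gets $\max\{0,\,B-\lfloor B\rfloor+t\gamma-1\}$, so its change from the old value $B-\lfloor B\rfloor$ is $\max\{\lfloor B\rfloor-B,\,t\gamma-1\}=\eta$. I would also confirm that the leftover never exceeds $1$, so that $f$ stays at most $1$ and nothing in $U\setminus\{u\}$ is packed; this holds because $r<1$. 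This yields the second line of the statement.

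\emph{Subcase $1-t\gamma<v_u/v_{\widehat s}$.} Now $u$ precedes $\widehat s$. The first thing to establish is that $u$ does not also overtake the second-lowest item of $S$. Otherwise several items would shift, which the formula does not allow. This is the main obstacle. I would try to derive it from the grid structure and the tiny range of $t\gamma$ relative to $\delta$, but I am not confident this works directly. Taken at face value, the formula treats only $\widehat s$ as displaced. Either way, the items of $S\setminus\{\widehat s\}$ use capacity $\lfloor B\rfloor-1$ wherever they sit, since they are all ahead of $\widehat s$ and fit in the budget.

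Next, $u$ is fully packed at cost $1-t\gamma$. This fits because the remaining capacity is $1+(B-\lfloor B\rfloor)$. The item $\widehat s$ then receives $\min\{1,\,B-\lfloor B\rfloor+t\gamma\}$, i.e.\ a change of $\min\{0,\,B-\lfloor B\rfloor+t\gamma-1\}$. Finally $f$ gets whatever capacity remains, namely $\max\{0,\,B-\lfloor B\rfloor+t\gamma-1\}$, which again corresponds to a change of $\eta$. This matches the third line of the statement.
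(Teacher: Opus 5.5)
Your proposal is correct and follows the paper's proof: you read the greedy order off the value-to-price ratios, place $u$ relative to $f$ (via $t^*_u$) and to $\widehat s$, and do the same capacity bookkeeping in each of the three cases. The ``main obstacle'' you flag in the last subcase is unnecessary and in fact false in general. At $t=\lfloor 2/\delta^2\rfloor$ one has $1-t\gamma<\delta$, so $v_u/(1-t\gamma)>1$ and $u$ overtakes every item of $S$. Your own ``either way'' remark already resolves this, exactly as the paper does: all of $(S\setminus\{\widehat s\})\cup\{u\}$ precede $\widehat s$ and have total price $\lfloor B\rfloor-t\gamma<B$, so they are all packed integrally regardless of their relative order, and $\mathbf x^*$ records only packed amounts, not order.
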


\begin{proof}
    Fix an item $u \in U$. We will establish this lemma by analyzing the three different cases.

    \noindent
    \textbf{Case 1}: $t < t^*_u$. The definition of $t_u^*$ implies that $\frac{v_u}{1-(t_u^*-1)\gamma} < v_f$. Therefore, for every integer $t \leq t_u^*-1$, we have $\frac{v_u}{1-t\gamma} < v_f$. It follows that, for all integers $t < t_u^*$, under the price vector $\mathbf 1_n - t\gamma \mathbf e_u$, the value-to-price ratio of item $u$ is strictly smaller than that of item $f$. Furthermore, all items other than $u$ have the same value-to-price ratios under the price vectors $\mathbf 1_n - t\gamma \mathbf e_u$ and $\mathbf 1_n$. Hence, the greedy ordering---and therefore the optimal solution---remains unchanged under the price vector $\mathbf 1_n - t\gamma \mathbf e_u$. Thus, for every integer $t \in \left\{1,2,\ldots,\left\lfloor \frac{2}{\delta^2}\right\rfloor\right\}$ satisfying $t < t_u^*$, we have $\mathbf x^*(\mathbf 1_n - t\gamma \mathbf e_u)=\mathbf x^*(\mathbf 1_n)$. 

     \medskip
     
    \noindent
    \textbf{Case 2}: $t \geq t_u^*$ and $1 - t\gamma > \frac{v_u}{v_{\widehat{s}}}$. Consider the price vector $\mathbf 1_n - t\gamma \mathbf e_u$. Since $t \geq t_u^*$, the value-to-price ratio of item $u$ satisfies $\frac{v_u}{1-t\gamma} \geq \frac{v_u}{1-t_u^*\gamma} > v_f$. In addition, the condition $1-t\gamma > {v_u}/{v_{\widehat{s}}}$ implies that $\frac{v_u}{1-t\gamma} < v_{\widehat{s}}$. Therefore, in the greedy construction of the optimal solution, item $u$ appears after all items in $S$ but before item $f$. Consequently, all items in $S$ are packed integrally before item $u$, while item $f$ is considered after item $u$.

    In particular, after all items in $S$ are packed, the remaining knapsack capacity is $B-\lfloor B\rfloor$. Therefore, the quantity of item $u$ packed is $\min\left\{1,\frac{B-\lfloor B\rfloor}{1-t\gamma}\right\}$. If item $u$ is packed integrally, the remaining capacity becomes $B-\lfloor B\rfloor-(1-t\gamma)=B-\lfloor B\rfloor+t\gamma-1$; otherwise, if $u$ is packed fractionally, the remaining capacity is $0$. Since both $B-\lfloor B\rfloor$ and $t\gamma$ are strictly less than $1$, we have $B-\lfloor B\rfloor+t\gamma-1<1$. As item $f$ has price $1$, the remaining capacity can be exhausted by packing at most a fractional amount of $f$. Hence, the quantity of item $f$ packed is $\max\{0,\,B-\lfloor B\rfloor+t\gamma-1\}$. Furthermore, all items in $U \setminus \{u\}$ remain unpacked. 
    
    Finally, recall that under the price vector $\mathbf 1_n$, the item $u$ is unpacked, and item $f$ is packed to the extent $B - \lfloor B \rfloor$. Therefore 
    \begin{align*}
        \mathbf x^*(\mathbf 1_n - t \gamma \mathbf e_u) = \mathbf x^*(\mathbf 1_n) + \min\left\{1, \frac{B - \lfloor B \rfloor}{1 - t \gamma}\right\} \mathbf e_u  + \max\{ \lfloor B \rfloor - B, \ t\gamma - 1\} \mathbf e_f~. 
    \end{align*}
    
  \medskip
  
    \noindent
    \textbf{Case 3:} $t \geq t^*_u$ and $1 - t \gamma < \frac{v_u}{v_{\widehat{s}}}$. Consider the price vector $\mathbf 1_n - t \gamma \mathbf e_u$. Since $\frac{v_u}{1 - t \gamma} > v_{\widehat{s}} > v_f$, item $u$ appears before both $\widehat{s}$ and $f$ in the greedy ordering. After all the items in $S \setminus \{\widehat{s}\}$ have been packed, the remaining knapsack capacity is $$B - \sum_{k \in S\setminus \{\widehat{s}\}} p_k = B - (|S| - 1) = B - \lfloor B \rfloor + 1.$$

    Since the price of item $u$ is $1 - t \gamma < 1$, item $u$ is packed integrally. Thus, after all the items in $\left( S \cup \{u \} \right) \setminus \{\widehat{s}\}$ have been packed integrally, the remaining capacity is  
    $$B - \lfloor B \rfloor + 1 - (1 - t\gamma) = B - \lfloor B \rfloor + t\gamma.$$
    
    Here, the quantity of item $\widehat{s}$ packed is $\min\{1, B - \lfloor B \rfloor + t\gamma\}$. Also, note that if $\widehat{s}$ is packed integrally, then the remaining capacity is $B - \lfloor B \rfloor + t\gamma - 1$, else if $\widehat{s}$ is packed fractionally the remaining capacity is $0$. That is, after packing $\widehat{s}$, the remaining capacity is $$\max\{0, B - \lfloor B \rfloor + t\gamma - 1\} $$ 
    
    This quantity is less than $1$; both $B - \lfloor B \rfloor$ and $t\gamma$ are less than $1$. Also, given that the price of item $f$ is $1$, this remaining capacity is filled by a (possibly zero) fraction of $f$. All items in $U \setminus \{u\}$ therefore remain unpacked.

    Finally, recall that under the price vector $\mathbf 1_n$, item $\widehat{s}$ is integrally packed, item $f$ is packed in quantity $B - \lfloor B \rfloor$ and item $u$ is unpacked. Comparing the two optimal solutions yields 
      \begin{align*}
        \mathbf x^*(\mathbf 1_n - t \gamma \mathbf e_u) = \mathbf x^*(\mathbf 1_n) + \mathbf e_u + \min\{0, B - \lfloor B \rfloor + t\gamma - 1\} \mathbf e_{\widehat{s}} + \max\{\lfloor B \rfloor - B, t\gamma - 1\} \mathbf e_f~.
    \end{align*}

    The lemma stands proved.
\end{proof}

The following corollary characterizes the oracle's response to price vectors of the form ${\bf 1}_n - t\gamma{\bf e}_u$, thereby enabling a binary search for the thresholds $t_u^*$ defined in \Cref{lemma:unique-t*-for-U-switching}.

\begin{corollary}
\label{claim:for-items-in-U-equal-iff-t-less-than-t*}
    For every item $u \in U$, let $t^*_u$ be as defined in \Cref{lemma:unique-t*-for-U-switching}. Then, for every integer $t \in \left\{1, 2, \ldots, \lfloor \frac{2}{\delta^2} \rfloor \right\}$, it holds that $\mathcal{O}_{\bf w}({\bf 1}_n, {\bf 1}_n - t \gamma {\bf e}_u) = \mathsf{0} $ if and only if $t < t^*_u$.
\end{corollary}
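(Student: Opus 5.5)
The plan is to read off the oracle response from \Cref{lemma:x*-under-price-1-t-gamma-e_u}. The oracle returns the sign of $\mathbf w^\top(\mathbf x^*(\mathbf 1_n) - \mathbf x^*(\mathbf 1_n - t\gamma \mathbf e_u))$, so we need to show this quantity is zero exactly when $t<t^*_u$.

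\textbf{The ``if'' direction.} If $t<t^*_u$, the first case of \Cref{lemma:x*-under-price-1-t-gamma-e_u} gives $\mathbf x^*(\mathbf 1_n - t\gamma \mathbf e_u)=\mathbf x^*(\mathbf 1_n)$. The difference is therefore the zero vector, and the response is $\mathsf 0$.

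\textbf{The converse.} Take $t\ge t^*_u$. First I note that the boundary case $1-t\gamma = v_u/v_{\widehat s}$ does not occur almost surely, since $\gamma$ is drawn from a continuous distribution and there are finitely many pairs $(t,u)$. This is exactly where the randomness in $\gamma$ is used. So one of the remaining two cases of the lemma applies, and I will show that $\mathbf w^\top(\mathbf x^*(\mathbf 1_n - t\gamma\mathbf e_u) - \mathbf x^*(\mathbf 1_n))$ is nonzero in each.

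\emph{Case 2.} The difference is $a\,\mathbf e_u+\eta\,\mathbf e_f$ with $a=\min\{1,(B-\lfloor B\rfloor)/(1-t\gamma)\}>0$. The cleanest route uses budget conservation. Both solutions exhaust the budget, since $\sum_i p_i = n > B$ under either price vector. Under $\mathbf 1_n - t\gamma\mathbf e_u$, the budget equation reads $(1-t\gamma)a + \eta = 0$, because the items of $S$ are unchanged. Hence $\eta = -(1-t\gamma)a$, and the weighted difference is
\[
a\bigl(w_u - (1-t\gamma)w_f\bigr).
\]
This vanishes only if $w_u=(1-t\gamma)w_f$. Since $\mathbf w$ is fixed and $\gamma$ is uniform on an interval, this again happens with probability zero.

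\emph{Case 3.} The difference is $\mathbf e_u + c\,\mathbf e_{\widehat s} + \eta\,\mathbf e_f$. The analogous budget identity is $(1-t\gamma) + c + \eta = 0$, with $c=\min\{0,B-\lfloor B\rfloor+t\gamma-1\}$ and $\eta=\max\{\lfloor B\rfloor - B,\, t\gamma-1\}$. One of $c,\eta$ takes its boundary value. Either $\eta = \lfloor B\rfloor - B$ and $c = B-\lfloor B\rfloor + t\gamma-1$, or $c=0$ and $\eta = t\gamma-1$. In each subcase the weighted difference is an affine function of $\gamma$ with nonzero slope:
\[
w_u + (t\gamma-1+B-\lfloor B\rfloor)\,w_{\widehat s} - (B-\lfloor B\rfloor)\,w_f \quad\text{or}\quad w_u-(1-t\gamma)w_f .
\]
The slope is $t\,w_{\widehat s}$ or $t\,w_f$, both positive. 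Each such function vanishes for at most one value of $\gamma$, so by a union bound over finitely many $(u,t)$ it is nonzero almost surely.

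\textbf{Main obstacle.} The main obstacle is that for a general unknown $\mathbf w$ the weighted difference could vanish by coincidence, so a deterministic argument is unavailable. The proposal handles this by using the continuous randomness of $\gamma$, consistent with the paper's remark that guarantees hold almost surely. For $\mathbf w=\mathbf 1_n$ one could argue deterministically (for instance, Case 2 gives $a\,t\gamma>0$), but the almost-sure argument covers all $\mathbf w\in\mathbb R^n_{>0}$.
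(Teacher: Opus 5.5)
Your proposal is correct and is essentially the paper's proof: both read the three cases off the preceding lemma, get the ``if'' direction from the unchanged solution, and for $t\ge t^*_u$ write the weighted difference as an affine function of $\gamma$ with nonzero slope, which vanishes with probability zero because $\gamma$ is continuous. Your budget-conservation identity is a valid shortcut for that computation (strictly, the perturbed prices sum to $n-t\gamma$ rather than $n$, but this still exceeds $B$), and your explicit disposal of the tie $1-t\gamma = v_u/v_{\widehat s}$ fills a case the paper leaves implicit.
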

\begin{proof}
    Fix an item $u \in U$. For the ``if'' direction, \Cref{lemma:x*-under-price-1-t-gamma-e_u} ensures that $\mathbf x^*(\mathbf 1_n - t \gamma \mathbf e_u) = \mathbf x^*(\mathbf 1_n)$ for every $t < t^*_u$. Hence, $\mathcal O_{\bf w}(\mathbf 1_n, \mathbf 1_n - t \gamma \mathbf e_u) = \mathrm{sign}\bigl(\mathbf w^\top\mathbf x^*(\mathbf 1_n) \ - \ \mathbf w^\top\mathbf x^*(\mathbf 1_n - t \gamma \mathbf e_u)\bigr) = 0,$ as claimed.

    For the ``only if'' direction, note that for $t \geq t^*_u$, two cases arise: either $1 - t \gamma > {v_u}/{v_{\widehat{s}}}$ or $1 - t \gamma < {v_u}/{v_{\widehat{s}}}$, where $\widehat{s} = \argmin_{s \in S} v_s$. Also, let  $\eta = \max\{\lfloor B \rfloor - B, t\gamma - 1\}$.

    \noindent
    \textbf{Case 1}: $1 - t \gamma > \frac{v_u}{v_{\widehat{s}}}$. \Cref{lemma:x*-under-price-1-t-gamma-e_u} give us 
    \begin{align*}
        \mathbf w^\top \big(\mathbf x^*(\mathbf 1_n) - \mathbf x^*(\mathbf 1_n - t \gamma \mathbf e_u)) &= - \mathbf w^\top \left( \min\left\{1, \frac{B - \lfloor B \rfloor}{1 - t\gamma}\right\} \mathbf e_u + \eta \mathbf e_f \right) \\
        &= - w_u \min\left\{1, \frac{B - \lfloor B \rfloor}{1 - t\gamma}\right\} - \eta w_f \\
        &= \begin{cases}
            - w_u + w_f(1 - t \gamma), \qquad\qquad \text{if } 1 - t \gamma < B - \lfloor B \rfloor,\\
            -\left(\frac{w_u}{1 - t \gamma} - w_f\right) (B - \lfloor B \rfloor), \quad\text{otherwise}.
        \end{cases}
    \end{align*}
    The above quantity is \emph{nonzero} almost surely: since $\gamma$ is a continuous random variable, for every integer $t \in \{1,\ldots, \lfloor\frac{2}{\delta^2}\rfloor\}$ we have $1 - t\gamma \neq \frac{w_u}{w_f}$ with probability one. 
    Therefore, $\mathbf w^\top \big(\mathbf x^*(\mathbf 1_n) - \mathbf x^*(\mathbf 1_n - t \gamma \mathbf e_u)) \neq 0$, which implies that $\mathcal{O}_{\bf w}({\bf 1}_n, {\bf 1}_n - t \gamma {\bf e}_u) \neq \mathsf{0}$.

    \medskip
    \noindent
    \textbf{Case 2}: $1 - t \gamma < \frac{v_u}{v_{\widehat{s}}}$. By \Cref{lemma:x*-under-price-1-t-gamma-e_u},
    \begin{align*}
        \mathbf w^\top \big(\mathbf x^*(\mathbf 1_n) - \mathbf x^*(\mathbf 1_n - t \gamma \mathbf e_u)) &= - \mathbf w^\top \left(\mathbf e_u + \min\{0, B - \lfloor B \rfloor + t\gamma - 1\} \mathbf e_{\widehat{s}} + \eta \mathbf e_f\right) \\
        &= - w_u - w_{\widehat{s}} \min\{0, B - \lfloor B \rfloor + t\gamma - 1\} - \eta w_f \\
        &= \begin{cases}
             - w_u + w_f (1 - t \gamma),\quad\qquad\qquad\qquad\qquad\quad \text{if } 1 - t \gamma < B - \lfloor B \rfloor,\\
            - w_u + (w_f - w_{\widehat{s}}) (B - \lfloor B \rfloor) - w_{\widehat{s}} (1 - t \gamma),\qquad\quad\text{otherwise}.
        \end{cases}
    \end{align*}
    The above quantity is nonzero almost surely. Indeed, in either case it is an affine function of $1-t\gamma$ with a nonzero coefficient. Since $\gamma$ is a continuous random variable, the probability that this expression vanishes for any integer $t \in \{1,\ldots, \lfloor\frac{1}{\delta^2}\rfloor\}$ is zero. Consequently, $\mathbf w^\top \bigl(\mathbf x^*(\mathbf 1_n) - \mathbf x^*(\mathbf 1_n - t\gamma \mathbf e_u)\bigr) \neq 0$ almost surely, and therefore $\mathcal{O}_{\bf w}({\bf 1}_n, {\bf 1}_n - t \gamma {\bf e}_u) \neq \mathsf{0}$.

    This completes the proof of the corollary.
\end{proof}

Recall that the distinct item values $v_1,\ldots,v_n \in (0,1]$ lie on the $\delta$-grid, i.e., each value $v_i = p \delta$ for some integer $p \in  \left\{1, 2, \ldots,  \lfloor \frac{1}{\delta} \rfloor \right\}$. Hence, any two distinct values differ by at least $\delta$. The following proposition shows that any two distinct ratios of these values are separated by at least $\delta^2$.

\begin{proposition}
\label{lemma:unique-p/q-in-interval-of-length-delta^2}
Let integer \(N=\lfloor \frac{1}{\delta} \rfloor\) and define $\mathcal R_\delta = \left\{ \frac{p}{q} : \text{integers } p,q \in \{1,2, \ldots, N\} \right\}$. Then, any interval of length less than $\delta^2$ contains at most one element of $\mathcal R_\delta$. 
\end{proposition}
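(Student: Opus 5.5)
The approach is a direct separation bound: I will show that any two distinct elements of $\mathcal R_\delta$ differ by at least $\delta^2$. The statement follows from this at once, since an interval of length less than $\delta^2$ cannot contain two points whose distance is at least $\delta^2$.

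Fix two distinct elements $\frac{p}{q}$ and $\frac{p'}{q'}$ of $\mathcal R_\delta$, with $p,q,p',q' \in \{1,\ldots,N\}$. Put them over a common denominator:
\[
\left|\frac{p}{q}-\frac{p'}{q'}\right| = \frac{|pq'-p'q|}{qq'}.
\]
The two fractions are distinct, so the integer $pq'-p'q$ is nonzero, and hence its absolute value is at least $1$. For the denominator, $q,q' \le N = \lfloor 1/\delta\rfloor \le 1/\delta$ gives $qq' \le N^2 \le 1/\delta^2$. Combining the two bounds yields $\left|\frac{p}{q}-\frac{p'}{q'}\right| \ge 1/N^2 \ge \delta^2$.

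Now take any interval $I$ of length $\ell<\delta^2$. If $I$ contained two distinct elements of $\mathcal R_\delta$, their distance would be at most $\ell<\delta^2$. That contradicts the separation bound, so $I$ contains at most one element of $\mathcal R_\delta$. This holds whether $I$ is open, closed or half-open, because the argument uses only the distance between points.

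There is no real obstacle. The one point needing care is to phrase the argument in terms of distinct rational values rather than distinct pairs $(p,q)$. Different pairs can represent the same rational number (for example $1/2 = 2/4$), and these collapse to a single element of $\mathcal R_\delta$, so they do not violate "at most one element."
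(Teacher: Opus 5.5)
Your proof is correct and matches the paper's argument essentially step for step. You write $\left|\frac{p}{q}-\frac{p'}{q'}\right| = \frac{|pq'-p'q|}{qq'}$, bound the numerator below by $1$ and the denominator above by $N^2 \le 1/\delta^2$ to get a separation of at least $\delta^2$, and conclude that an interval of length less than $\delta^2$ contains at most one element of $\mathcal R_\delta$.
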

\begin{proof}
Consider two distinct elements of \(\mathcal R_\delta\), say $\frac{p}{q}\neq \frac{p'}{q'},$ where \(p,q,p',q'\in \{1,\ldots, N\} \). Then
 $\left|\frac{p}{q}-\frac{p'}{q'}\right|
    =
    \frac{|pq'-p'q|}{qq'}.
    $
Since \(\frac{p}{q}\neq \frac{p'}{q'}\), the integer \(pq'-p'q\) is nonzero, so
$
    |pq'-p'q|\ge 1.
$
Also \(q,q'\le N\), and hence \(qq'\le N^2\). Therefore
$
    \left|\frac{p}{q}-\frac{p'}{q'}\right|
    \ge
    \frac{1}{N^2}
    \geq
    \delta^2.
$
Thus, any two distinct elements of \(\mathcal R_\delta\) are separated by distance at least \(\delta^2\). Hence, any interval of length less than $\delta^2$ can contain at most one element of \(\mathcal R_\delta\). This completes the proof. 
\end{proof}

\begin{algorithm}[ht]
\caption{Recovering scaled values of items in $U$}
\label{alg:get-value-ratio-for-items-in-U}

\begin{algorithmic}[1]

\Require Items $U$ and parameter $\gamma>0$.
\Ensure Value ratios $\frac{v_u}{v_f}$ for every item $u \in U$.

\For{every item $u \in U$}
\State Perform a binary search over $t\in\{1,2,\ldots,\lfloor 2/\delta^2\rfloor\}$ using comparison queries with price vectors $\mathbf 1_n$ and $\mathbf 1_n-t\gamma\mathbf e_u$. Let $t^*_u$ be the smallest value of $t$ for which $\mathcal O_{\mathbf w}(\mathbf 1_n,\ \mathbf 1_n-t\gamma\mathbf e_u)\neq \mathsf{0}$.

\State Find the unique rational number $\frac{p}{q}$ in the interval $(1-t^*_u \ \gamma_2,\, 1-(t^*_u-1)\gamma_1)$ satisfying $p,q\in\{1,\ldots,\lfloor1/\delta\rfloor\}$ by invoking the simplest intervening rational algorithm, which runs in $O(\operatorname{poly}\log(1/\delta))$ time (Appendix \ref{appendix:Stern-Brocot}). Set $\frac{v_u}{v_f}=\frac{p}{q}$.
\EndFor
\State \textbf{Return} the scaled values $\frac{v_u}{v_f}$ for every $u\in U$.
\end{algorithmic}
\end{algorithm}

\begin{restatable}{lemma}{FinalLemmaUValue}
\label{lemma:estimating-profit-of-items-in-U}
\Cref{alg:get-value-ratio-for-items-in-U} correctly recovers the scaled values $\frac{v_u}{v_f}$ for all items $u \in U$. Furthermore, the algorithm requires at most $O(n \log(1/\delta))$ comparison-oracle queries and runs in time polynomial in $n$ and $\log(1/\delta)$.
\end{restatable}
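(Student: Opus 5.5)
The proof has three parts: the binary search finds exactly $t^*_u$, the final interval contains exactly one candidate ratio, and the query and time counts are as claimed.

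\textbf{Binary search.} Fix $u\in U$. By \Cref{claim:for-items-in-U-equal-iff-t-less-than-t*}, the predicate $\mathcal O_{\mathbf w}(\mathbf 1_n,\mathbf 1_n-t\gamma\mathbf e_u)\neq \mathsf 0$ is monotone in $t$ over $\{1,\ldots,\lfloor 2/\delta^2\rfloor\}$. It is false for $t<t^*_u$ and true for $t\ge t^*_u$, almost surely over the draw of $\gamma$. \Cref{lemma:unique-t*-for-U-switching} guarantees that $t^*_u$ lies in this range. Hence the binary search returns exactly the integer $t^*_u$ of \Cref{lemma:unique-t*-for-U-switching}. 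It uses $O(\log(2/\delta^2))=O(\log(1/\delta))$ queries. There are at most $n$ items in $U$ and one search per item, so this gives the claimed $O(n\log(1/\delta))$ queries. A union bound over the finitely many $(u,t)$ pairs keeps the almost-sure guarantee.

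\textbf{Uniqueness of the recovered ratio.} By definition of $t^*_u$, the ratio $v_u/v_f$ lies in $(1-t^*_u\gamma,\,1-(t^*_u-1)\gamma]$. Since $\gamma\in[\gamma_1,\gamma_2]$ and $1\le t^*_u$, this interval is contained in the open interval $I:=(1-t^*_u\gamma_2,\,1-(t^*_u-1)\gamma_1)$. For the upper end, $1-(t^*_u-1)\gamma\le 1-(t^*_u-1)\gamma_1$. When $t^*_u=1$ this bound equals $1$, but $v_u/v_f<1$, so strict containment still holds. The left end is handled similarly. The algorithm uses $I$ rather than the $\gamma$-dependent interval because $\gamma$ is random. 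Moreover, $v_u/v_f=(v_u/\delta)/(v_f/\delta)$ with both integers in $\{1,\ldots,\lfloor 1/\delta\rfloor\}$, so $v_u/v_f\in\mathcal R_\delta$.

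To invoke \Cref{lemma:unique-p/q-in-interval-of-length-delta^2}, bound the length of $I$:
\[
|I| = t^*_u\gamma_2-(t^*_u-1)\gamma_1=\gamma_2+(t^*_u-1)(\gamma_2-\gamma_1).
\]
Here $\gamma_2-\gamma_1=\frac{\delta^2}{2}\cdot\frac{\delta^2}{6}=\frac{\delta^4}{12}$ and $t^*_u-1<2/\delta^2$. Hence
\[
|I|<\frac{\delta^2}{2}\Bigl(1-\frac{\delta^2}{3}\Bigr)+\frac{\delta^2}{6}=\frac{2\delta^2}{3}-\frac{\delta^4}{6}<\delta^2.
\]
So $I$ contains at most one element of $\mathcal R_\delta$, and $v_u/v_f$ is such an element. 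The simplest-intervening-rational routine (Appendix~\ref{appendix:Stern-Brocot}) returns a rational in $I$ with numerator and denominator at most $\lfloor 1/\delta\rfloor$. By uniqueness, that output equals $v_u/v_f$. Note that the representation need not be in lowest terms; only the value matters.

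\textbf{Running time.} Each binary search step involves arithmetic on numbers of $O(\log(1/\delta))$ bits. Each rational-recovery call costs $\mathrm{poly}\log(1/\delta)$. Summed over $|U|\le n$ items, the running time is polynomial in $n$ and $\log(1/\delta)$.

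The main obstacle is the length bound on $I$: the slack $(t^*_u-1)(\gamma_2-\gamma_1)$ accumulates over up to $2/\delta^2$ steps. The chosen $\gamma_1,\gamma_2$ make this slack at most $\delta^2/6$, which keeps $|I|$ below $\delta^2$. The remaining steps are direct consequences of earlier lemmas.
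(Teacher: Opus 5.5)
Your proof is correct and follows the paper's argument essentially step for step. You run the binary search justified by \Cref{claim:for-items-in-U-equal-iff-t-less-than-t*}, pass to the $\gamma$-independent interval built from $\gamma_1,\gamma_2$, and bound its length by $\gamma_2+(t^*_u-1)(\gamma_2-\gamma_1)<\delta^2$; this is the same quantity as the paper's $\gamma_1+t^*_u(\gamma_2-\gamma_1)$. You then conclude via \Cref{lemma:unique-p/q-in-interval-of-length-delta^2} and the simplest-intervening-rational routine, as the paper does.

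Your explicit check that $v_u/v_f$ lies in the \emph{open} interval the algorithm actually uses is a small tightening; the paper's proof works with the half-open one. For $t^*_u\ge 2$, that strictness uses $\gamma>\gamma_1$, which holds almost surely but is left implicit in your write-up.
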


\begin{proof}
The algorithm iterates over all items in $U$. Fix an iteration corresponding to an item $u \in U$. By \Cref{lemma:unique-t*-for-U-switching}, there exists a unique integer $t^*_u \in \{1,\ldots,\lfloor 2/\delta^2 \rfloor\}$ such that $\frac{v_u}{v_f} \in \left(1-t^*_u\gamma,\ 1-(t^*_u-1)\gamma\right]. $

Moreover, by \Cref{claim:for-items-in-U-equal-iff-t-less-than-t*}, $\mathcal O_{\mathbf w}(\mathbf 1_n,\mathbf 1_n-t\gamma\mathbf e_u)=\mathsf{0}$ if and only if $t<t^*_u$. Therefore, a binary search over $\{1,\ldots,\lfloor 2/\delta^2 \rfloor\}$ for the smallest value of $t$ satisfying $\mathcal O_{\mathbf w}(\mathbf 1_n,\mathbf 1_n-t\gamma\mathbf e_u)\neq \mathsf{0}$ correctly identifies $t^*_u$. This requires $O(\log(1/\delta))$ comparison-oracle queries.

Note that $\frac{v_u}{v_f}=\frac{p'}{q'}$ for some integers $p',q'\in\{1,\ldots,\lfloor 1/\delta\rfloor\}$, and this ratio lies in the interval $\left(1-t_u^*\gamma,\ 1-(t_u^*-1)\gamma\right]$. To ensure that the interval endpoints have polynomial bit complexity, we instead consider the interval $\left(1-t_u^*\gamma_2,\ 1-(t_u^*-1)\gamma_1\right]$, which contains $\left(1-t_u^*\gamma, \ 1-(t_u^*-1)\gamma\right]$ since $\gamma\in[\gamma_1,\gamma_2]$. The enlarged interval still has length at most $\delta^2$:
\begin{align*}
    \left(1-(t^*_u-1)\gamma_1\right)
    -
    \left(1-t^*_u\gamma_2\right) 
    &= \gamma_1+t^*_u(\gamma_2-\gamma_1) \\
    &\le
    \gamma_1+\frac{2}{\delta^2}(\gamma_2-\gamma_1) \\
    &=
    \frac{\delta^2}{2}\left(1-\frac{\delta^2}{2}\right)
    +
    \frac{2}{\delta^2}\cdot \frac{\delta^4}{12} \\
    &=    \delta^2\left(\frac{2}{3}-\frac{\delta^2}{4}\right)
    <
    \delta^2.
\end{align*}
Hence, by \Cref{lemma:unique-p/q-in-interval-of-length-delta^2}, we have that $\frac{v_u}{v_f}$ is the unique ratio of the form $\frac{p}{q}$ with $p,q \in \{1,\ldots,\lfloor 1/\delta \rfloor\}$ contained in this interval. Consequently, the unique ratio returned by the $O(\operatorname{poly}\log(1/\delta))$-time simplest intervening rational algorithm (Appendix~\ref{appendix:Stern-Brocot}) is precisely $\frac{v_u}{v_f}$. Therefore, the algorithm correctly recovers $\frac{v_u}{v_f}$ for every $u \in U$.

We complete the proof by analyzing the query complexity and running time of the algorithm. Each iteration performs $O(\log(1/\delta))$ comparison-oracle queries and runs in $O(\operatorname{poly} \log(1/\delta))$ time. Since the algorithm executes one iteration for each item in $U$, with $|U| \leq n$, the total number of comparison-oracle queries is $O(n\log(1/\delta))$, and the overall running time is also $O(n\log(1/\delta))$. The lemma stands proved. 
\end{proof}

\subsection{Recovering Scaled Values of Packed Items}
\label{subsection:Svalues}

This section presents the third and final phase (\Cref{alg:get-value-ratio-of-items-in-S}) for recovering the scaled values of items in $S$, which are packed integrally under the price vector $\mathbf 1_n$. While the queried price vectors in the previous sections were relatively easy to state, the price vectors for this phase are intricate. This section is organized as follows. First, we introduce a family of price vectors, termed \emph{triad prices}, parameterized by a scalar, and establish their key properties (\Cref{subsection:triad}). Next, we characterize the oracle's responses to queries involving triad prices (\Cref{sec:characterization-of-oracle-response-under-triad-prices}). Finally, in \Cref{subsection:alg-for-packed}, we use these characterizations to obtain \Cref{alg:get-value-ratio-of-items-in-S}.

Throughout this section, we work with the perturbation parameters 
\begin{align} \beta := \frac{\delta^3}{3n^2}, \qquad \varepsilon_1 := \frac{\delta^8}{6n^4}, \qquad \varepsilon_2 := \frac{\delta^8}{4n^4}, \qquad \varepsilon \sim \operatorname{Unif}[\varepsilon_1,\varepsilon_2]. \label{eq:defn-beta} 
\end{align}

\subsubsection{Triad Prices}
\label{subsection:triad}
Towards recovering the scaled values of items in $S$, we first define a family of price vectors. In particular, for specific item subsets $R \subseteq [n]$ and scalars $\alpha \in (0,1)$, triad price vectors are constructed so that every item in $R$ has one of three possible value-to-price ratios: $\frac{\widehat{v}}{\alpha - 2 \varepsilon}$, $\frac{\widehat{v}}{\alpha-\varepsilon}$, $\frac{\widehat{v}}{\alpha}$, where $\widehat{v} \coloneqq \min_{j \in [n]} v_j$. Our algorithm begins with $R=U$, a set of items whose scaled values are known, and iteratively recovers the values of the remaining items.

\begin{definition}[Triad Prices]
\label{Definition:TriadPrices}
Let $R =\{r_1, r_2, \ldots, r_{|R|} \}$ be a set of items indexed in decreasing order of their (known scaled) values, and let $\alpha \in (0,1)$. Further, let $\widehat{v} \coloneqq \min_{j \in [n]} v_j$. Define $t$ to be the smallest index satisfying $$\sum_{i=1}^{t} \min \ \left\{ 1, (\alpha-2\varepsilon)\frac{v_{r_i}}{\widehat v} \right\} \ge B-n+|R|,$$ provided such an index exists. 

With $R_1 \coloneqq \{r_1, \ldots, r_t\}$, $k_\alpha \coloneqq r_{t+1}$, and $R_2 \coloneqq \{r_{t+2}, \ldots, r_{|R|}\}$, the triad price vector $\mathbf p(\alpha) \in [0,1]^n$ is defined by 
    \begin{align*}
        p_i(\alpha) = 
            \begin{cases}
                1, & i \in [n] \setminus R,\\
                \min\{1, (\alpha - 2\varepsilon)\frac{v_i}{\widehat{v}}\}, &i \in R_1, \\
                \min\{1, (\alpha - \varepsilon) \frac{v_i}{\widehat{v}}\}, &i = k_\alpha,\\
                \min\{1, \alpha \frac{v_i}{\widehat{v}}\}, & i \in R_2. 
            \end{cases}
    \end{align*}
The item $k_\alpha$ is called the triad item.
\end{definition}
\medskip
Note that if no index $t < |R|$ satisfies the threshold condition, then we say that the triad price vector at $\alpha$ does not exist.\footnote{This may occur when $\alpha$ is sufficiently small so that
$\sum_{r \in R} \min \ \left\{1, (\alpha-2\varepsilon)\frac{v_r}{\widehat v} \right\} < B-n+|R|$, and hence the threshold $B-n+|R|$ is never attained.} Moreover, by construction, items in $R_1$, $\{k_\alpha\}$, and $R_2$ have value-to-price ratios $\frac{\widehat{v}}{\alpha - 2 \varepsilon}$, $\frac{\widehat{v}}{\alpha-\varepsilon}$, $\frac{\widehat{v}}{\alpha}$, respectively, except when the corresponding price is capped at $1$. For notational convenience, the dependence of $\mathbf p(\alpha)$ on $R$ is left implicit.

\Cref{lemma:computation-of-triad-prices} below notes that the triad prices can be computed efficiently and \Cref{lemma:monotonicity-of-triad-prices} provides a useful monotonicity property of triad prices.\footnote{Recall that parameter $\beta = \frac{\delta^3}{3n^2}$.} 
\begin{lemma}
\label{lemma:computation-of-triad-prices} %
Let $R \subseteq [n]$ be a set of items with known scaled values and suppose that $\arg\min_{j\in[n]} v_j \in R$. Then, for every $\alpha \in \{\beta, 2\beta, \ldots, 1\}$, one can determine in polynomial time whether the triad price vector $\mathbf p(\alpha)$ exists. Furthermore, whenever it exists, the vector $\mathbf p(\alpha)$ and the associated triad item $k_\alpha$ can both be computed in polynomial time.
\end{lemma}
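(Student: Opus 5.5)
The computation follows the definition directly; the only real issue is that every quantity it needs is known up to a common scaling factor. Since $\arg\min_{j\in[n]} v_j \in R$ and the scaled values of all items in $R$ are known, say $v_r/\lambda$ for a common $\lambda>0$, we can compute $\widehat v/\lambda = \min_{r\in R} v_r/\lambda$. The ratio $v_r/\widehat v$ for each $r\in R$ is then the ratio of two known rationals, and the factor $\lambda$ cancels. Each such ratio is of the form $p/q$ with $p,q\le \lfloor 1/\delta\rfloor$, so it has bit complexity $O(\log(1/\delta))$.

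Fix $\alpha = m\beta$ with $m$ an integer and $\beta=\delta^3/(3n^2)$. The quantities $\alpha-\varepsilon$ and $\alpha-2\varepsilon$ involve the random $\varepsilon\in[\varepsilon_1,\varepsilon_2]$. I would treat $\varepsilon$ as a sampled rational of polynomial bit length; alternatively, the algorithm only needs it to be specified to polynomially many bits. Every entry $\min\{1, c\, v_i/\widehat v\}$ with $c\in\{\alpha-2\varepsilon,\alpha-\varepsilon,\alpha\}$ is then a rational of polynomial bit size, computable by one multiplication and one comparison with $1$. The threshold $B-n+|R|$ is also known, since $B$, $n$ and $|R|$ are all known.

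The procedure is as follows:
\begin{enumerate}
\item Sort $R$ in decreasing order of known scaled value, which takes $O(n\log n)$ comparisons of rationals.
\item Compute the prefix sums $\sum_{i=1}^{t}\min\{1,(\alpha-2\varepsilon)v_{r_i}/\widehat v\}$ for $t=1,\dots,|R|$.
\item Find the smallest $t$ at which the prefix sum reaches $B-n+|R|$.
\end{enumerate}
The triad price vector exists exactly when such a $t<|R|$ is found. In that case set $R_1=\{r_1,\dots,r_t\}$, $k_\alpha=r_{t+1}$ and $R_2=\{r_{t+2},\dots,r_{|R|}\}$, and write down $\mathbf p(\alpha)$ entrywise from \Cref{Definition:TriadPrices}, with price $1$ on $[n]\setminus R$. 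This uses $O(n)$ arithmetic operations on polynomial-size rationals, so it runs in polynomial time in $n$ and $\log(1/\delta)$. Repeating it over all grid values of $\alpha$ (there are $1/\beta$ of them) is not needed for this lemma, because the statement is per $\alpha$.

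The main point to check is that the arithmetic stays exact and polynomial. The prefix sums are sums of at most $n$ rationals whose denominators divide products of $O(\log(1/\delta)+\log n)$-bit integers, and this holds for each term. Summing $n$ of them can in the worst case grow denominators, but the growth is only to $O(n\log(n/\delta))$ bits, which is still polynomial. The comparison with the threshold $B-n+|R|$ is exact, so ties are decided correctly and the index $t$ is well defined. I also need that the construction never requires $\widehat v$ itself, only the ratios $v_i/\widehat v$; this is what makes the scale ambiguity harmless.
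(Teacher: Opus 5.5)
Your proposal is correct and follows the paper's proof essentially step for step. You recover the ratios $v_r/\widehat v$ (the unknown scale cancels because $\arg\min_j v_j\in R$), sort $R$, scan the prefix sums of $\min\{1,(\alpha-2\varepsilon)v_{r_i}/\widehat v\}$ for the first index reaching $B-n+|R|$, and declare existence exactly when that index is below $|R|$. Your attention to exact rational arithmetic and the bit length of the sampled $\varepsilon$ goes beyond what the paper writes. Conversely, the paper adds two small remarks: each term is nonnegative since $\alpha\ge\beta>2\varepsilon$, so the entries are valid prices, and $t\ge 1$ because $B-n+|R|\ge B-\lfloor B\rfloor>0$.
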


\begin{lemma}
\label{lemma:monotonicity-of-triad-prices}
For any fixed set $R \subseteq [n]$, if the triad price vector $\mathbf p(\widehat{\alpha})$ exists for some $\widehat{\alpha}\in(0,1)$, then $\mathbf p(\alpha)$ also exists for every $\alpha\in[\widehat{\alpha},1)$.
\end{lemma}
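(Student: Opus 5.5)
The plan is to reduce existence to a monotone partial-sum condition. Fix the ordering $r_1,\ldots,r_{|R|}$ of $R$ by decreasing known scaled value; this ordering does not depend on $\alpha$. For each index $t$ define
\[
F_t(\alpha) \coloneqq \sum_{i=1}^{t} \min\left\{1, (\alpha-2\varepsilon)\frac{v_{r_i}}{\widehat v}\right\}.
\]
By \Cref{Definition:TriadPrices} and the accompanying remark, $\mathbf p(\alpha)$ exists exactly when some index $t<|R|$ satisfies $F_t(\alpha)\ge B-n+|R|$. The smallest such index is then the threshold index, and it automatically satisfies $t<|R|$, so $k_\alpha=r_{t+1}$ is well defined. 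The right-hand side $B-n+|R|$ depends only on $R$, $B$ and $n$, not on $\alpha$.

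The key step is that each $F_t(\alpha)$ is nondecreasing in $\alpha$. Every summand $\min\{1,(\alpha-2\varepsilon)v_{r_i}/\widehat v\}$ is the minimum of a constant and an affine function of $\alpha$ with positive slope $v_{r_i}/\widehat v>0$. Such a minimum is nondecreasing in $\alpha$, and a finite sum of nondecreasing functions is nondecreasing.

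Now suppose $\mathbf p(\widehat\alpha)$ exists. Let $\widehat t<|R|$ be its threshold index, so $F_{\widehat t}(\widehat\alpha)\ge B-n+|R|$. For any $\alpha\in[\widehat\alpha,1)$, monotonicity gives
\[
F_{\widehat t}(\alpha)\ \ge\ F_{\widehat t}(\widehat\alpha)\ \ge\ B-n+|R|.
\]
Hence the set of indices $t<|R|$ meeting the threshold at $\alpha$ is nonempty. Its minimum $t(\alpha)$ exists and satisfies $t(\alpha)\le \widehat t<|R|$. This is exactly the existence condition, so $\mathbf p(\alpha)$ is well defined. All prices still lie in $[0,1]$ because of the $\min\{1,\cdot\}$ caps, and the lower values are nonnegative since $\alpha\ge\widehat\alpha>2\varepsilon$ in any regime where the construction is meaningful. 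As a byproduct, the threshold index is nonincreasing in $\alpha$.

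I do not expect a real obstacle. The one point to handle carefully is the sign of $\alpha-2\varepsilon$, so that the price formula does not produce negative entries. If $\widehat\alpha\le 2\varepsilon$, the argument goes through unchanged at the level of the existence inequality, because the summands remain nondecreasing even when negative. Even so, I would note that the algorithm only uses $\alpha\in\{\beta,2\beta,\ldots\}$ with $\beta\gg 2\varepsilon$, so negative entries never arise there.
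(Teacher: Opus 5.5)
Your proposal is correct and matches the paper's proof: both take the witnessing prefix index $\widehat t<|R|$ at $\widehat\alpha$ and use that each summand $\min\{1,(\alpha-2\varepsilon)v_{r_i}/\widehat v\}$ is nondecreasing in $\alpha$, so the same prefix still meets the threshold $B-n+|R|$ for every $\alpha\ge\widehat\alpha$. Your added observation that the new threshold index is at most $\widehat t<|R|$, so that $k_\alpha$ is well defined, spells out a step the paper leaves implicit.
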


The proofs of \Cref{lemma:computation-of-triad-prices,lemma:monotonicity-of-triad-prices} are deferred to \Cref{sec:proof-of-computation-of-triad-prices}. Having defined the triad prices, we next develop a characterization of the oracle response that will enable us to recover the values of the items in $S$.

\subsubsection{Characterization of Oracle Responses under Triad Prices}
\label{sec:characterization-of-oracle-response-under-triad-prices}
Our algorithm maintains a set $R \subseteq [n]$ of items whose scaled values are known\footnote{Initially, $R=U \cup \{f\}.$ } and, in each iteration, recovers the scaled value of the minimum-valued item outside $R$. That is, it recovers the scaled value of $\overline{s} = \arg\min_{s\in[n]\setminus R} v_s$. Hence, by construction, the algorithm will maintain the invariant that $R$ contains the lowest-valued items: $\max_{r \in R} v_r < \min_{s \in [n] \setminus R} v_s$. 

Recall that $\widehat v \coloneqq \min_{j \in [n]} v_j$ and that the parameter $\beta$ (defined in \Cref{eq:defn-beta}) satisfies $\beta < \delta^2$. Consequently, by an argument analogous to that of \Cref{lemma:unique-t*-for-U-switching}, one can construct intervals of length $\beta$ that uniquely isolate the value ratios $\frac{\widehat v}{v_s}$ for all items $s \in [n]\setminus R$.

\begin{proposition}
\label{lemma:unique-alpha*-for-S-switching}
For any set of items $R \subseteq [n]$ and every item $s \in [n] \setminus R$, there is a unique scalar $\alpha^*_s \in \{\beta, 2 \beta, \ldots, 1 \}$ such that the value ratio $\frac{\widehat{v}}{v_s}$ belongs to the interval $\left[\alpha^*_s - \frac{3\varepsilon}{2}, \alpha^*_s + \beta - \frac{3\varepsilon}{2} \right)$. 
\end{proposition}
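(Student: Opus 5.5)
The plan is to argue exactly as in \Cref{lemma:unique-t*-for-U-switching}: the intervals $I_k := \left[k\beta - \frac{3\varepsilon}{2},\, (k+1)\beta - \frac{3\varepsilon}{2}\right)$ for $k = 1, 2, \ldots, K$, where $K := 1/\beta$, are pairwise disjoint, half-open, and consecutive. Their union is the single interval $J := \left[\beta - \frac{3\varepsilon}{2},\, 1 + \beta - \frac{3\varepsilon}{2}\right)$. If necessary, we index the grid as $\{\beta, 2\beta, \ldots, \lfloor 1/\beta\rfloor \beta\}$; since $\beta = \delta^3/(3n^2)$, one can choose $\delta$ so that $1/\beta$ is an integer, and otherwise only the top endpoint changes. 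Once we know $\widehat v / v_s \in J$, existence and uniqueness of $\alpha^*_s$ follow, because a point of $J$ lies in exactly one $I_k$. We then set $\alpha^*_s := k\beta$.

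So the work is to show $\widehat v / v_s \in J$. For the upper end, $\widehat v = \min_j v_j \le v_s$ gives $\widehat v / v_s \le 1$. We need $1 < 1 + \beta - \frac{3\varepsilon}{2}$, i.e.\ $\frac{3\varepsilon}{2} < \beta$. This holds since $\varepsilon \le \varepsilon_2 = \frac{\delta^8}{4n^4}$, so $\frac{3\varepsilon}{2} \le \frac{3\delta^8}{8n^4} < \frac{\delta^3}{3n^2}$ because $\delta < 1$ and $n \ge 1$. For the lower end, $\widehat v \ge \delta$ (values are positive multiples of $\delta$) and $v_s \le 1$ give $\widehat v / v_s \ge \delta$. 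It then suffices that $\beta - \frac{3\varepsilon}{2} \le \beta \le \delta$, which is clear from $\beta = \delta^3/(3n^2) < \delta$.

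The only mildly delicate point is the top endpoint. The ratio equals $1$ exactly when $s = \arg\min_j v_j$, which can happen if $R$ does not contain the minimizer. This case is covered because $1 \in I_K$, again using $\frac{3\varepsilon}{2} < \beta$. Note also that $\alpha^*_s$ depends on the realized $\varepsilon$. This causes no difficulty, since the statement is made for the fixed sampled $\varepsilon \in [\varepsilon_1, \varepsilon_2]$ and the bound $\frac{3\varepsilon}{2} < \beta$ holds uniformly over that range. I expect no real obstacle; the main care is verifying the parameter inequality $\frac{3\varepsilon_2}{2} < \beta$ and handling the integrality of $1/\beta$.
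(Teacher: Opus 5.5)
Your proof is correct. The existence part is the same covering argument the paper uses: $\widehat v/v_s\in[\delta,1]$, and this range lies inside the union of the shifted grid intervals.

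Where you differ is uniqueness. The paper argues it via the $\delta^2$-separation of the ratios $\widehat v/v_s$ and $\widehat v/v_{s'}$ for distinct items, together with $\beta<\delta^2$. That argument really shows that no interval contains two such ratios, i.e.\ that distinct items receive distinct $\alpha^*$, which is a different fact from the one the proposition states. For a fixed $s$, uniqueness of $\alpha^*_s$ follows exactly as you say: the intervals are consecutive and half-open, hence pairwise disjoint. So your argument is the logically relevant one, and it needs nothing about the value grid beyond $\widehat v\ge\delta$ and $v_s\le 1$.

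You also check the needed condition $\tfrac{3\varepsilon}{2}<\beta$ explicitly. The paper only cites $\beta<\delta^2/3$ and $\varepsilon<\delta^3/3$, which do not by themselves imply it.

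Two small corrections:
\begin{itemize}
\item The inequality $\tfrac{3\delta^8}{8n^4}<\tfrac{\delta^3}{3n^2}$ is equivalent to $9\delta^5<8n^2$. This does not follow from $\delta<1$ and $n\ge 1$ alone; it fails for $n=1$ with $\delta$ near $1$. Invoke $n\ge 2$ instead, which holds in every valid instance because $0<B\le (n-1)(\cdots)$.
\item Your aside about non-integral $1/\beta$ is unnecessary, since the grid $\{\beta,2\beta,\ldots,1\}$ in the statement already presupposes $1/\beta\in\mathbb{Z}$. It is also not quite right as stated: after truncating the grid at $\lfloor 1/\beta\rfloor\beta$, the last right endpoint $(\lfloor 1/\beta\rfloor+1)\beta-\tfrac{3\varepsilon}{2}$ can fall below $1$. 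So more than ``only the top endpoint changes'' would need checking.
\end{itemize}
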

\begin{proof}
Since the item values $v_i \in (0,1]$ lie on the $\delta$-grid, we have $\frac{\widehat{v}}{v_s} \in [\delta, 1]$ for every $s \in [n] \setminus R$. Moreover, since $\beta < \delta^2/3$ and $\varepsilon < \delta^3/3$, the interval $[\delta, 1]$ is covered by the collection of $\beta$-length intervals: $[\delta, 1] \subset \bigcup_{\alpha' \in \{\beta, 2 \beta, \ldots, 1 \} } \ \left[\alpha' - \frac{3\varepsilon}{2}, \alpha' + \beta - \frac{3\varepsilon}{2} \right)$. 

For each $s \in [n]\setminus R$, let $\alpha_s^* \in \{\beta,2\beta,\ldots,1\}$ denote the unique scalar whose corresponding interval contains the ratio $\frac{\widehat v}{v_s}$; that is, $\frac{\widehat v}{v_s} \in \left[\alpha^*_s - \frac{3\varepsilon}{2}, \alpha^*_s + \beta - \frac{3\varepsilon}{2} \right)$. The uniqueness of $\alpha_s^*$ follows from the distinctness of the item values. Analogous to \Cref{lemma:unique-p/q-in-interval-of-length-delta^2}, for any distinct items $s$,$s'$, the ratios $\frac{\widehat v}{v_s}$ and $\frac{\widehat v}{v_{s'}}$ are separated by at least $\delta^2$. Since $\beta < \delta^2$, no interval of length $\beta$ can contain more than one such ratio. Therefore, each ratio $\frac{\widehat v}{v_s}$ belongs to a unique interval, and hence determines a unique scalar $\alpha_s^*$. This completes the proof.
\end{proof}
Following \Cref{lemma:unique-alpha*-for-S-switching}, we write throughout $\alpha^*_{\overline{s}}$ for the unique scalar associated with the value ratio $\frac{\widehat v}{v_{\overline{s}}}$, where
$\overline{s} = \arg\min_{s\in[n]\setminus R} v_s$ is the minimum-valued item outside $R$.

As $\alpha$ varies, the optimal solution $\mathbf x^*(\mathbf p(\alpha))$ may exhibit a complicated dependence on $\alpha$, making it difficult to obtain closed-form expressions for $\mathbf x^*(\mathbf p(\alpha))$. Nevertheless, the following lemma yields a key structural guarantee that will subsequently enable us to determine $\alpha^*_{\overline{s}}$, and hence the unique interval containing the value ratio of $\overline{s}$. The lemma essentially states that, for scalar $\alpha$, under a suitably small perturbation of the triad price vector, the triad item $k_\alpha$ is packed in the optimal solution if and only if $\alpha \le \alpha^*_{\overline{s}}$.

\begin{lemma}
\label{lemma:k-packed-only-at-alpha-lesser-than-optimal}
Let $R \subseteq [n]$ be a set satisfying $\max_{r \in R} v_r < \min_{s \in [n] \setminus R} v_s$ and $\overline{s} = \arg\min_{s\in[n]\setminus R} v_s$. Further, suppose that for $\alpha \in \{\beta, 2\beta, \ldots, 1\}$ the triad price vector $\mathbf p(\alpha)$ exists, and let $k_\alpha$ be the corresponding triad item. Then
$$x^*_{k_\alpha} \left(\mathbf p(\alpha)- \frac{\varepsilon}{2} \frac{v_{k_\alpha}}{\widehat{v}} \mathbf e_{k_\alpha} \right) = 0$$ if and only if $\alpha > \alpha^*_{\overline{s}}$.
\end{lemma}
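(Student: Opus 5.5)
The idea is to compute the value-to-price ratio of the triad item after the perturbation, and then compare it with the ratios of the other items in the greedy order. Write $\mathbf p' \coloneqq \mathbf p(\alpha)-\frac{\varepsilon}{2}\frac{v_{k_\alpha}}{\widehat v}\mathbf e_{k_\alpha}$. When $p_{k_\alpha}(\alpha)$ is not capped, its price becomes $(\alpha-\frac{3\varepsilon}{2})\frac{v_{k_\alpha}}{\widehat v}$. Its ratio is then $\rho\coloneqq \frac{\widehat v}{\alpha-3\varepsilon/2}$, which lies strictly between the $R_1$-ratio $\frac{\widehat v}{\alpha-2\varepsilon}$ and the $R_2$-ratio $\frac{\widehat v}{\alpha}$. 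I would first record the order of the other items relative to $\rho$.
\begin{itemize}
\item Every $r\in R_1$ precedes $k_\alpha$. If its price is uncapped this follows from $\frac{\widehat v}{\alpha-2\varepsilon}>\rho$. If its price is capped at $1$, then $v_r\ge \frac{\widehat v}{\alpha-2\varepsilon}>\rho$.
\item Every $r\in R_2$ follows $k_\alpha$. If uncapped, its ratio is $\frac{\widehat v}{\alpha}<\rho$. If capped, its price is $1$ while $v_r<v_{k_\alpha}$, and I would check that this keeps it behind.
\item An item $s\notin R$ has price $1$ and ratio $v_s$, so it precedes $k_\alpha$ iff $\frac{\widehat v}{v_s}<\alpha-\frac{3\varepsilon}{2}$.
\end{itemize}
The capped case for $k_\alpha$ needs a short separate check. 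There the price is $1-\frac{\varepsilon}{2}\frac{v_{k_\alpha}}{\widehat v}$, and I would argue the same ordering conclusions persist, since $\varepsilon\ll\delta^2$ and all values are separated on the $\delta$-grid.

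Next comes the budget accounting. The $n-|R|$ items outside $R$ each cost $1$. By the choice of $t$ in \Cref{Definition:TriadPrices}, the prices of $R_1$ satisfy $\sum_{i\le t}p_{r_i}(\alpha)\ge B-n+|R|$ and $\sum_{i\le t-1}p_{r_i}(\alpha)<B-n+|R|$. Since $p_{r_t}\le 1$, the second inequality gives $\sum_{R_1}p<B-n+|R|+1$.

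For the ``if'' direction, suppose $\alpha>\alpha^*_{\overline s}$, so $\alpha\ge\alpha^*_{\overline s}+\beta$. By \Cref{lemma:unique-alpha*-for-S-switching}, $\frac{\widehat v}{v_{\overline s}}<\alpha^*_{\overline s}+\beta-\frac{3\varepsilon}{2}\le\alpha-\frac{3\varepsilon}{2}$. Since $\overline s$ has the least value outside $R$, every $s\notin R$ precedes $k_\alpha$. Together with $R_1$, the items ahead of $k_\alpha$ then have total price at least $(n-|R|)+(B-n+|R|)=B$. The budget is therefore exhausted before $k_\alpha$ is reached, so $x^*_{k_\alpha}(\mathbf p')=0$.

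For the converse, suppose $\alpha\le\alpha^*_{\overline s}$. Then $\frac{\widehat v}{v_{\overline s}}\ge\alpha^*_{\overline s}-\frac{3\varepsilon}{2}\ge\alpha-\frac{3\varepsilon}{2}$. Equality is a probability-zero event because $\varepsilon$ is drawn uniformly, as in \Cref{claim:for-items-in-U-equal-iff-t-less-than-t*}; so the inequality is strict and $\overline s$ comes after $k_\alpha$. The items ahead of $k_\alpha$ are then contained in $R_1\cup([n]\setminus(R\cup\{\overline s\}))$. Their total price is strictly less than $(n-|R|-1)+(B-n+|R|+1)=B$. Hence positive capacity remains when $k_\alpha$ is considered, and $x^*_{k_\alpha}(\mathbf p')>0$.

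I expect the main obstacle to be the edge effects rather than this core counting. Capped prices for $k_\alpha$ or for items of $R_2$ could in principle reorder them relative to items outside $R$. Ties in value-to-price ratios could also arise, and the unknown tie-breaking order must not be allowed to matter. I would handle these by showing that any coincidence of ratios forces an identity between $\varepsilon$ and a grid ratio, which has probability zero. I would also use the invariant $\max_R v_r<\min_{[n]\setminus R}v_s$ to rule out interleavings produced by capping.
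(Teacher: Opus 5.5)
Your proposal is correct and follows essentially the same route as the paper. It uses the same split at $\alpha^*_{\overline{s}}$, the same comparison of the perturbed triad ratio $\widehat v/(\alpha-3\varepsilon/2)$ against $R_1$, $R_2$ and $[n]\setminus R$, and the same budget accounting from the minimality of $R_1$. The capped case for $k_\alpha$, which you left as a sketch, is handled in the paper by exactly the grid-separation check you anticipated: it bounds the ratio by $\max\{v_{k_\alpha}+\delta,\ \widehat v/(\alpha-3\varepsilon/2)\}$ when $\alpha>\alpha^*_{\overline{s}}$, and shows $(\alpha-\varepsilon)v_{k_\alpha}/\widehat v<1$, so no capping occurs, when $\alpha\le\alpha^*_{\overline{s}}$.
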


\begin{proof}
Fix the set $R$ and scalar $\alpha$, and let $R_1$, $R_2$, and $k_\alpha$ be as in Definition~\ref{Definition:TriadPrices}. Since $\mathbf p(\alpha)$ exists, the total price of the items in $R_1$ under $\mathbf p(\alpha)$ is at least $B-n+|R|$. We divide the proof into two cases, depending on whether $\alpha$ is at most or at least $\alpha^*_{\overline{s}}$. 

The defining property of $\alpha^*_{\overline{s}}$ (\Cref{lemma:unique-alpha*-for-S-switching}) gives us $\frac{\widehat v}{v_{\overline{s}}} \in \left[\alpha^*_{\overline{s}}-\frac{3\varepsilon}{2},\ \alpha^*_{\overline{s}}-\frac{3\varepsilon}{2}+\beta \right)$. Equivalently, $\alpha^*_{\overline{s}} \le \frac{\widehat v}{v_{\overline{s}}}+\frac{3\varepsilon}{2} < \alpha^*_{\overline{s}}+\beta$. Furthermore, for any $\alpha \in \{\beta,2\beta,\ldots,1\}$, exactly one of the following holds:
\begin{itemize}
    \item If $\alpha \le \alpha^*_{\overline{s}}$, then $ \alpha < \frac{\widehat v}{v_{\overline{s}}} + \frac{3\varepsilon}{2}$. The inequality is strict almost surely, since $\alpha$ is restricted to the grid $\{\beta,2\beta,\ldots,1\}$ and $\varepsilon$ is a continuous random variable.
    \item If $\alpha > \alpha^*_{\overline{s}}$, then $\alpha \ge \alpha^*_{\overline{s}}+\beta$, and therefore $\alpha > \frac{\widehat v}{v_{\overline{s}}} + \frac{3\varepsilon}{2}$. 
\end{itemize}

We now address the two cases corresponding to $\alpha > \alpha^*_{\overline{s}}$ and $\alpha \le \alpha^*_{\overline{s}}$.

\noindent
\noindent\textbf{Case {\rm I}:} $\alpha>\alpha^*_{\overline{s}}$. As mentioned above, here we have $\alpha > \frac{\widehat{v}}{v_{\overline{s}}} + \frac{3\varepsilon}{2}$. For such an $\alpha$, consider the price vector $\mathbf{q} \coloneqq \mathbf p(\alpha)-\frac{\varepsilon}{2} \frac{v_{k_\alpha}}{ \widehat{v}} \mathbf e_{k_\alpha}$. We will show that, under $\mathbf{q}$, the value-to-price ratio of $k_\alpha$ is strictly smaller than that of every item in $R_1 \cup \left([n] \setminus R \right)$. Hence, all of these items are considered before $k_\alpha$ in the greedy construction of the optimal solution $\mathbf{x}^*(\mathbf{q})$ and $k_\alpha$ remains unpacked in the optimal solution.

Specifically, under $\mathbf{q}$, the value-to-price ratio of $k_\alpha$ is
\begin{align}
\frac{v_{k_\alpha}}{\min\{1, (\alpha-\varepsilon)v_{k_\alpha}/\widehat{v}\} - \varepsilon v_{k_\alpha}/2 \widehat{v}} = \max\left\{\frac{v_{k_\alpha}}{(1 - \varepsilon v_{k_\alpha}/2 \widehat{v})},\ \frac{\widehat{v}}{(\alpha - 3\varepsilon/2)}\right\} \label{eq:k-alpha-v2p}    
\end{align}
Note that the first term in the maximum satisfies
\begin{align}
    \frac{v_{k_\alpha}} {1-\frac{\varepsilon v_{k_\alpha}}{2\widehat v}} \le v_{k_\alpha} \left(1+\frac{\varepsilon v_{k_\alpha}}{\widehat v} \right) < v_{k_\alpha}+\delta \label{ineq:first-term}
\end{align}
Here, the first inequality follows from the bound $1/(1-x)\le 1+2x$ for $x\le 1/2$, while the second uses $\varepsilon \le \delta^8/(3n^4)$ together with $v_{k_\alpha}/\widehat v \le 1/\delta$.

\Cref{ineq:first-term,eq:k-alpha-v2p} give us the following upper bound on the value-to-price ratio of $k_\alpha$ under $\mathbf{q}$:
\begin{align}
    \max\left\{v_{k_\alpha}+\delta, \ \frac{\widehat{v}}{(\alpha - 3\varepsilon/2)}\right\} \label{eq:ub-v2p}
\end{align}
Now, for each item $r \in R_1$, the value-to-price ratio under $\mathbf{q}$ is
\begin{align}
    \frac{v_r}{ \min \left\{1, (\alpha - 2\varepsilon)\frac{v_r}{\widehat{v}} \right\} } = \max \left\{v_r, \ \frac{\widehat{v}}{\alpha - 2\varepsilon}  \right\} \label{eq:r-v2p}
\end{align}
Since $v_r \geq v_{k_\alpha} + \delta$ for every $r \in R_1$, comparing \Cref{eq:r-v2p,eq:ub-v2p} shows that the value-to-price of $k_\alpha$ is strictly less than that of $r$. 

Every item $s \in [n]\setminus R$ has price $1$ under $\mathbf q$. Consequently, its value-to-price ratio is $v_s$, which is at least $v_{\overline{s}}$, by the definition of $\overline{s}$. Consider the upper bound on the value-to-price ratio of $k_\alpha$ given in \Cref{eq:ub-v2p}. For the second term in the maximum, via the bound $\alpha > \frac{\widehat v}{v_{\overline{s}}} + \frac{3\varepsilon}{2}$, we obtain
\begin{align}
    \frac{\widehat v}{\alpha - {3\varepsilon}/{2}} < v_{\overline{s}} \label{ineq:too-many}
\end{align}
 In addition, using the inequality $v_{\overline{s}} \geq v_{k_\alpha} + \delta$, the upper bound in \Cref{eq:ub-v2p} we obtain that the value-to-price ratio of every item in $[n]\setminus R$ is strictly greater than that of $k_\alpha$. 

The above observations imply that all items in $R_1 \cup ([n]\setminus R)$ are considered before $k_\alpha$ in the greedy construction of the optimal solution $\mathbf x^*(\mathbf q)$. Furthermore, these items have the same prices under both the triad price vector $\mathbf p(\alpha)$ and the perturbed price vector $\mathbf q$. Therefore,
\begin{align*}
\sum_{i \in R_1 \cup ([n]\setminus R)} q_i &= \sum_{i \in R_1 \cup ([n]\setminus R)} p_i(\alpha)\\
&= \sum_{i \in [n]\setminus R} 1 + \sum_{i \in R_1} p_i(\alpha)\\
&\ge n-|R| + (B-n+|R|) \tag{$\sum_{r \in R_1} p_r(\alpha)\ge B-n+|R|$ by construction} \\
&=B.
\end{align*}

Hence, the knapsack capacity is already exhausted by the items in $R_1 \cup ([n]\setminus R)$ before $k_\alpha$ is considered. Consequently, no additional item can be packed, and in particular $k_\alpha$ remains unpacked under $\mathbf q = \mathbf p(\alpha) - \frac{\varepsilon}{2} \frac{v_{k_\alpha}}{\widehat v}\mathbf e_{k_\alpha}.$ Therefore, $x^*_{k_\alpha}\ \left( \mathbf p(\alpha) - \frac{\varepsilon}{2} \frac{v_{k_\alpha}}{\widehat v}\mathbf e_{k_\alpha} \right) = 0$, as claimed.

\medskip

\noindent\textbf{Case {\rm II}:} $\alpha\le \alpha^*_{\overline{s}}$. Here, as previously mentioned, we have $\alpha < \frac{\widehat{v}}{v_{\overline{s}}} + \frac{3\varepsilon}{2}$. We will show that, in this case, the value-to-price ratio of $k_\alpha$ is greater than that of $\overline{s}$ and every item in $R_2$. It will follow that the triad item is packed for such values of $\alpha$.

In the current case, consider the perturbed price vector $\mathbf q' \coloneqq \mathbf p(\alpha) - \frac{\varepsilon}{2} \frac{v_{k_\alpha}}{\widehat v}\mathbf e_{k_\alpha}$. The price of the triad item $k_\alpha$ under $\mathbf q'$ is 
\begin{align}
    \min \ \left\{ 1,\, \left(\alpha- \varepsilon \right)\frac{v_{k_\alpha}}{\widehat v} \right\}  \ - \ \frac{\varepsilon}{2} \frac{v_{k_\alpha}}{\widehat{v}}  = \left(\alpha-\frac{3\varepsilon}{2}\right) \frac{v_{k_\alpha}}{\widehat v} \label{eq:k-a-price}
\end{align}
Indeed,
\begin{align}
    \left(\alpha- \varepsilon \right) \frac{v_{k_\alpha}}{\widehat v} \le \left(\alpha^*_{\overline{s}}- \varepsilon \right) \frac{v_{k_\alpha}}{\widehat v} \le \left( \frac{\widehat v}{v_{\overline{s}}} + \frac{\varepsilon}{2} \right) \ \frac{v_{k_\alpha}}{\widehat v} = \frac{v_{k_\alpha}}{v_{\overline{s}}} + \frac{\varepsilon v_{k_\alpha}}{2 \widehat{v}} \leq 1 - \delta + \frac{\delta^7}{8n^4} < 1  \label{ineq:rep-derive}
\end{align}
The second inequality follows from the defining property of $\alpha^*_{\overline{s}}$. The final inequality uses $v_{k_\alpha} \leq v_{\overline{s}} - \delta$, $v_{\overline{s}} \leq 1$, $\varepsilon \leq \delta^8/(4n^4)$ and $v_{k_\alpha}/\widehat{v} \leq 1/\delta$.

Since the price of $k_\alpha$ under $\mathbf q'$ is given by \Cref{eq:k-a-price}, its value-to-price ratio under $\mathbf q'$ is $\widehat{v}/(\alpha - 3\varepsilon/2)$.

Furthermore, for bounding the prices of items $r' \in R_2$ note that in the current case we have 
{\allowdisplaybreaks
\begin{align*}
\alpha \ \frac{v_{r'}}{\widehat v} &\le \left( \frac{\widehat v}{v_{\overline{s}}} + \frac{3\varepsilon}{2} \right) \frac{v_{r'}}{\widehat v} \\
&= \frac{v_{r'}}{v_{\overline{s}}} + \frac{3\varepsilon v_{r'}}{2\widehat v}\\
&\le  \frac{v_{r'}}{v_{\overline{s}}} +\frac{3\delta^7}{8n^4} \tag{$\varepsilon \leq \delta^8/4n^4$ and $v_{r'}/\widehat{v} \leq 1/\delta$} \\
& \leq 1 - \delta +\frac{3\delta^7}{8n^4} \tag{$v_{r'} + \delta \leq v_{\overline{s}} \leq 1 $} \\
&<1.
\end{align*}
}
Hence, in the current case, every item $r' \in R_2$ has price $ \min\ \left\{1,\alpha\frac{v_{r'}}{\widehat v}\right\} = \alpha \frac{v_{r'}}{\widehat v}$, and therefore value-to-price ratio $\widehat v/\alpha$. 

The above-mentioned observations imply that the value-to-price ratio of $k_\alpha$ (specifically, $\widehat{v}/(\alpha - 3\varepsilon/2)$) is greater than that of every item in $R_2$.

Moreover, the inequality $\alpha < \frac{\widehat{v}}{v_{\overline{s}}} + \frac{3\varepsilon}{2}$ implies $\widehat{v}/(\alpha - 3\varepsilon/2) > v_{\overline{s}}$. That is, the value-to-price ratio of $k_\alpha$ is greater than that of $\overline{s}$ as well. 

Consequently, $k_\alpha$ precedes $\overline{s}$ and all items in $R_2$ in the greedy ordering used to construct the optimal solution $\mathbf x^*(\mathbf q')$. 

Let $H \coloneqq [n]\setminus \bigl(R \cup \{\overline{s}\}\bigr)$. We note that the total price of the items in $H \cup R_1$ does not exceed the capacity $B$:
\begin{align} 
\sum_{i \in R_1 \cup H} q'_i &= \sum_{i \in R_1 \cup H} p_i(\alpha) \nonumber \\ 
&\le (n-|R|-1) + \sum_{i \in R_1} p_i(\alpha) \nonumber \\ 
&< (n-|R|-1) + (B-n+|R|+1) \nonumber \\ 
&= B \label{eq:H-R1-budget}. 
\end{align}
The strict inequality follows from the minimality of $R_1$ in the construction of the triad partition (Definition~\ref{Definition:TriadPrices}).
Indeed, since $R_1$ is the smallest prefix whose total price reaches at least $B-n+|R|$, removing its last item yields a prefix whose total price is strictly below $B-n+|R|$. Since every item price is at most $1$, it follows that $\sum_{i \in R_1} p_i(\alpha) < (B-n+|R|)+1$.

Consider the optimal solution under the price vector $\mathbf q' = \mathbf p(\alpha) - \frac{\varepsilon}{2} \frac{v_{k_\alpha}}{\widehat v}\mathbf e_{k_\alpha}. $ Even if all items in $H \cup R_1$ are packed before $k_\alpha$, the knapsack capacity is not exhausted; see \Cref{eq:H-R1-budget}. Moreover, as shown above, neither the item $\overline{s}$ nor any item in $R_2$ is considered before $k_\alpha$ in the greedy ordering. Therefore, $k_\alpha$ must be packed when it is encountered, and hence $x^*_{k_\alpha}\!\left( \mathbf p(\alpha) - \frac{\varepsilon v_{k_\alpha}}{2\widehat v}\mathbf e_{k_\alpha} \right) > 0$, as claimed.

The lemma stands proved.
\end{proof}

We next prove that, via comparison queries and at a price vector $\mathbf p$, one can determine whether an item is packed in the optimal solution by applying a sufficiently small perturbation that preserves value-to-price ratios.

\begin{lemma}
\label{lemma:oracle-response-characterization-general-one-coordinate-change}
Let $\mathbf p$ be a price vector such that $\mathbf p^\top \mathbf{1}_n > B$. For any $i \in [n]$, let $\tau > 0$ be any sufficiently small scalar so that the 
ordering of the items by value-to-price ratio remains unchanged under $\mathbf q^{(i)} \coloneqq \mathbf p - \tau \mathbf e_i$. Then $\mathcal{O}_{\mathbf w}(\mathbf p, \mathbf q^{(i)}) \neq \mathsf 0$ if and only if $x^*_i(\mathbf p) > 0$.
\end{lemma}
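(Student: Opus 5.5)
Because $\tau$ is small enough that the value-to-price ordering used by the greedy algorithm is the same under $\mathbf p$ and $\mathbf q^{(i)}$, the plan is to compare the two greedy runs along that common ordering. Let $\pi$ be this ordering. Since $\mathbf p^\top\mathbf 1_n > B$, the greedy run under $\mathbf p$ exhausts the capacity. It therefore packs a prefix of $\pi$ integrally, then at most one item fractionally (the critical item $c$), and leaves every later item unpacked. We split according to where $i$ falls relative to $c$.

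\textbf{Case $x^*_i(\mathbf p)=0$.} Here $i$ comes after the point where the budget is exhausted under $\mathbf p$. Every item before $i$ in $\pi$ has the same price under $\mathbf q^{(i)}$, so the greedy run under $\mathbf q^{(i)}$ performs exactly the same steps up to $i$. It has used the full capacity $B$ before reaching $i$. Hence $\mathbf x^*(\mathbf q^{(i)})=\mathbf x^*(\mathbf p)$ and the oracle returns $\mathsf 0$.

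\textbf{Case $x^*_i(\mathbf p)=1$.} Lowering $p_i$ by $\tau$ frees $\tau$ units of capacity, and nothing before $c$ changes. The critical item $c$ then receives $x^*_c(\mathbf p)+\tau/p_c$, provided this stays below $1$. If it would exceed $1$, the surplus passes on to subsequent items. In every subcase the new solution dominates the old one coordinatewise and differs in at least one coordinate. Since $\mathbf w>0$, we get $\mathbf w^\top\mathbf x^*(\mathbf q^{(i)})>\mathbf w^\top\mathbf x^*(\mathbf p)$, so the oracle returns $-1\neq\mathsf 0$.

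There are two technical points to handle here. First, if the exhaustion point falls exactly at an item boundary, so that there is no fractional item, we let $c$ be the next item in $\pi$. Second, the increase must be strictly positive. Every item has a positive price and the total price exceeds $B$, so some item lies beyond the boundary and absorbs capacity. If $\tau$ is small this absorption is strictly positive.

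\textbf{Case $i=c$, with $0<x^*_i(\mathbf p)<1$.} The main obstacle is to pin down the smallness of $\tau$ needed in this case. The residual capacity $\rho$ available when $c$ is reached is the same under both price vectors, so $x_c$ changes from $\rho/p_c$ to $\rho/(p_c-\tau)$. This is a strict increase. For $\tau$ small, $\rho/(p_c-\tau)<1$, so $c$ stays fractional and nothing after $c$ is packed. The change is therefore a strictly positive multiple of $\mathbf e_c$, and the oracle again returns $-1\neq\mathsf 0$.

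We read ``sufficiently small'' as also guaranteeing that the fractional item does not become integral, and we state this explicitly. Combining the three cases gives the equivalence claimed in \Cref{lemma:oracle-response-characterization-general-one-coordinate-change}.
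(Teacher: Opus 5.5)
Your argument is the paper's. The prefix of the common greedy order before $i$ is packed identically under $\mathbf p$ and $\mathbf q^{(i)}$. Lowering $p_i$ then either changes nothing, when the budget is already exhausted before $i$, or pushes extra capacity forward. That gives $\mathbf x^*(\mathbf q^{(i)})\ge \mathbf x^*(\mathbf p)$ coordinatewise with a strict increase somewhere, and since $\mathbf w>0$ the oracle answer is nonzero. The paper merges your integral and fractional cases into the single case $x^*_i(\mathbf p)>0$.

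One thing to repair: do not strengthen the hypothesis in the fractional case.

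\begin{itemize}
\item The lemma's smallness condition is only order preservation. That is all that gets verified when the lemma is used in \Cref{corollary:orcale-response-at-alpha-lesser-than-optimal}, where $\tau=\frac{\varepsilon}{8}\frac{v_{k_\alpha}}{\widehat v}$ is fixed and $k_\alpha$ may be the fractional item. So your version of the lemma would not support that use.
\item The extra condition is also unnecessary. If $i=c$ with residual capacity $\rho$, then $x^*_c(\mathbf q^{(i)})=\min\{1,\rho/(p_c-\tau)\}>\rho/p_c=x^*_c(\mathbf p)$, because $x^*_c(\mathbf p)<1$.
\item Any leftover capacity $\rho-(p_c-\tau)$ then passes to later items exactly as in your integral case. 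So domination, with a strict increase in coordinate $c$, holds with no further restriction on $\tau$.
\end{itemize}

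A smaller point: you do not need every price to be positive, and prices may be $0$ in $[0,1]^n$. It suffices that the first item not fully packed under $\mathbf p$ has positive price. This is automatic, because zero-price items have infinite ratio and are always packed fully. The strict gain at that item also holds for every $\tau>0$, not only small ones.
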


\begin{proof}
Fix an item $i \in [n]$ and a scalar $\tau>0$. We prove the lemma by considering separately the cases $x_i^*(\mathbf{p})=0$ and $x_i^*(\mathbf{p}) >0$. 

First, suppose that $x_i^*(\mathbf p)=0$. Then, in the construction of the optimal solution $\mathbf x^*(\mathbf p)$, the capacity $B$ is exhausted before item $i$ is considered in the greedy ordering, namely while packing a subset of the items that precede $i$ in the value-to-price ordering. Since the prices of these items are identical under $\mathbf p$ and $\mathbf q^{(i)}$, the same subset of items is packed to the same extent in the construction of $\mathbf x^*(\mathbf q^{(i)})$. Consequently, $\mathbf x^*(\mathbf p)=\mathbf x^*(\mathbf q^{(i)})$, and therefore $\mathcal O_{\mathbf w}(\mathbf p,\mathbf q^{(i)})=\mathsf 0$. 

Next, we consider the case $x_i^*(\mathbf p)>0$. Here, we will show that $\mathbf x^*(\mathbf q^{(i)}) \geq \mathbf x^*(\mathbf p)$ component-wise, and that the inequality is strict in at least one coordinate. Since the value-to-price ordering is identical under $\mathbf p$ and $\mathbf q^{(i)}$, and the prices of all items preceding $i$ are unchanged, these items are packed to exactly the same extent in the constructions of $\mathbf x^*(\mathbf p)$ and $\mathbf x^*(\mathbf q^{(i)})$. Since the price of item $i$ is strictly lower under $\mathbf q^{(i)}$, either $i$ is packed to a greater extent (if it was initially packed fractionally) and the capacity is exhausted, or the remaining capacity after processing $i$ is strictly larger under $\mathbf q^{(i)}$ than under $\mathbf p$. It follows inductively that every subsequent item is packed to at least the same extent under $\mathbf q^{(i)}$ as under $\mathbf p$. Therefore, $\mathbf x^*(\mathbf q^{(i)}) \ge \mathbf x^*(\mathbf p)$ component-wise. Furthermore, the additional capacity available after processing item $i$ must eventually be allocated to some item since $\mathbf p^\top \mathbf{1}_n > B$, implying that the above inequality is strict in at least one coordinate.

The analysis of the two cases shows that, as claimed, $\mathcal{O}_{\mathbf w}(\mathbf p, \mathbf q^{(i)}) \neq \mathsf 0$ if and only if $x^*_i(\mathbf p) > 0$. The lemma stands proved.  
\end{proof}

Next, using \Cref{lemma:k-packed-only-at-alpha-lesser-than-optimal,lemma:oracle-response-characterization-general-one-coordinate-change}, we prove the following characterization of the oracle response under triad prices.
\begin{lemma}
\label{corollary:orcale-response-at-alpha-lesser-than-optimal}
Let set $R \subseteq [n]$ satisfy $\max_{r \in R} v_r < \min_{s \in [n] \setminus R} v_s$ and let $\overline{s} = \arg\min_{s\in[n]\setminus R} v_s$. Suppose that, for $\alpha \in \{\beta, 2\beta, \ldots, 1\}$, the triad price vector $\mathbf p(\alpha)$ exists, and let $k_\alpha$ denote the corresponding triad item. Define the price vectors
$$
\mathbf q^1(\alpha) \coloneqq \mathbf p(\alpha) - \frac{\varepsilon}{2} \frac{v_{k_\alpha}}{\widehat v}\,\mathbf e_{k_\alpha}, \quad \mathbf{q}^2(\alpha) \coloneqq \mathbf p(\alpha) - \frac{5\varepsilon}{8} \frac{v_{k_\alpha}}{\widehat v}\,\mathbf e_{k_\alpha}, \quad \mathbf{q}^3(\alpha) \coloneqq \mathbf p(\alpha) - \frac{6\varepsilon}{8} \frac{v_{k_\alpha}}{\widehat v}\,\mathbf e_{k_\alpha}~.
$$
The following statements hold:
\begin{itemize} 
\item[(i)]  If $\alpha > \alpha^*_{\overline{s}}+\beta$, then $\mathcal O_{\mathbf w}\!\left(\mathbf q^1(\alpha),\mathbf q^2(\alpha)\right) = \mathcal O_{\mathbf w}\ \left(\mathbf q^2(\alpha),\mathbf q^3(\alpha)\right) = \mathsf 0$. 
\item[(ii)] If $\alpha \le \alpha^*_{\overline{s}}$, then at least one of $\mathcal O_{\mathbf w}  \left(\mathbf q^1(\alpha),\mathbf q^2(\alpha)\right)$ and $\mathcal O_{\mathbf w}  \left(\mathbf q^2(\alpha),\mathbf q^3(\alpha)\right)$ is not equal to $\mathsf{0}$.
\end{itemize}
\end{lemma}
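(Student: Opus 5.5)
The plan is to reduce both parts to results already established: the greedy-ordering analysis in the proof of \Cref{lemma:k-packed-only-at-alpha-lesser-than-optimal} and the one-coordinate perturbation criterion of \Cref{lemma:oracle-response-characterization-general-one-coordinate-change}. Throughout, write $\mathbf q^{j}(\alpha)=\mathbf p(\alpha)-c_j\varepsilon\frac{v_{k_\alpha}}{\widehat v}\mathbf e_{k_\alpha}$ with $c_1=\tfrac12$, $c_2=\tfrac58$, $c_3=\tfrac68$. The three vectors differ only in the coordinate $k_\alpha$. Under $\mathbf q^j(\alpha)$, the value-to-price ratio of $k_\alpha$ is at most
\[
\max\Bigl\{v_{k_\alpha}+\delta,\ \frac{\widehat v}{\alpha-(1+c_j)\varepsilon}\Bigr\},
\]
exactly as in \eqref{eq:k-alpha-v2p}--\eqref{ineq:first-term}.

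\textbf{Part (i).} Assume $\alpha>\alpha^*_{\overline s}+\beta$. Since $\alpha$ lies on the $\beta$-grid, $\alpha\ge\alpha^*_{\overline s}+2\beta$, and hence $\alpha>\frac{\widehat v}{v_{\overline s}}+\frac{3\varepsilon}{2}+\beta$. Because $\beta\gg\varepsilon$, this gives $\alpha-\frac{7\varepsilon}{4}>\frac{\widehat v}{v_{\overline s}}$. Therefore, for every $j\in\{1,2,3\}$ the ratio of $k_\alpha$ is strictly below $v_{\overline s}$, and also below every ratio in $R_1$. This is the same comparison as in Case I of \Cref{lemma:k-packed-only-at-alpha-lesser-than-optimal}, now with margin $\tfrac74\varepsilon$ in place of $\tfrac32\varepsilon$. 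So under each $\mathbf q^j(\alpha)$, all of $R_1\cup([n]\setminus R)$ precede $k_\alpha$ in the greedy order. These items carry identical prices in all three vectors, and their total price is at least $B$. Hence the greedy run exhausts the capacity before reaching $k_\alpha$, and it does so identically in all three cases. It follows that $\mathbf x^*(\mathbf q^1(\alpha))=\mathbf x^*(\mathbf q^2(\alpha))=\mathbf x^*(\mathbf q^3(\alpha))$, and both oracle answers are $\mathsf 0$.

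\textbf{Part (ii).} Assume $\alpha\le\alpha^*_{\overline s}$. By \Cref{lemma:k-packed-only-at-alpha-lesser-than-optimal}, $x^*_{k_\alpha}(\mathbf q^1(\alpha))>0$. Lowering the price of $k_\alpha$ further only moves it earlier in the greedy order and leaves more room for it, so $x^*_{k_\alpha}(\mathbf q^2(\alpha))>0$ as well. The hypothesis $\mathbf p^\top\mathbf 1_n>B$ of \Cref{lemma:oracle-response-characterization-general-one-coordinate-change} holds for $\mathbf q^1$ and $\mathbf q^2$, because $[n]\setminus R$ and $R_1$ already contribute total price at least $B$ and $k_\alpha$ has positive price. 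I will apply that lemma to the pair $(\mathbf q^1,\mathbf q^2)$ with $\tau=\tfrac18\varepsilon\frac{v_{k_\alpha}}{\widehat v}$, or to the pair $(\mathbf q^2,\mathbf q^3)$ with the same $\tau$. In either case it yields a nonzero oracle answer, provided the value-to-price ordering does not change between the two vectors of the pair.

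\textbf{Main obstacle.} The step requiring care is showing that the ordering is preserved in at least one of the two moves. By \eqref{eq:k-a-price}, in the current case the ratio of $k_\alpha$ moves through the windows
\[
W_1=\Bigl(\tfrac{\widehat v}{\alpha-3\varepsilon/2},\tfrac{\widehat v}{\alpha-13\varepsilon/8}\Bigr],
\qquad
W_2=\Bigl(\tfrac{\widehat v}{\alpha-13\varepsilon/8},\tfrac{\widehat v}{\alpha-7\varepsilon/4}\Bigr].
\]
The ratios of the other items take only the following forms: $\widehat v/(\alpha-2\varepsilon)$ for $R_1$, which lies strictly above both windows; $\widehat v/\alpha$ for $R_2$, which lies strictly below both windows (uncapped, as shown in Case II); or $v_i$ for items whose price is $1$. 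The last group consists of items of $[n]\setminus R$ and capped items of $R$, and their values lie on the $\delta$-grid. I will bound the total width of $W_1\cup W_2$ by roughly $\varepsilon\widehat v/\alpha^2\le\varepsilon/\delta^2\ll\delta$. Consequently at most one grid value $v_i$ lies in $W_1\cup W_2$, so at least one of the two windows contains no competing ratio. Since $\varepsilon$ is continuous, almost surely no window endpoint equals any $v_i$, so no ties arise. Choosing the pair corresponding to that empty window keeps the ordering unchanged, and \Cref{lemma:oracle-response-characterization-general-one-coordinate-change} gives the required nonzero response.
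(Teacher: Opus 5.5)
Your proposal is correct and follows essentially the paper's route. Part (i) is the Case I argument rerun with margin $\tfrac74\varepsilon$. Part (ii) uses the fact that the $k_\alpha$-ratio windows have total width below $\delta$, so at most one grid value can block one of the two moves; the one-coordinate-perturbation lemma then finishes. Your check of $\mathbf q^\top\mathbf 1_n>B$, via $R_1\cup([n]\setminus R)$ plus the positive price of $k_\alpha$, is slightly cleaner than the paper's argument through $L$.

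One small fix: the width bound cannot use $\alpha\gtrsim\delta$, since $\alpha$ can be as small as $\beta$. Bound it instead by $\frac{\varepsilon/4}{(\beta-7\varepsilon/4)^2}\le\frac{9\delta^2}{4}<\delta$, as the paper does; the conclusion is unchanged.
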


\begin{proof}
The price vectors $\mathbf q^1(\alpha)$, $\mathbf q^2(\alpha)$, and $\mathbf q^3(\alpha)$ differ from $\mathbf p(\alpha)$ only in the coordinate corresponding to $k_\alpha$. Specifically, for every $j \in [n]\setminus\{k_\alpha\}$ and every $t \in \{1,2,3\}$, we have $q_j^t(\alpha)=p_j(\alpha)$, while $ q_{k_\alpha}^3(\alpha) < q_{k_\alpha}^2(\alpha) < q_{k_\alpha}^1(\alpha) < p_{k_\alpha}(\alpha). $

The proof is divided into two parts, corresponding to statements~(i) and~(ii) in the lemma.

\medskip
\noindent
{\bf Part (i)}: $\alpha > \alpha^*_{\overline{s}} + \beta$. In this regime, we show that, under each of the price vectors $\mathbf q^1(\alpha)$, $\mathbf q^2(\alpha)$, and $\mathbf q^3(\alpha)$, the value-to-price ratio of $k_\alpha$ is strictly smaller than that of every item in $([n]\setminus R)\cup R_1$. Furthermore, by \Cref{lemma:k-packed-only-at-alpha-lesser-than-optimal}, we have $x^*_{k_\alpha}(\mathbf q^1(\alpha))=0$. That is, in the optimal solution under $\mathbf q^1(\alpha)$, the capacity is exhausted by items preceding $k_\alpha$ in the value-to-price ordering. Since the prices of all items other than $k_\alpha$ are identical across the three price vectors, the same conclusion holds under $\mathbf q^2(\alpha)$ and $\mathbf q^3(\alpha)$ as well. Consequently, $\mathbf x^*(\mathbf q^1(\alpha)) = \mathbf x^*(\mathbf q^2(\alpha)) = \mathbf x^*(\mathbf q^3(\alpha))$, and therefore $\mathcal O_{\mathbf w}(\mathbf q^1(\alpha),\mathbf q^2(\alpha)) = \mathcal O_{\mathbf w}(\mathbf q^2(\alpha),\mathbf q^3(\alpha)) = \mathsf 0$.

Here, the fact that the value-to-price ratio of $k_\alpha$ is strictly smaller than that of every item in $([n]\setminus R)\cup R_1$ follows from arguments analogous to those used in Case~I of the proof of \Cref{lemma:k-packed-only-at-alpha-lesser-than-optimal}. Specifically, under the three price vectors the value-to-price ratio of $k_\alpha$ is at least 
\begin{align}
\frac{v_{k_\alpha}} { \min \left\{1, (\alpha-\varepsilon) \frac{v_{k_\alpha}}{\widehat{v}} \right\} - \frac{6\varepsilon}{8} \frac{v_{k_\alpha}}{\widehat{v}} } = \max\left\{\frac{v_{k_\alpha}}{ \left(1 - \frac{6\varepsilon}{8} \frac{v_{k_\alpha}}{\widehat{v}}\right) },\ \frac{\widehat{v}}{ \left(\alpha - \frac{7\varepsilon}{4} \right)}\right\}   
\end{align}
The first term in the maximum satisfies
   $\frac{v_{k_\alpha}} {1-\frac{6\varepsilon}{8} \frac{v_{k_\alpha}}{2\widehat v}} \le v_{k_\alpha} \left(1+\frac{3\varepsilon}{2} \frac{v_{k_\alpha}}{\widehat v} \right) < v_{k_\alpha}+\delta$. 
Hence, we obtain the following upper bound on the value-to-price ratio of $k_\alpha$ under the three price vectors: 
\begin{align}
    \max\left\{v_{k_\alpha}+\delta, \ \frac{\widehat{v}}{ \left(\alpha - \frac{7\varepsilon}{4} \right)} \right\} \label{eq:ub-v2pa}
\end{align}

For every item $r \in R_1$, the price is identical under $\mathbf q^1(\alpha)$, $\mathbf q^2(\alpha)$, and $\mathbf q^3(\alpha)$. Hence, its value-to-price ratio is the same under all three price vectors and is given by
\begin{align}
\frac{v_r}{\min  \left\{1,\;(\alpha-2\varepsilon)\frac{v_r}{\widehat v}\right\}} = \max \left\{v_r,\;\frac{\widehat v}{\alpha-2\varepsilon}\right\} \label{eq:r-v2pa}
\end{align}

Since $v_r \geq v_{k_\alpha} + \delta$ for every $r \in R_1$, comparing \Cref{eq:r-v2pa,eq:ub-v2pa} shows that the value-to-price of $k_\alpha$ is strictly less than that of $r$ under the three price vectors. 

Every item $s \in [n]\setminus R$ has price $1$ under $\mathbf q^1(\alpha)$, $\mathbf q^2(\alpha)$, and $\mathbf q^3(\alpha)$. Consequently, its value-to-price ratio is $v_s$, which is at least $v_{\overline{s}}$, by the definition of $\overline{s}$. Consider the upper bound on the value-to-price ratio of $k_\alpha$ given in \Cref{eq:ub-v2pa}. In the current case, since $\alpha \geq \alpha^*_{\overline{s}} + 2\beta$, we have $ \alpha > \frac{\widehat v}{v_{\overline{s}}} + \frac{3\varepsilon}{2} + \beta$. Hence, $\alpha$ exceeds $\frac{\widehat v}{v_{\overline{s}}}+\frac{3\varepsilon}{2}$ by a margin of at least $\beta$. Since $\varepsilon \ll \beta$, it follows that $\alpha > \frac{\widehat v}{v_{\overline{s}}} + \frac{7\varepsilon}{4}$, and therefore $\frac{\widehat v}{\alpha - {7\varepsilon}/{4}} < v_{\overline{s}}$. Combining this inequality with the upper bound in \Cref{eq:ub-v2pa} and the inequality $v_{\overline{s}} \geq v_{k_\alpha}+\delta$, we conclude that the value-to-price ratio of every item in $[n]\setminus R$ is strictly greater than that of $k_\alpha$ under each of the three price vectors.

Therefore, under each of the three price vectors, the items in $([n]\setminus R)\cup R_1$ are packed before $k_\alpha$ and exhaust the capacity, so that $k_\alpha$ is never reached in the construction of the optimal solution. Since the prices of these items are identical across the three price vectors, the same subset of items is packed to the same extent in all three cases. Hence, $ \mathbf x^*(\mathbf q^1(\alpha)) = \mathbf x^*(\mathbf q^2(\alpha)) = \mathbf x^*(\mathbf q^3(\alpha)), $ and we have $ \mathcal O_{\mathbf w}(\mathbf q^1(\alpha),\mathbf q^2(\alpha)) = \mathcal O_{\mathbf w}(\mathbf q^2(\alpha),\mathbf q^3(\alpha)) = \mathsf 0. $

\medskip
\noindent
{\bf Part (ii)}: $\alpha \leq \alpha^*_{\overline{s}}$. We fist note that, in this regime, the price of $k_\alpha$ is not capped at $1$ under any of the three price vectors. In particular, as in \Cref{ineq:rep-derive}, we have $(\alpha-\varepsilon)\frac{v_{k_\alpha}}{\widehat v}<1$. Hence, the price of $k_\alpha$ under $\mathbf q^1(\alpha)$, $\mathbf q^2(\alpha)$, and $\mathbf q^3(\alpha)$ is, respectively, $\left(\alpha-\frac{3\varepsilon}{2}\right)\frac{v_{k_\alpha}}{\widehat v}$, \ $\left(\alpha-\frac{13\varepsilon}{8}\right)\frac{v_{k_\alpha}}{\widehat v}$, and $\left(\alpha-\frac{14\varepsilon}{8}\right)\frac{v_{k_\alpha}}{\widehat v}$. Hence, the value-to-price ratio of $k_\alpha$ is strictly increasing from $\mathbf q^1(\alpha)$ to $\mathbf q^3(\alpha)$:
\begin{align}
\frac{\widehat v}{\alpha-\frac{3\varepsilon}{2}} < \frac{\widehat v}{\alpha-\frac{13\varepsilon}{8}} < \frac{\widehat v}{\alpha-\frac{14\varepsilon}{8}} \label{ineq:ratios}
\end{align} 
Also, note that these ratios are within $\delta$ of each other. Indeed, 
\begin{align}
    \frac{\widehat{v}}{\alpha - 14\varepsilon/8} - \frac{\widehat{v}}{\alpha - 3\varepsilon/2} = \widehat{v} \left( \frac{\varepsilon/4}{ (\alpha - 14\varepsilon/8) (\alpha - 3\varepsilon/2)} \right)
    \leq \frac{\delta^8/16n^4}{(\beta - 7\delta^8/16n^4)^2} 
    \leq 9\delta^2/4
    < \delta \label{ineq:close-delta}
\end{align}
Here, we use the bounds $\varepsilon \leq \frac{\delta^8}{4n^4}$ and $\alpha \geq \beta = \delta^3/3n^2$.

Towards completing the proof, and in preparation for invoking \Cref{lemma:oracle-response-characterization-general-one-coordinate-change}, we establish the claim below.
\begin{claim}
\label{claim:one-works}
The value-to-price ordering of the items remains unchanged either between $\mathbf q^1(\alpha)$ and $\mathbf q^2(\alpha)$ or between $\mathbf q^2(\alpha)$ and $\mathbf q^3(\alpha)$.
\end{claim}
\begin{proof}
Considering the value-to-price ratio of $k_\alpha$ in $\mathbf q^3(\alpha)$ we define $ L\coloneqq \left\{ s \in [n]\setminus R : v_s < \frac{\widehat v}{\alpha-\frac{14\varepsilon}{8}} \right\}$; since $ v_{\overline{s}} < \frac{\widehat v}{\alpha-\frac{14\varepsilon}{8}}$, we have $\overline{s}\in L$, and therefore $L \subseteq [n] \setminus R$ is nonempty. 

We analyze order preservation by separately considering the following four subsets of items: (1) $R_2$, (2) $R_1$, (3) $[n]\setminus (R\cup L)$, and (4) $L$. We first show that the value-to-price ordering is identical under $\mathbf q^1(\alpha)$, $\mathbf q^2(\alpha)$, and $\mathbf q^3(\alpha)$ for all items outside $L$.
\medskip

\noindent
(1) For the items in $R_2$, Case~II of the proof of \Cref{lemma:k-packed-only-at-alpha-lesser-than-optimal} shows that their value-to-price ratios are smaller than that of $k_\alpha$ under $\mathbf q^1(\alpha)$. Moreover, the value-to-price ratios of the items in $R_2$ remain unchanged across the three price vectors, whereas that of $k_\alpha$ increases. Hence, every item in $R_2$ follows $k_\alpha$ in the value-to-price ordering under all three price vectors, and the relative ordering among the items in $R_2$ is preserved.

\noindent
(2) The value-to-price ratio of every item in $R_1$ is $ \frac{\widehat v}{\alpha-2\varepsilon}, $ which, by \Cref{ineq:ratios}, exceeds the value-to-price ratios of $k_\alpha$ under each of the three price vectors. Hence, every item in $R_1$ precedes $k_\alpha$ in the value-to-price ordering under all three price vectors. Moreover, since the value-to-price ratios of the items in $R_1$ remain unchanged, their relative ordering is preserved as well.

\noindent
(3) For every item $s \in [n]\setminus (R\cup L)$, the price is equal to $1$ under each of the three price vectors, and hence its value-to-price ratio is equal to $v_s$. By the definition of $L$, we have
$v_s \geq \frac{\widehat v}{\alpha-\frac{14\varepsilon}{8}},$ which exceeds the value-to-price ratio of $k_\alpha$ under each of the three price vectors. Hence, every item in $[n]\setminus (R\cup L)$ precedes $k_\alpha$ in the value-to-price ordering under all three price vectors. Moreover, since the value-to-price ratios of these items remain unchanged, their relative ordering is preserved as well.

\noindent (4) It remains to analyze the set $L$. Let $ \ell^* \coloneqq \arg\max_{\ell\in L} v_\ell$. Note that every item in $L$ has price $1$ under each of the three price vectors, and hence its value-to-price ratio is equal to its value. We proceed by considering the possible relationships between $v_{\ell^*}$ and the value-to-price ratio of $k_\alpha$ given in \Cref{ineq:ratios}. First, suppose that $ v_{\ell^*} < \frac{\widehat v}{\alpha-\frac{3\varepsilon}{2}}$. In this case, the value-to-price ratio of every item in $L$ is smaller than that of $k_\alpha$ under all three price vectors. Hence, every item in $L$ follows $k_\alpha$ in the value-to-price ordering, and the relative ordering among the items in $L$ is preserved.

On the other hand, by the definition of $L$, the only remaining possibility is $ \frac{\widehat v}{\alpha-\frac{3\varepsilon}{2}} < v_{\ell^*} < \frac{\widehat v}{\alpha-\frac{14\varepsilon}{8}}$.\footnote{All the interval containment inequalities are strict almost surely, since $\varepsilon$ is a continuous random variable.} We further distinguish between the following two subcases: \begin{enumerate}[nosep]
\item[(i)] $\frac{\widehat v}{\alpha-\frac{3\varepsilon}{2}} < v_{\ell^*} < \frac{\widehat v}{\alpha-\frac{13\varepsilon}{8}}$,  
\item[(ii)] $\frac{\widehat v}{\alpha-\frac{13\varepsilon}{8}} < v_{\ell^*} < \frac{\widehat v}{\alpha-\frac{14\varepsilon}{8}}$. 
\end{enumerate}
In both subcases, since the ratios are less than $\delta$ apart (see \Cref{ineq:close-delta}) and item values lie on the $\delta$-grid, every item $\ell \in L\setminus\{\ell^*\}$ satisfies
$ v_\ell < \frac{\widehat v}{\alpha-\frac{3\varepsilon}{2}}$. Therefore, the relative ordering of the items in $L\setminus\{\ell^*\}$ is preserved across the three price vectors. The only possible change in the value-to-price ordering among the vectors hence only involves the relative positions of $k_\alpha$ and $\ell^*$. In subcase~(i), their ordering is the same under $\mathbf q^2(\alpha)$ and $\mathbf q^3(\alpha)$, whereas in subcase~(ii), it is the same under $\mathbf q^1(\alpha)$ and $\mathbf q^2(\alpha)$. Consequently, the relative ordering of all items in $L$ is preserved between one of the two consecutive pairs of price vectors.

Combining this with the analysis of the subsets in (1)--(3), we conclude that the value-to-price ordering of all items remains unchanged either between $\mathbf q^1(\alpha)$ and $\mathbf q^2(\alpha)$ or between $\mathbf q^2(\alpha)$ and $\mathbf q^3(\alpha)$. This establishes the claim.
\end{proof}

Before applying \Cref{lemma:oracle-response-characterization-general-one-coordinate-change}, we verify that, in addition to \Cref{claim:one-works}, each of the three price vectors satisfies
$(\mathbf q^t(\alpha))^\top \mathbf 1_n > B$. Towards this end, consider the (nonempty) set $L$ defined in the proof of \Cref{claim:one-works}. Recall that, under each price vector $\mathbf q^t(\alpha)$, every item in $R_1 \cup \bigl([n]\setminus (R\cup L)\bigr)$ precedes $k_\alpha$ in the value-to-price ordering. The budget remaining after packing these items is
{\allowdisplaybreaks
\begin{align*}
    B - \sum_{i \in R_1 \cup ([n] \setminus (R \cup L))} q^t_i(\alpha)
    &= B - \left(\sum_{i \in [n] \setminus (R \cup L)} 1 + \sum_{r \in R_1} p_r(\alpha) \right)\\
    &\leq B - \left( n - |L| - |R| + (B - n + |R|) \right)
    \tag{\Cref{Definition:TriadPrices}}\\
    & = |L|\\
    & < \sum_{j \in \{k_\alpha\}\cup L\cup R_2} q^t_j(\alpha).
\end{align*}
}
Therefore, each of the three price vectors satisfies $(\mathbf q^t(\alpha))^\top \mathbf 1_n > B$.

Finally, \Cref{lemma:k-packed-only-at-alpha-lesser-than-optimal} implies that $ x^*_{k_\alpha}(\mathbf q^1(\alpha)) > 0. $ Since the price of $k_\alpha$ is strictly smaller under $\mathbf q^2(\alpha)$ and $\mathbf q^3(\alpha)$, it follows that $ x^*_{k_\alpha}(\mathbf q^t(\alpha)) > 0 $ for every $t \in \{1,2,3\}$.

Overall, for whichever of the two pairs, $(\mathbf q^1(\alpha), \mathbf q^2(\alpha)) $ or $ (\mathbf q^2(\alpha), \mathbf q^3(\alpha))$, has identical value-to-price ordering (\Cref{claim:one-works}), all the conditions of \Cref{lemma:oracle-response-characterization-general-one-coordinate-change} hold. Therefore, by \Cref{lemma:oracle-response-characterization-general-one-coordinate-change}, either 
 $\mathcal O_{\mathbf w} \left(\mathbf q^1(\alpha),\mathbf q^2(\alpha)\right) \neq \mathsf{0}$ or $\mathcal O_{\mathbf w} \left(\mathbf q^2(\alpha),\mathbf q^3(\alpha)\right) \neq \mathsf{0}$. This proves statement (ii) of the lemma and completes the proof. 
\end{proof}

\subsubsection{Identifying the item whose scaled value is recovered}
The previous subsection established structural results culminating in a characterization of the oracle response for different values of $\alpha \in \{\beta,2\beta,\ldots,1\}$ (\Cref{corollary:orcale-response-at-alpha-lesser-than-optimal}). This characterization allows us to recover $\alpha^*_{\overline{s}}$ via a binary search using comparison queries and, consequently, to determine the ratio $\widehat v / v_{\overline{s}}$. However, the identity of the item $\overline{s}$ remains unknown.  To identify $\overline{s}$, we first establish \Cref{lemma:item-s*-can-not-be-integrally-packed-under-p-hat} and subsequently derive an oracle characterization that enables us to distinguish $\overline{s}$ using comparison queries.
 
\begin{lemma}
\label{lemma:item-s*-can-not-be-integrally-packed-under-p-hat}
Let $R \subseteq [n]$ satisfy $ \max_{r \in R} v_r < \min_{s \in [n]\setminus R} v_s$ and $|R| \geq n - B - 1$. Define the price vector $\widehat{\mathbf p}$ as follows
\begin{align*}
\widehat{p}_i = \begin{cases}
    1, & i \in [n] \setminus R,\\
    (\alpha^*_{\overline{s}} - 2 \varepsilon) v_i / \widehat{v}, & i \in R~.
\end{cases}
\end{align*}
Then, in the optimal solution under the price vector $\widehat{\mathbf p} \in [0,1]^n$, every item in $[n]\setminus (R \cup \{\overline{s}\})$ is packed integrally (i.e., packed to extent $1$) and item  $\overline{s}$ is unpacked (i.e., $\mathbf{x}^*_{\overline{s}} ( \widehat{\mathbf p} ) = 0$).
\end{lemma}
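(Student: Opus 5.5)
The plan is to pin down the greedy value-to-price ordering under $\widehat{\mathbf p}$ and then do a budget count. The target ordering puts $H \coloneqq [n]\setminus(R\cup\{\overline{s}\})$ first, then all of $R$, and $\overline{s}$ last. Every $r\in R$ has value-to-price ratio $\widehat v/(\alpha^*_{\overline{s}}-2\varepsilon)$, and every item outside $R$ has ratio equal to its value. First I will check that $\widehat{\mathbf p}\in[0,1]^n$. For $i\in R$ we have $v_i\le v_{\overline{s}}-\delta$, and the defining property of $\alpha^*_{\overline{s}}$ (\Cref{lemma:unique-alpha*-for-S-switching}) gives $\alpha^*_{\overline{s}}-2\varepsilon\le \widehat v/v_{\overline{s}}-\varepsilon/2$. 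Hence $\widehat p_i\le v_i/v_{\overline{s}}<1$, exactly as in \Cref{ineq:rep-derive}, and positivity is clear since $\alpha^*_{\overline{s}}\ge\beta\gg\varepsilon$.

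\textbf{Step 1: ordering.}
\begin{itemize}
\item \emph{$R$ above $\overline{s}$.} The bound $\alpha^*_{\overline{s}}-2\varepsilon<\widehat v/v_{\overline{s}}$ gives $\widehat v/(\alpha^*_{\overline{s}}-2\varepsilon)>v_{\overline{s}}$, so every item of $R$ strictly precedes $\overline{s}$.
\item \emph{$H$ above $R$.} From the other side, $\alpha^*_{\overline{s}}-2\varepsilon>\widehat v/v_{\overline{s}}-\beta-\varepsilon/2$. For $s'\in H$ we have $v_{s'}>v_{\overline{s}}$ with both on the $\delta$-grid. Arguing as in \Cref{lemma:unique-p/q-in-interval-of-length-delta^2}, the ratios satisfy $\widehat v/v_{s'}\le \widehat v/v_{\overline{s}}-\delta^2$. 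Since $\beta+\varepsilon<\delta^2$, we get $\alpha^*_{\overline{s}}-2\varepsilon>\widehat v/v_{s'}$, i.e. $v_{s'}>\widehat v/(\alpha^*_{\overline{s}}-2\varepsilon)$. So every item of $H$ strictly precedes all of $R$.
\end{itemize}

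\textbf{Step 2: $H$ is packed integrally.} Items of $H$ have price $1$ and $|H|=n-|R|-1\le B$ by the hypothesis $|R|\ge n-B-1$. So greedy packs all of $H$ integrally, leaving residual capacity $B-n+|R|+1$.

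\textbf{Step 3 (main obstacle): $R$ exhausts the residual capacity.} I must show $\sum_{i\in R}\widehat p_i> B-n+|R|+1$; then capacity runs out before $\overline{s}$ is reached, giving $x^*_{\overline{s}}(\widehat{\mathbf p})=0$.
\begin{itemize}
\item \emph{Per-item lower bound.} Using $\alpha^*_{\overline{s}}-2\varepsilon\ge \widehat v/v_{\overline{s}}-\beta-\varepsilon/2$ and $v_i/\widehat v\le 1/\delta$, each $\widehat p_i\ge v_i/v_{\overline{s}}-(\beta+\varepsilon)/\delta$. Moreover $v_i/v_{\overline{s}}\ge \widehat v/\max_j v_j=:\rho$.
\item \emph{Summing.} This gives $\sum_{i\in R}\widehat p_i\ge |R|\rho-|R|(\beta+\varepsilon)/\delta \ge |R|\rho-\tfrac{\delta^2}{3n}-o(\delta^2/n)$.
\item \emph{Upper bound on the residual.} The capacity assumption of \Cref{definition:knapsack-with-comparison} gives
\[
B-n+|R|+1\le (n-1)\rho-(n-1-|R|)-\tfrac{(n-1)\delta^2}{n^2}=|R|\rho-(n-1-|R|)(1-\rho)-\tfrac{(n-1)\delta^2}{n^2}.
\]
Since $(n-1-|R|)(1-\rho)\ge 0$, this is at most $|R|\rho-\tfrac{(n-1)\delta^2}{n^2}$.
\item \emph{Comparing the slacks.} For $n\ge 2$ we have $\tfrac{(n-1)\delta^2}{n^2}\ge\tfrac{\delta^2}{2n}$, which strictly exceeds $\tfrac{\delta^2}{3n}$ plus the $\varepsilon$ term. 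This yields the strict inequality.
\end{itemize}
The delicate part is exactly this slack bookkeeping: the $\delta^2/n^2$ margin built into the capacity assumption must dominate the $\beta/\delta$ rounding loss from $\alpha^*_{\overline{s}}$. I would verify the constants carefully there. If they fall short, I would sharpen the per-item bound by using $v_i/\widehat v$ rather than the crude $1/\delta$.
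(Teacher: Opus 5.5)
Your proposal is correct and follows the same route as the paper: check $\widehat{\mathbf p}\in[0,1]^n$, establish the greedy order (first $[n]\setminus(R\cup\{\overline{s}\})$, then $R$, then $\overline{s}$), pack the first group integrally using $n-|R|-1\le B$, and use the $\delta^2/n^2$ slack in the capacity bound to show that $R$ exhausts the remaining budget. Your constants do close, so the fallback in your last paragraph is not needed. The only cosmetic difference is that you separate $[n]\setminus(R\cup\{\overline{s}\})$ from $R$ via the $\delta^2$-spacing of the ratios $\widehat v/v_s$, whereas the paper bounds $\widehat v/(\alpha^*_{\overline{s}}-2\varepsilon)-v_{\overline{s}}<2\delta^2<\delta$ and then uses the $\delta$-spacing of the values themselves.
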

\begin{proof}
Note that $\widehat{\mathbf p}$ is a valid price vector, i.e., all of its components lie in the interval $[0,1]$. Indeed, for any $r\in R$, the definition of $\alpha^*_{\overline{s}}$ gives us 
\begin{align} 
(\alpha^*_{\overline{s}}-2\varepsilon)\frac{v_r}{\widehat v}  \le \left(\frac{\widehat v}{v_{\overline{s}}}-\frac{\varepsilon}{2}\right) \frac{v_r}{\widehat v} = \frac{v_r}{v_{\overline{s}}} - \frac{\varepsilon v_r}{2\widehat v} \le 1-\delta+\frac{\delta^7}{8n^4} <1 \label{ineq:price-cap}
\end{align} 
The second inequality uses $v_r \leq v_{\overline{s}}-\delta$, $v_{\overline{s}}\leq 1$, $\varepsilon \leq \delta^8/(4n^4)$, and $v_r/\widehat v \leq 1/\delta$. 

We will show that, under the price vector $\widehat{\mathbf p}$, the value-to-price ordering places all items in $[n]\setminus (R\cup\{\overline{s}\})$ ahead of the items in $R$, which in turn precede $\overline{s}$. That is, the greedy construction of the optimal solution considers every item in $[n]\setminus (R\cup\{\overline{s}\})$ then $R$ before reaching $\overline{s}$. This ordering enables us to establish that all items in $[n]\setminus (R\cup\{\overline{s}\})$ are packed integrally in the optimal solution, while $\overline{s}$ remains unpacked.

To obtain that the value-to-price ratio of $\overline{s}$, namely $v_{\overline{s}}$, is strictly smaller than that of every item in $R$, note that the definition of $\alpha^*_{\overline{s}}$ gives us $ \frac{\widehat v}{v_{\overline{s}}} > \alpha^*_{\overline{s}}-\frac{3\varepsilon}{2} \ge \alpha^*_{\overline{s}}-2\varepsilon$. Rearranging yields $v_{\overline{s}} < \frac{\widehat v}{\alpha^*_{\overline{s}}-2\varepsilon}$, which is precisely the value-to-price ratio of each item in $R$.

We next show that the items in $[n]\setminus (R\cup\{\overline{s}\})$ have higher value-to-price ratio than the items in $R$. Since every item in $R$ has value-to-price ratio $\frac{\widehat{v}}{\alpha^*_{\overline{s}} - 2\varepsilon }$, it suffices to show that $\frac{\widehat{v}}{\alpha^*_{\overline{s}} - 2\varepsilon } < \min_{s \in [n] \setminus (R \cup \{\overline{s}\})} v_s$. Towards this, note that 
{\allowdisplaybreaks
\begin{align*}
\frac{\widehat{v}}{\alpha^*_{\overline{s}} - 2\varepsilon} - v_{\overline{s}} &<
\frac{\widehat{v}}{\alpha^*_{\overline{s}} - 2\varepsilon} - \frac{\widehat{v}}{\alpha^*_{\overline{s}} + \beta - 3\varepsilon/2} \tag{$\frac{\widehat{v}}{\alpha^*_{\overline{s}} + \beta - 3\varepsilon/2} < v_{\overline{s}}$ by the definition of $\alpha^*_{\overline{s}}$} \\
&= \widehat{v} \left( \frac{\beta + \varepsilon/2}{\left(\alpha^*_{\overline{s}} - 2\varepsilon \right) \left( \alpha^*_{\overline{s}} + \beta - 3\varepsilon/2 \right)}  \right) \\
&< v_{\overline{s}} \  \frac{\beta + \varepsilon/2}{\left(\alpha^*_{\overline{s}} - 2\varepsilon \right)} \tag{$\frac{\widehat{v}}{\alpha^*_{\overline{s}} + \beta - 3\varepsilon/2} < v_{\overline{s}}$} \\
&< v_{\overline{s}} \  \frac{\beta + \varepsilon/2}{(\widehat{v}/v_{\overline{s}}) - \beta - \varepsilon/2} \tag{$\frac{\widehat{v}}{v_{\overline{s}}} - \beta - \frac{\varepsilon}{2} < \alpha^*_{\overline{s}}  - 2\varepsilon$} \\
&\le \frac{\beta + \varepsilon/2}{\delta - \beta - \varepsilon/2} \tag{$\widehat{v} \ge \delta$ and $v_{\overline{s}} \le 1$} \\
&\le 2\delta^2 \tag{$\beta = \frac{\delta^3}{3n^2}$ and $\varepsilon \le \frac{\delta^8}{4n^4}$}
\end{align*}
}
Since the item values lie on a $\delta$-grid, every item in $[n]\setminus (R\cup\{\overline{s}\})$ has value at least $v_{\overline{s}}+\delta$. Hence, the bound $ \frac{\widehat v}{\alpha^*_{\overline{s}}-2\varepsilon}-v_{\overline{s}} < 2\delta^2 < \delta$ leads to the stated claim: $ \frac{\widehat v}{\alpha^*_{\overline{s}}-2\varepsilon} < v_{\overline{s}}+\delta \le \min_{s\in[n]\setminus (R\cup\{\overline{s}\})} v_s.$

Therefore, in the greedy construction of the optimal solution, all items in $[n]\setminus (R\cup\{\overline{s}\})$ are considered first, followed by the items in $R$, and only afterwards is $\overline{s}$ considered. Since $|[n]\setminus (R\cup\{\overline{s}\})| = n-|R|-1$ and the knapsack capacity satisfies $B \ge n-|R|-1$, all items in $[n]\setminus (R\cup\{\overline{s}\})$ are packed integrally in the optimal solution.

Lastly, the inequalities below show that the total price of all items in $[n]\setminus\{\overline{s}\}$ exceeds the budget:
{\allowdisplaybreaks
\begin{align*}
    \sum_{i \in [n] \setminus \{\overline{s}\}} \widehat{p}_i &= \sum_{i \in [n] \setminus (R \cup \{\overline{s}\})} 1 + \sum_{r \in R} (\alpha^*_{\overline{s}} - 2\varepsilon) \frac{v_r}{\widehat{v}} \\
    &\geq n - |R| - 1 + \sum_{r \in R} \left(\frac{\widehat{v}}{v_{\overline{s}}} - \beta - \frac{\varepsilon}{2}\right) \frac{v_r}{\widehat{v}} \tag{$\alpha^*_{\overline{s}} + \beta - 3\varepsilon/2 > \widehat{v}/v_{\overline{s}} $} \\
    &\geq n - |R| - 1 + \sum_{r \in R} \left(\frac{v_r}{v_{\overline{s}}} - (\beta + \varepsilon/2) \frac{v_r}{\widehat{v}} \right) \\
    &\geq n - |R| - 1 + |R| \left(\frac{\min_{i \in [n]} v_i}{\max_{j \in [n]} v_j} - (\beta + \varepsilon/2) \frac{v_r}{\widehat{v}} \right) \\
    &\geq n - |R| - 1 + |R| \left(\frac{\min_{i \in [n]} v_i}{\max_{j \in [n]} v_j} - \frac{2 \delta^3}{3 n^2} \cdot \frac{1}{\delta}\right) \tag{$\beta + \frac{\varepsilon}{2} \leq \frac{2\delta^3}{3n^2}$ and $\frac{v_r}{\widehat{v}} \leq \frac{1}{\delta}$} \\
    & = (n - 1) \left(\frac{\min_{i \in [n]} v_i}{\max_{j \in [n]} v_j} - \frac{2\delta^2}{3n^2}\right)  \\
    &> B \tag{$B \leq (n - 1) \left(\frac{\min_{i \in [n]} v_i}{\max_{j \in [n]} v_j} - \frac{\delta^2}{n^2}\right)$}
\end{align*}
}
Since every item in $[n]\setminus\{\overline{s}\}$ precedes $\overline{s}$ in the value-to-price ordering, the greedy construction exhausts the budget before reaching $\overline{s}$. Therefore,
$x^*_{\overline{s}}(\widehat{\mathbf p}) = 0$, which completes the proof.
\end{proof}

As a consequence of \Cref{lemma:item-s*-can-not-be-integrally-packed-under-p-hat}, we obtain the following characterization of the oracle response. In particular, once $\alpha^*_{\overline{s}}$ is known, the characterization allows us to identify the item $\overline{s}$ using comparison queries.
\begin{corollary}
\label{lemma:oracle-response-for-determining-which-item-is-s*}
With the same notation and assumptions as in \Cref{lemma:item-s*-can-not-be-integrally-packed-under-p-hat}, for every item $k \in [n]\setminus R$, $\mathcal O_{\mathbf w}\!\left(\widehat{\mathbf p},
\widehat{\mathbf p}-\frac{\varepsilon}{3}\mathbf e_k\right) = \mathsf 0$ if and only if $k=\overline{s}$.
\end{corollary}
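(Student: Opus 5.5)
The plan is to apply \Cref{lemma:oracle-response-characterization-general-one-coordinate-change} with $\mathbf p = \widehat{\mathbf p}$, item $i = k$, and $\tau = \varepsilon/3$. This reduces the corollary to the packing statement of \Cref{lemma:item-s*-can-not-be-integrally-packed-under-p-hat}. Once the hypotheses are verified, that lemma says $\mathcal O_{\mathbf w}(\widehat{\mathbf p},\widehat{\mathbf p}-\frac{\varepsilon}{3}\mathbf e_k)=\mathsf 0$ if and only if $x^*_k(\widehat{\mathbf p})=0$. By \Cref{lemma:item-s*-can-not-be-integrally-packed-under-p-hat}, every $k\in[n]\setminus(R\cup\{\overline{s}\})$ is packed integrally, so $x^*_k(\widehat{\mathbf p})=1>0$, while $x^*_{\overline{s}}(\widehat{\mathbf p})=0$. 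This gives both directions.

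Two hypotheses need checking. The first is $\widehat{\mathbf p}^\top\mathbf 1_n>B$. This follows immediately from the last chain of inequalities in the proof of \Cref{lemma:item-s*-can-not-be-integrally-packed-under-p-hat}, which shows that already $\sum_{i\neq\overline{s}}\widehat p_i>B$.

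The second hypothesis, which I expect to be the main obstacle, is that the value-to-price ordering is unchanged when $p_k$ drops from $1$ to $1-\varepsilon/3$ for $k\notin R$. The ratio of $k$ moves from $v_k$ to $v_k/(1-\varepsilon/3)$, so it increases by at most about $\varepsilon/3$, which is far below $\delta$. I would check it against three groups of items.
\begin{itemize}
\item Other items in $[n]\setminus R$ have ratios equal to their values, which lie on the $\delta$-grid and are distinct. An increase smaller than $\delta$ therefore cannot cross any of them.
\item Items in $R$ all share the ratio $\widehat v/(\alpha^*_{\overline{s}}-2\varepsilon)$. For $k\neq\overline{s}$, $v_k$ already exceeds this common ratio, and increasing $v_k$ preserves that.
\item For $k=\overline{s}$, I need $v_{\overline{s}}/(1-\varepsilon/3)<\widehat v/(\alpha^*_{\overline{s}}-2\varepsilon)$. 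From \Cref{lemma:unique-alpha*-for-S-switching}, $\widehat v/v_{\overline{s}}\ge \alpha^*_{\overline{s}}-3\varepsilon/2$, so there is a margin of $\varepsilon/2$ in $\alpha$. This translates to a multiplicative gap of roughly $1+\varepsilon/(2\alpha)$, which dominates $1+\varepsilon/3$ because $\alpha\le 1$. The inequality is strict almost surely since $\varepsilon$ is continuous.
\end{itemize}

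Alternatively, the $k=\overline{s}$ direction can be argued directly without invoking the ordering hypothesis. Even after the decrease, $\overline{s}$ still has the smallest ratio, and the remaining items exceed the budget, so $\overline{s}$ stays unpacked and the solution is unchanged. For $k\neq\overline{s}$, the ordering argument yields strict componentwise growth of $\mathbf x^*$. Since $\mathbf w>0$, the oracle response is then nonzero.
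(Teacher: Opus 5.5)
Your proposal is correct and matches the paper's proof essentially step for step. Like the paper, you invoke \Cref{lemma:oracle-response-characterization-general-one-coordinate-change} with $\tau=\varepsilon/3$, use $\delta$-grid separation to show the ordering is preserved when $k\neq\overline{s}$, and use $\widehat v/v_{\overline{s}}\ge\alpha^*_{\overline{s}}-3\varepsilon/2$ together with $\widehat v/v_{\overline{s}}\le 1$ to keep $\overline{s}$ behind the items of $R$. (The ``almost surely'' qualifier in your $k=\overline{s}$ step is unnecessary, since $1/(1-\varepsilon/2)>1/(1-\varepsilon/3)$ holds deterministically.)
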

\begin{proof}
    By \Cref{lemma:item-s*-can-not-be-integrally-packed-under-p-hat}, we have that, under price vector $\widehat{\mathbf p}$, all items in $[n] \setminus (R \cup \{\overline{s}\})$ are integrally packed whereas item $\overline{s}$ is unpacked. Moreover, since $\overline{s}$ is not packed under $\widehat{\mathbf p}$, we have $ \widehat{\mathbf p}^{\top}\mathbf 1_n > B. $ We will also use the fact, established in the proof of \Cref{lemma:item-s*-can-not-be-integrally-packed-under-p-hat}, that under $\widehat{\mathbf p}$ the value-to-price ordering places all items in $[n]\setminus (R\cup\{\overline{s}\})$ before the items in $R$, which in turn precede $\overline{s}$. 

    Throughout the proof, we use the notation $\mathbf z^k \coloneqq \widehat{\mathbf p} - \frac{\varepsilon}{3}\mathbf e_k, $ for every $k \in [n] \setminus R$ .

    Fix an item $k \in [n]\setminus R$. The value-to-price ratios of all items $j \in [n]\setminus\{k\}$ remain unchanged under the price vector $\mathbf z^k$. On the other hand, the value-to-price ratio of $k$ under $\mathbf z^k$ is 
    \[ \frac{v_k}{1-\varepsilon/3} \leq v_k \left(1+\frac{2\varepsilon}{3}\right) \leq v_k + \frac{\delta^8}{6n^4},\] 
    where the first inequality uses $1/(1-x) \leq 1+2x$ for all $x \in [0,1/2]$, and the second follows from $\varepsilon \leq \delta^8/(4n^4)$ and $v_k\leq 1$.

    \medskip
    \noindent
    \textbf{Case {\rm I}}. $k \neq \overline{s}$. We claim that the greedy ordering of the items is unchanged between $\widehat{\mathbf p}$ and $\mathbf z^k$. If $k$ has the largest value-to-price ratio under $\widehat{\mathbf p}$, then lowering its price can only increase this ratio, and hence the ordering remains unchanged. Otherwise, let $i'$ denote the item immediately preceding $k \in [n] \setminus \left( R \cup \{\overline{s} \} \right)$ in the value-to-price ordering under $\widehat{\mathbf p}$. Since all items in $R \cup \{\overline{s}\}$ appear after the items in $[n]\setminus (R\cup\{\overline{s}\})$, we have $i' \in [n]\setminus (R\cup\{\overline{s}\})$. Therefore, the value-to-price ratio of $i'$ under $\mathbf z^k$ is $v_{i'}$. On the other hand, $\frac{v_k}{1-\varepsilon/3} \leq v_k+\frac{\delta^8}{6n^4} < v_{i'}$, where the last inequality follows from the fact that the values lie on a $\delta$-grid. Thus, even under $\mathbf z^k$, the item $k$ remains below $i'$ in the ordering. Since only the value-to-price ratio of $k$ changes, the entire ordering is preserved under $\mathbf z^{k}$.

    Furthermore, $x^*_k(\widehat{\mathbf p}) = 1 > 0$. Therefore, \Cref{lemma:oracle-response-characterization-general-one-coordinate-change} implies that $\mathcal O_{\mathbf w}(\widehat{\mathbf p}, \mathbf z^k) \neq \mathsf 0$.

    \medskip
    \noindent
    \textbf{Case {\rm II}}. $k=\overline{s}$. Here again we show that the greedy ordering of the items is unchanged between $\widehat{\mathbf p}$ and $\mathbf z^k$. Under $\widehat{\mathbf p}$, the value-to-price ratio of $\overline{s}$ is $v_{\overline{s}}$, and the item immediately preceding $\overline{s}$ in the value-to-price ordering is some item $r\in R$, whose value-to-price ratio is $\widehat v/(\alpha^*_{\overline{s}}-2\varepsilon)$. Under $\mathbf z^{\overline{s}}$, the value-to-price ratio of $r$ remains unchanged, while that of $\overline{s}$ becomes $v_{\overline{s}}/(1-\varepsilon/3)$.
    Furthermore, 
    \begin{align*}
        \frac{\widehat{v}}{\alpha^*_{\overline{s}} - 2\varepsilon} - \frac{v_{\overline{s}}}{1 - \varepsilon/3} &= v_{\overline{s}} \left(\frac{\widehat{v}/v_{\overline{s}}}{\alpha^*_{\overline{s}} - 2\varepsilon} - \frac{1}{1 - \varepsilon/3}\right) \\
        &\geq v_{\overline{s}} \left(\frac{\widehat{v}/v_{\overline{s}}}{\widehat{v}/v_{\overline{s}} - \varepsilon/2} - \frac{1}{1 - \varepsilon/3}\right) \tag{by definition, $\widehat{v}/v_{\overline{s}} \geq \alpha^*_{\overline{s}} - 3\varepsilon/2$} \\
        &\geq v_{\overline{s}} \left(\frac{1}{1 - \varepsilon/2} - \frac{1}{1 - \varepsilon/3}\right) \tag{$\widehat{v}/v_{\overline{s}} \leq 1$}\\
        &> 0~.
    \end{align*} 
    Therefore, $r$ continues to precede $\overline{s}$ in the value-to-price ordering under $\mathbf z^{\overline{s}}$. Since the value-to-price ratios of all other items are unchanged, the greedy ordering is the same between $\widehat{\mathbf p}$ and $\mathbf z^{\overline{s}}$. Moreover, $x^*_{\overline{s}}(\widehat{\mathbf p}) = 0$. Hence, by \Cref{lemma:oracle-response-characterization-general-one-coordinate-change}, $\mathcal O_{\mathbf w}(\widehat{\mathbf p}, \mathbf z^{\overline{s}}) = \mathsf 0$. 
    
    This corollary stands proved. 
\end{proof}

\subsubsection{Algorithm for Recovering the Scaled Values}
\label{subsection:alg-for-packed}
In the previous subsection we have characterized the oracle response under the assumption that the triad price $\mathbf p(\alpha)$ exists for the given value of $\alpha$. Although $\mathbf p(\alpha)$ may fail to exist for sufficiently small $\alpha$, \Cref{lem:triad-price-exists-implies} (established below) shows that this can occur only when $\alpha$ is smaller than $\alpha^*_{\overline{s}}-\beta$. These structural results and characterization allow us to recover the quantity $\alpha^*_{\overline{s}}$ and identify the item $\overline{s}$ as follows: The algorithm (\Cref{alg:get-value-ratio-of-items-in-S}) proceeds iteratively, maintaining a set $R \subseteq [n]$ of items whose scaled values have already been recovered. In each iteration, the goal is to recover the scaled value of the least-valued item $\overline{s}$ in $[n]\setminus R$. To this end, the algorithm performs a binary search over the grid $\{\beta,2\beta,\ldots,1\}$ to locate $\alpha^*_{\overline{s}}$. The search is guided by the oracle-response characterization in \Cref{corollary:orcale-response-at-alpha-lesser-than-optimal}. Whenever the triad price $\mathbf p(\alpha)$ does not exist, which can be checked in polynomial time by \Cref{lemma:computation-of-triad-prices}, the algorithm discards that value of $\alpha$ and, by \Cref{lem:triad-price-exists-implies}, continues the binary search in the upper half of the grid. Once the binary search identifies $\alpha^*_{\overline{s}}$, it remains to determine the item $\overline{s}$. To this end, the algorithm uses the oracle-response characterization in \Cref{lemma:oracle-response-for-determining-which-item-is-s*}. Having recovered both the identity of $\overline{s}$ and its scaled value ratio from $\alpha^*_{\overline{s}}$, it augments the set $R$ with $\overline{s}$ and proceeds to the next iteration.

In particular, for each maintained set $R$ in \Cref{alg:get-value-ratio-of-items-in-S}, Lines~\ref{alg:get-value-ratio-of-items-in-S.line:triad-price-dne-search-upper-half}--\ref{line:found-alpha-star} are used to recover $\alpha^*_{\overline{s}}$, and Lines~\ref{alg:get-value-S.line:construct-p-hat-for-finding-s*}--\ref{alg:get-value-S.line:figure-out-s*} are then used to identify the corresponding item $\overline{s}$.
 
\begin{algorithm}[!h]
\caption{Recovering scaled values of items in $S$}
\label{alg:get-value-ratio-of-items-in-S}

\begin{algorithmic}[1]

\Require Items $S$, value ratios $\{\frac{v_u}{v_f}:u\in U\}$, and perturbation parameters $\beta,\varepsilon,\varepsilon_1,\varepsilon_2>0$.
\Ensure Value ratios $\frac{v_s}{\widehat v}$ for every item $s\in S$, where $\widehat v=\min_{j\in[n]}v_j$.

\State Initialize $R=U\cup\{f\}$. Let $\frac{\widehat v}{v_f} =\min_{u\in U}\frac{v_u}{v_f}$, and compute $\frac{v_r}{\widehat v} = \frac{v_r/v_f}{\widehat v/v_f}$ for every $r \in R$. 

\While{$[n]\setminus R$ is not empty}\linelabel{alg:get-value-S.line:while-loop-starts}

    \State Perform a binary search over $\alpha\in\{\beta,2\beta,\ldots,1\}$: For each candidate $\alpha$, first check if triad price $\mathbf p(\alpha)$ exists. If $\mathbf p(\alpha)$ does not exist, search in the upper half. $\quad$// \texttt{\Cref{lem:triad-price-exists-implies}} \linelabel{alg:get-value-ratio-of-items-in-S.line:triad-price-dne-search-upper-half}
    
    \State Else, if $\mathbf p(\alpha)$ exists, then let $k_\alpha$ as the corresponding triad item, and define $\mathbf q^1 \coloneqq \mathbf p(\alpha) - \frac{\varepsilon v_{k_\alpha}}{2\widehat{v}}\,\mathbf e_{k_\alpha}, \quad \mathbf q^2 \coloneqq \mathbf p(\alpha) - \frac{5\varepsilon v_{k_\alpha}}{8\widehat{v}}\,\mathbf e_{k_\alpha}, \quad \mathbf q^3 \coloneqq \mathbf p(\alpha) - \frac{6\varepsilon v_{k_\alpha}}{8\widehat{v}}\,\mathbf e_{k_\alpha}.$
    
    Query $ r^1_\alpha \coloneqq \mathcal O_{\mathbf w}(\mathbf q^1,\mathbf q^2) $ and $ r^2_\alpha \coloneqq \mathcal O_{\mathbf w}(\mathbf q^2,\mathbf q^3). $ If $r^1_\alpha = r^2_\alpha = \mathsf 0$, continue the binary search in the lower half; otherwise, if $r^1_\alpha \vee r^2_\alpha \neq \mathsf 0$, continue the search in the upper half. $\quad$ // \texttt{\Cref{corollary:orcale-response-at-alpha-lesser-than-optimal}} \linelabel{alg:get-value-S.line:binary-search-alpha-with-comparison-query}
    
    \State Let $\widehat{\alpha}$ be the largest value of $\alpha$ for which $r^1_\alpha \vee r^2_\alpha \neq \mathsf{0}$.

    \State Find the unique rational number $p/q$ in the interval $ \left( \widehat{\alpha}- \beta - \frac{3\varepsilon_2}{2},\ \widehat{\alpha}+\beta-\frac{3\varepsilon_1}{2} \right)$ satisfying $p,q\in \{1,2,\ldots, \lfloor1/\delta\rfloor\}$ by invoking the simplest intervening rational algorithm, which runs in $O(\operatorname{poly}\log(1/\delta))$ time (Appendix \ref{appendix:Stern-Brocot}). \linelabel{alg:get-value-S.line:Stern-Brocot-search}

    \State Set $\widetilde{\alpha} = \widehat{\alpha}$ if $p/q \in (\widehat{\alpha} - 3\varepsilon/2, \widehat{\alpha} + \beta - 3\varepsilon/2)$, else set $\widetilde{\alpha} = \widehat{\alpha} - \beta$. \label{line:found-alpha-star}
    
    \State Define the price vector $\widehat{\mathbf p}$ by $\widehat{p}_i = 1$, for all $i \in [n] \setminus R$, and $\widehat{p}_r = (\widetilde{\alpha} - 2\varepsilon) v_r/\widehat{v}$, for all $r \in R$. \linelabel{alg:get-value-S.line:construct-p-hat-for-finding-s*}

    \State For every item $i\in [n] \setminus R$, query $ r_i := \mathcal O_{\mathbf w} \left( \widehat{\mathbf p},\ \widehat{\mathbf p}-(\varepsilon/3)\mathbf e_i \right)$. If $r_i=\mathsf{0}$, set $\widehat{s} = i$ and terminate this inner loop, else continue to the next item. $\quad$ // \texttt{\Cref{lemma:oracle-response-for-determining-which-item-is-s*}} \linelabel{alg:get-value-S.line:figure-out-s*}

    \State Set $\frac{v_{\widehat{s}}}{\widehat{v}}=\frac{q}{p}$, and update $R \leftarrow R \cup \{ \widehat{s} \}$.

\EndWhile

\State \textbf{Return} the scaled values $\frac{v_s}{\widehat v}$ for every $s\in S$.

\end{algorithmic}
\end{algorithm}

We begin with a lemma showing that if the triad price fails to exist, then the corresponding value of $\alpha$ must sufficiently small.

\begin{lemma}\label{lem:triad-price-exists-implies}
Let $R \subseteq [n]$ satisfy $\max_{r\in R} v_r < \min_{s\in [n]\setminus R} v_s$, and let $\overline{s} = \arg\min_{s\in [n]\setminus R} v_s$. If $\alpha \in \{\beta,2\beta,\ldots,1\}$ is such that the triad price vector $\mathbf p(\alpha)$ does not exist, then $\alpha < \alpha^*_{\overline{s}}-\beta$. Equivalently, if $\alpha \geq \alpha^*_{\overline{s}}-\beta$, then the triad price vector $\mathbf p(\alpha)$ exists.
\end{lemma}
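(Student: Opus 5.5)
We prove the contrapositive: assume $\alpha \ge \alpha^*_{\overline{s}}-\beta$ and show that the threshold in Definition~\ref{Definition:TriadPrices} is reached by some prefix index $t<|R|$. This guarantees that a triad item $k_\alpha = r_{t+1}$ exists. The natural witness is the full set $R$ minus its last (least-valued) element. It therefore suffices to show
\[
\sum_{i=1}^{|R|-1} \min\Bigl\{1,(\alpha-2\varepsilon)\tfrac{v_{r_i}}{\widehat v}\Bigr\} \;\ge\; B-n+|R|.
\]
If $B-n+|R|\le 0$ this is trivial, so assume $B-n+|R|>0$.

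\textbf{Step 1 (lower bound on $\alpha$).} From Proposition~\ref{lemma:unique-alpha*-for-S-switching} we have $\alpha^*_{\overline{s}} > \widehat v/v_{\overline{s}} + \tfrac{3\varepsilon}{2} - \beta$. Hence
\[
\alpha - 2\varepsilon \;\ge\; \frac{\widehat v}{v_{\overline{s}}} - 2\beta - \frac{\varepsilon}{2}.
\]
Each summand is then at least
\[
\min\Bigl\{1,\ \frac{v_{r_i}}{v_{\overline{s}}} - \Bigl(2\beta+\frac{\varepsilon}{2}\Bigr)\frac{v_{r_i}}{\widehat v}\Bigr\}
\;\ge\; \frac{\min_j v_j}{\max_j v_j} - \frac{3\beta}{\delta}.
\]
Here we used $v_{r_i}/\widehat v\le 1/\delta$ and $v_{r_i}/v_{\overline{s}}\ge \widehat v/\max_j v_j$. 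Since $3\beta/\delta = \delta^2/n^2$ up to the $\varepsilon$ correction, the per-item bound is essentially $\mu - \delta^2/n^2$, where $\mu := \min_i v_i/\max_j v_j$.

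\textbf{Step 2 (comparison with $B$).} Summing over $|R|-1$ items gives a total of at least $(|R|-1)(\mu - c\,\delta^2/n^2)$ for a constant $c$ slightly above $1$. We need this to be at least $B-n+|R|$. Use the capacity assumption $B\le (n-1)(\mu-\delta^2/n^2)$ from Definition~\ref{definition:knapsack-with-comparison}. Then
\[
B-n+|R| \;\le\; (n-1)\mu - (n-|R|) - \frac{(n-1)\delta^2}{n^2}.
\]
Since $\mu\le 1$, moving $n-|R|$ copies of $\mu$ across gives
\[
(n-1)\mu-(n-|R|) \;\le\; (|R|-1)\mu.
\]
The slack $(n-1)\delta^2/n^2$ then has to absorb $(|R|-1)c\,\delta^2/n^2$.

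\textbf{Main obstacle: the constants.} This absorption is where I expect trouble. With $\beta=\delta^3/(3n^2)$, the loss is $2\beta\cdot(1/\delta) = \tfrac{2}{3}\delta^2/n^2$ per item, plus a negligible $\varepsilon/(2\delta)$ term. This is strictly below the available $\delta^2/n^2$ per item, and $|R|-1\le n-1$, so the inequality should close with room to spare. I would write it out carefully, tracking $\tfrac{2}{3}+O(\delta^5)<1$.

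I would also double-check the case where some summand is capped at $1$. Capping only helps, because $1\ge \mu - \tfrac{2}{3}\delta^2/n^2$. Finally, a strict inequality, or the almost-sure non-equality coming from the random $\varepsilon$, ensures the minimal index $t$ satisfies $t\le |R|-1<|R|$, so $k_\alpha$ is well defined.
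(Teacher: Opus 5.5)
Your proposal is correct and follows the paper's proof. You use the same witness (the prefix $R$ minus its least-valued item), the same bound $\alpha-2\varepsilon \ge \widehat v/v_{\overline{s}}-2\beta-\varepsilon/2$, and the same comparison against $B\le (n-1)\bigl(\mu-\delta^2/n^2\bigr)$. The paper checks existence only at $\alpha^*_{\overline{s}}-\beta$ and then invokes \Cref{lemma:monotonicity-of-triad-prices}, whereas you argue for every $\alpha\ge\alpha^*_{\overline{s}}-\beta$ directly and handle capping through monotonicity of $\min\{1,\cdot\}$.

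Your ``main obstacle'' is not one: the crude per-item loss $3\beta/\delta=\delta^2/n^2$ already closes the inequality because $|R|-1\le n-1$. No strictness or almost-sure argument is needed either, since the threshold condition in Definition~\ref{Definition:TriadPrices} is a non-strict inequality.
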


\begin{proof}
    We first show that, for $\alpha' = \alpha^*_{\overline{s}} - \beta$, the triad price vector $\mathbf p(\alpha')$ exists. Then, by the existential monotonicity established in \Cref{lemma:monotonicity-of-triad-prices}, it follows that $\mathbf p(\alpha)$ exists for every $\alpha \in \{\beta,2\beta,\ldots,1\}$ satisfying $\alpha \geq \alpha^*_{\overline{s}}-\beta$, which proves the lemma.

    For $\alpha' =\alpha^*_{\overline{s}}-\beta$, consider the term $\min\left\{1,(\alpha'-2\varepsilon)\frac{v_{i}}{\widehat v}\right\} $ appearing in the definition of triad price vectors (\Cref{Definition:TriadPrices}). We claim that $(\alpha'-2\varepsilon)\frac{v_{r_i}}{\widehat v}\le 1$ for all $i \in R$. Indeed,  
    \[ (\alpha^*_{\overline{s}} - \beta - 2 \varepsilon) \frac{v_i}{\widehat{v}} \leq (\alpha^*_{\overline{s}} - 2 \varepsilon) \frac{v_i}{\widehat{v}} \leq \left( \frac{\widehat{v}}{v_{\overline{s}}} + \frac{3 \varepsilon}{2} - 2 \varepsilon \right) \frac{v_i}{\widehat{v}} = \frac{\widehat{v}}{v_{\overline{s}}} \frac{v_i}{\widehat{v}} - \frac{ \varepsilon}{2} \frac{v_i}{\widehat{v}} = \frac{v_i}{v_{\overline{s}}} - \frac{ \varepsilon v_i}{2 \widehat{v}} \leq 1.\]
    The last inequality follows from $\frac{v_i}{v_{\overline{s}}} \leq 1$ for $i \in R$. 

    Let $r^* \coloneqq \arg\min_{r\in R} v_r$. We next evaluate, for the set $R\setminus\{r^*\}$, the condition that specifies the subset $R_1 \subseteq R$ in \Cref{Definition:TriadPrices}. In parcticular, using the bound $\alpha^*_{\overline{s}} + \beta - \frac{3\varepsilon}{2} > \frac{\widehat{v}}{v_{\overline{s}}}$, we obtain 
    \begin{align*}
        \sum_{s \in [n] \setminus R} 1 + \sum_{i \in R \setminus \{r^*\}} (\alpha^*_{\overline{s}} - \beta - 2 \varepsilon) \frac{v_i}{\widehat{v}} &\geq n - |R| + \sum_{i \in R \setminus \{r^*\} } \left(\frac{v_i}{v_{\overline{s}}} - \left(2\beta +  \frac{\varepsilon}{2}\right) \frac{v_i}{\widehat{v}} \right) \\
        &\geq n - |R| + (|R| - 1) \frac{\min_{i \in [n]} v_i}{\max_{j \in [n]} v_j} - (|R| - 1) \frac{2\beta + \varepsilon/2}{\delta} \\
        &\geq (n - 1) \frac{\min_{i \in [n]} v_i}{\max_{j \in [n]} v_j} - \frac{(n-1) (2\beta + \varepsilon/2)}{\delta} \\
        &\geq (n - 1) \left( \frac{\min_{i \in [n]} v_i}{\max_{j \in [n]} v_j} - \frac{\delta^2}{n^2} \right) \tag{$2\beta + \frac{\varepsilon}{2} \leq \frac{\delta^3}{n^2}$} \\
        &\geq B \tag{$B \leq (n - 1) \left(\frac{\min_{i \in [n]} v_i}{\max_{j \in [n]} v_j} - \frac{\delta^2}{n^2}\right)$}.
    \end{align*}
 Therefore, for $\alpha' = \alpha^*_{\overline{s}} - \beta$, the triad price vector exists with some $R_1 \subseteq R \setminus \{r^*\}$ and the corresponding triad item $k_{\alpha'}$ is well defined. This proves the lemma.
 \end{proof}

The next lemma analyzes a single iteration of \Cref{alg:get-value-ratio-of-items-in-S}. It establishes that the binary search with triad-price queries correctly recovers the ratio $\widehat v / v_{\overline{s}}$ for the least-valued unrecovered item $\overline{s}$, and that the item $\widehat s$ returned by the algorithm is precisely $\overline{s}$. 

\begin{lemma}
\label{lemma:algorithm-get-values-S-analyzing-one-iteration}
Suppose that, at the start of an iteration of the while-loop in \Cref{alg:get-value-ratio-of-items-in-S} (Line~\ref{alg:get-value-S.line:while-loop-starts}), the set $R$ consists of items $r \in [n]$ whose value ratios $v_r/\widehat{v}$ are known. Further, suppose that $|R| \ge n-B-1$ and  $\max_{r\in R} v_r < v_{\overline{s}}$, where $\overline{s} = \argmin_{s \in [n] \setminus R} v_s$. Let $\widehat{\alpha}$, $\widetilde{\alpha}$ and $\widehat{s}$ be as defined in the algorithm. Then, at the end of the iteration, the following hold:
\begin{enumerate}
    \item[(i)] $\frac{\widehat{v}}{v_{\overline{s}}} \in (\widehat{\alpha} - \beta - \frac{3 \varepsilon}{2}, \widehat{\alpha} + \beta - \frac{3\varepsilon}{2})$.
    \item[(ii)] $\widetilde{\alpha} = \alpha^*_{\overline{s}}$.
    \item[(iii)] $\widehat{s} = \overline{s}$.
\end{enumerate}
\end{lemma}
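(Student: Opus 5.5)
We first pin down what the binary search returns, then use it for (i), then (ii), then (iii).

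\textbf{Step 1: what the binary search returns.} Combine \Cref{lem:triad-price-exists-implies} with \Cref{corollary:orcale-response-at-alpha-lesser-than-optimal} and classify each grid point $\alpha\in\{\beta,2\beta,\ldots,1\}$ by the test of Line~\ref{alg:get-value-S.line:binary-search-alpha-with-comparison-query}:
\begin{itemize}
\item If $\mathbf p(\alpha)$ does not exist, then $\alpha<\alpha^*_{\overline{s}}-\beta$, and the algorithm moves to the upper half.
\item If $\alpha\le\alpha^*_{\overline{s}}$ and $\mathbf p(\alpha)$ exists (as it must when $\alpha\ge\alpha^*_{\overline{s}}-\beta$), then part (ii) of the corollary gives $r^1_\alpha\vee r^2_\alpha\neq\mathsf 0$, so the search also moves up.
\item If $\alpha\ge\alpha^*_{\overline{s}}+2\beta$, then part (i) gives $r^1_\alpha=r^2_\alpha=\mathsf 0$, so the search moves down.
\end{itemize}
The single ambiguous grid point is $\alpha=\alpha^*_{\overline{s}}+\beta$, where either response may occur.

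The outcomes are therefore ``up'' on all of $\{\alpha\le\alpha^*_{\overline{s}}\}$, ``down'' on all of $\{\alpha\ge\alpha^*_{\overline{s}}+2\beta\}$, and arbitrary at one point. This pattern is monotone no matter how the ambiguous point answers. Hence the binary search is consistent with a threshold, and the largest ``up'' point it returns satisfies $\widehat\alpha\in\{\alpha^*_{\overline{s}},\alpha^*_{\overline{s}}+\beta\}$. The care needed is to argue that binary search never queries the ambiguous point twice with different answers. It is queried at most once, so the answers seen are always consistent with some monotone sequence.

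\textbf{Step 2: part (i).} By \Cref{lemma:unique-alpha*-for-S-switching}, $\widehat v/v_{\overline{s}}\in[\alpha^*_{\overline{s}}-3\varepsilon/2,\ \alpha^*_{\overline{s}}+\beta-3\varepsilon/2)$. Substituting $\alpha^*_{\overline{s}}\in\{\widehat\alpha,\widehat\alpha-\beta\}$ shows that this interval lies in $(\widehat\alpha-\beta-3\varepsilon/2,\ \widehat\alpha+\beta-3\varepsilon/2)$. The left endpoint is attained only on a measure-zero event in $\varepsilon$, so strictness holds almost surely.

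\textbf{Step 3: part (ii).} The interval in Line~\ref{alg:get-value-S.line:Stern-Brocot-search} contains the interval from (i), because $\varepsilon_1\le\varepsilon\le\varepsilon_2$. Its length is $2\beta+\tfrac32(\varepsilon_2-\varepsilon_1)<\delta^2$. By \Cref{lemma:unique-p/q-in-interval-of-length-delta^2}, the rational $p/q$ found there equals $\widehat v/v_{\overline{s}}$. This ratio lies in exactly one of the two half-open windows $[\widehat\alpha-3\varepsilon/2,\widehat\alpha+\beta-3\varepsilon/2)$ and $[\widehat\alpha-\beta-3\varepsilon/2,\widehat\alpha-3\varepsilon/2)$. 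These windows are the ones defining $\alpha^*$ for the candidates $\widehat\alpha$ and $\widehat\alpha-\beta$. By uniqueness in \Cref{lemma:unique-alpha*-for-S-switching}, the rule of Line~\ref{line:found-alpha-star} therefore sets $\widetilde\alpha=\alpha^*_{\overline{s}}$. Again, boundary equality has probability zero.

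\textbf{Step 4: part (iii).} With $\widetilde\alpha=\alpha^*_{\overline{s}}$, the vector $\widehat{\mathbf p}$ built by the algorithm is exactly the one in \Cref{lemma:item-s*-can-not-be-integrally-packed-under-p-hat}. The hypotheses $|R|\ge n-B-1$ and $\max_R v_r<v_{\overline{s}}$ are assumed. \Cref{lemma:oracle-response-for-determining-which-item-is-s*} then shows that the scan in Line~\ref{alg:get-value-S.line:figure-out-s*} returns $\mathsf 0$ exactly at $k=\overline{s}$, so $\widehat s=\overline{s}$.

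The main obstacle is Step 1. It requires handling the ambiguous grid point and the grid points where $\mathbf p(\alpha)$ does not exist, and showing that together these still yield the two-point localization of $\widehat\alpha$.
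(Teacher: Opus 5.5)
Your proposal is correct and follows the paper's proof essentially step for step. You localize $\widehat{\alpha}\in\{\alpha^*_{\overline{s}},\alpha^*_{\overline{s}}+\beta\}$ via \Cref{lem:triad-price-exists-implies} and \Cref{corollary:orcale-response-at-alpha-lesser-than-optimal}, with the single ambiguous point $\alpha^*_{\overline{s}}+\beta$; then rational reconstruction on the widened interval plus the window test gives $\widetilde{\alpha}=\alpha^*_{\overline{s}}$, and \Cref{lemma:oracle-response-for-determining-which-item-is-s*} finishes (iii). You are also slightly more careful than the paper about the probability-zero boundary cases: the $\alpha^*_{\overline{s}}$-window is closed on the left, while the intervals in (i) and in the test defining $\widetilde{\alpha}$ are open, and the paper simply writes that window as open.
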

\begin{proof}
Fix an iteration of the while-loop in \Cref{alg:get-value-ratio-of-items-in-S}. 

For \textbf{part (i)}, we first show that the binary search in the iteration necessarily returns a value $\widehat{\alpha} \in \{\alpha^*_{\overline{s}},\, \alpha^*_{\overline{s}}+\beta\}$. Since $\max_{r \in R} v_r < \min_{s \in [n]\setminus R} v_s = v_{\overline{s}}$, \Cref{lem:triad-price-exists-implies} ensures that the binary search correctly identifies the range of $\alpha \in \{\beta,2\beta,\ldots,1\}$ for which the triad price vector $\mathbf p(\alpha)$ exists. Now fix any $\alpha$ in this range. By \Cref{corollary:orcale-response-at-alpha-lesser-than-optimal}, if $\alpha > \alpha^*_{\overline{s}}+\beta$, then both $ r^1_\alpha=\mathcal O_{\mathbf w}(\mathbf q^1(\alpha),\mathbf q^2(\alpha)) $ and $ r^2_\alpha=\mathcal O_{\mathbf w}(\mathbf q^2(\alpha),\mathbf q^3(\alpha)) $ are equal to $\mathsf 0$, so the algorithm correctly continues the binary search in the lower half. On the other hand, if $\alpha \le \alpha^*_{\overline{s}}$, then at least one of $r^1_\alpha$ and $r^2_\alpha$ is nonzero, and hence $ r^1_\alpha \vee r^2_\alpha \neq \mathsf 0. $ In this case, the algorithm correctly continues the search in the upper half. 

The only unresolved value is $\alpha=\alpha^*_{\overline{s}}+\beta$. If $ r^1_{\alpha^*_{\overline{s}}+\beta}\vee r^2_{\alpha^*_{\overline{s}}+\beta}=\mathsf 1, $ then the binary search returns $\widehat{\alpha}=\alpha^*_{\overline{s}}+\beta$; otherwise it returns $\widehat{\alpha}=\alpha^*_{\overline{s}}$. Therefore, $ \widehat{\alpha}\in\{\alpha^*_{\overline{s}},\,\alpha^*_{\overline{s}}+\beta\} $, as required.

Finally, by the definition of $\alpha^*_{\overline{s}}$, we have $\widehat{v}/v_{\overline{s}} \in (\alpha^*_{\overline{s}} - 3\varepsilon/2, \alpha^*_{\overline{s}} + \beta - 3\varepsilon/2)$. Therefore, 
for either choice of $\widehat{\alpha} \in \{\alpha^*_{\overline{s}},\, \alpha^*_{\overline{s}}+\beta\}$, it holds that $\widehat{v}/v_{\overline{s}} \in  (\widehat{\alpha} - \beta - \frac{3 \varepsilon}{2}, \widehat{\alpha} + \beta - \frac{3\varepsilon}{2})$. This proves part (i).

\medskip
For \textbf{part (ii)}, consider the interval
$
    \left(
        \widehat{\alpha}-\beta-\frac{3\varepsilon_2}{2},\,
        \widehat{\alpha}+\beta-\frac{3\varepsilon_1}{2}
    \right).
$

Since $\widehat{\alpha}$ lies on the $\beta$-grid, and since $\beta,\varepsilon_1,\varepsilon_2$ have bounded bit complexity, the endpoints of this interval have bounded bit complexity. Also, since $\varepsilon\in[\varepsilon_1,\varepsilon_2]$, we have
$
    \left(
        \widehat{\alpha}-\beta-\frac{3\varepsilon}{2},\,
        \widehat{\alpha}+\beta-\frac{3\varepsilon}{2}
    \right)
    \subseteq
    \left(
        \widehat{\alpha}-\beta-\frac{3\varepsilon_2}{2},\,
        \widehat{\alpha}+\beta-\frac{3\varepsilon_1}{2}
    \right).
$
Hence, by part~(i),
$
    \frac{\widehat v}{v_{\overline{s}}}
    \in
    \left(
        \widehat{\alpha}-\beta-\frac{3\varepsilon_2}{2},\,
        \widehat{\alpha}+\beta-\frac{3\varepsilon_1}{2}
    \right).
$
Moreover, this interval has length
\begin{align*}
    &\left(\widehat{\alpha}+\beta-\frac{3\varepsilon_1}{2}\right)
    -
    \left(\widehat{\alpha}-\beta-\frac{3\varepsilon_2}{2}\right) \\
    &\qquad =
    2\beta+\frac{3}{2}(\varepsilon_2-\varepsilon_1) \\
    &\qquad =
    \frac{2\delta^3}{3n^2}
    +
    \frac{\delta^8}{8n^4}
    <
    \delta^2.
\end{align*}
Also, $\widehat v/v_{\overline{s}}$ is of the form $p/q$, where $p,q\in(0,1/\delta]\cap\mathbb Z$. Therefore, by \Cref{lemma:unique-p/q-in-interval-of-length-delta^2}, $\widehat v/v_{\overline{s}}$ is the unique rational of this form contained in the interval above. Hence, the simplest intervening rational algorithm (Appendix~\ref{appendix:Stern-Brocot}) in Line~\ref{alg:get-value-S.line:Stern-Brocot-search} correctly recovers the ratio $\widehat v/v_{\overline{s}}$ in $O(\operatorname{poly} \log(1/\delta))$ time. It remains to identify $\alpha^*_{\overline{s}}$. Recall that
$
    \frac{\widehat v}{v_{\overline{s}}}
    \in
    \left(
        \alpha^*_{\overline{s}}-\frac{3\varepsilon}{2},\,
        \alpha^*_{\overline{s}}+\beta-\frac{3\varepsilon}{2}
    \right),
$
and the algorithm checks whether the recovered ratio $\widehat v/v_{\overline{s}}$ lies in the interval
$
    \left(
        \widehat{\alpha}-\frac{3\varepsilon}{2},\,
        \widehat{\alpha}+\beta-\frac{3\varepsilon}{2}
    \right).
$
If it does, then necessarily $\widehat{\alpha}=\alpha^*_{\overline{s}}$, and the algorithm correctly sets $\widetilde{\alpha}=\widehat{\alpha}=\alpha^*_{\overline{s}}$.

Otherwise, we must have $\widehat{\alpha}=\alpha^*_{\overline{s}}+\beta$. In this case, the tested interval equals
$
    \left(
        \alpha^*_{\overline{s}}+\beta-\frac{3\varepsilon}{2},\,
        \alpha^*_{\overline{s}}+2\beta-\frac{3\varepsilon}{2}
    \right),
$
which does not contain $\widehat v/v_{\overline{s}}$. Hence, again the algorithm correctly sets $\widetilde{\alpha}=\widehat{\alpha}-\beta=\alpha^*_{\overline{s}}$. This proves part~(ii).

    \medskip
    For \textbf{part (iii)}, observe that the algorithm constructs a price vector $\widehat{\mathbf p}$ such that $\widehat p_i = 1$ for every $i \in [n]\setminus R$, and $\widehat p_r = (\widetilde{\alpha}-2\varepsilon)v_r/\widehat v$ for every $r \in R$ (Line~\ref{alg:get-value-S.line:construct-p-hat-for-finding-s*}). By part~(ii), we have $\widetilde{\alpha}=\alpha^*_{\overline{s}}$, and hence the constructed vector $\widehat{\mathbf p}$ coincides with the one in \Cref{lemma:oracle-response-for-determining-which-item-is-s*}. Moreover, by assumption, $\max_{r\in R} v_r < \min_{s\in [n]\setminus R} v_s$ and $|R| \ge n-B-1$. Therefore, \Cref{lemma:oracle-response-for-determining-which-item-is-s*} applies and yields that, for every $i \in [n]\setminus R$, we have $\mathcal O_{\mathbf w} \left(\widehat{\mathbf p},\, \widehat{\mathbf p} - \frac{\varepsilon}{3}\mathbf e_i\right)=\mathsf 0$ if and only if $i=\overline{s}$. Hence, the algorithm correctly identifies $\widehat{s}=\overline{s}$. 

    This completes the proof of the lemma.
\end{proof}

The next lemma establishes the invariant maintained throughout the execution of Algorithm~\ref{alg:get-value-ratio-of-items-in-S}. In particular, at the start of every iteration, the set $R$ consists of items whose scaled values are already known, and every item in $R$ has value strictly smaller than every item in $[n]\setminus R$. Consequently, each iteration correctly recovers the scaled value of the least-valued item in $[n]\setminus R$ while preserving this invariant. 

\begin{lemma}\label{lem:Per-iteration-correctness-and-invariant} 
Throughout the execution of \Cref{alg:get-value-ratio-of-items-in-S}, the following invariants hold: $\max_{r\in R} v_r < \min_{s\in [n]\setminus R} v_s$ and $|R| \ge n-B-1$.
Moreover, at the start of every iteration, the value ratios $v_r/\widehat v$ are known for all $r\in R$. 
\end{lemma}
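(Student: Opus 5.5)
The argument is an induction on the number of completed iterations of the while-loop in \Cref{alg:get-value-ratio-of-items-in-S}, with \Cref{lemma:algorithm-get-values-S-analyzing-one-iteration} as the inductive step. The inductive hypothesis at the start of an iteration is the triple: $\max_{r\in R} v_r < \min_{s\in[n]\setminus R} v_s$, $|R|\ge n-B-1$, and $v_r/\widehat v$ known for every $r\in R$.

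\textbf{Base case.} Initially $R=U\cup\{f\}$. The greedy structure of $\mathbf x^*(\mathbf 1_n)$ (see the proof of \Cref{lemma:opt-changes}) gives $v_s>v_f>v_u$ for all $s\in S$ and $u\in U$. Hence every item of $R$ has value strictly below every item of $[n]\setminus R=S$, which is the first invariant.

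For the cardinality bound, use $|S|=\lfloor B\rfloor$, established in the proof of \Cref{lemma:opt-changes}. Then $|R|=n-\lfloor B\rfloor\ge n-B\ge n-B-1$.

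For known ratios, \Cref{lemma:estimating-profit-of-items-in-U} supplies $v_u/v_f$ for all $u\in U$, and $v_f/v_f=1$. Since $\widehat v=\min_j v_j$ is attained in $U$ (or at $f$ if $U=\emptyset$, where the minimum in line~1 would then be $1$), we have $\widehat v/v_f=\min_{u\in U} v_u/v_f$. So line~1 correctly computes $v_r/\widehat v=(v_r/v_f)/(\widehat v/v_f)$ for all $r\in R$.

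\textbf{Inductive step.} Assume the invariants hold at the start of an iteration with set $R$. Then the hypotheses of \Cref{lemma:algorithm-get-values-S-analyzing-one-iteration} are met, so $\widehat s=\overline{s}=\arg\min_{s\notin R} v_s$. The same lemma, via the simplest-intervening-rational step in Line~\ref{alg:get-value-S.line:Stern-Brocot-search} and part~(ii) of its proof, gives $p/q=\widehat v/v_{\overline{s}}$ exactly. Thus the assignment $v_{\widehat s}/\widehat v=q/p$ is correct, and the new set $R'=R\cup\{\overline{s}\}$ has all ratios known.

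Ordering is preserved because $\overline s$ is the minimum of $[n]\setminus R$. Every $r\in R$ has $v_r<v_{\overline s}$, and every $s\in[n]\setminus R'$ has $v_s>v_{\overline s}$ by distinctness. Hence $\max_{R'}v<\min_{[n]\setminus R'}v$. The cardinality bound is monotone, since $|R'|=|R|+1\ge n-B-1$.

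Termination follows because $|R|$ strictly increases each iteration. So the invariants hold at the start of every iteration and at exit.

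\textbf{Main obstacle.} The only delicate point is the base case's known-ratio claim, specifically that $\widehat v$ lies in $R$. This follows from the ordering $v_u<v_f<v_s$, so it reduces to facts already proved. The remainder is bookkeeping on top of \Cref{lemma:algorithm-get-values-S-analyzing-one-iteration}.
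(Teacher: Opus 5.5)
Your proposal is correct and follows the paper's proof essentially step for step: induction over while-loop iterations, a base case $R=U\cup\{f\}$ handled via the greedy order under $\mathbf 1_n$, $|S|=\lfloor B\rfloor$, and \Cref{lemma:estimating-profit-of-items-in-U}, and an inductive step via \Cref{lemma:algorithm-get-values-S-analyzing-one-iteration} together with the minimality of $\overline{s}$. Your aside about $U=\emptyset$ is unnecessary, since the capacity bound gives $B<n-1$, hence $\lfloor B\rfloor\le n-2$ and $|U|=n-1-\lfloor B\rfloor\ge 1$.
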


\begin{proof}
We prove the lemma by induction on the number of iterations of the while-loop in \Cref{alg:get-value-ratio-of-items-in-S}. Initially, the algorithm sets $R = U \cup \{f\}$. By \Cref{alg:get-value-ratio-for-items-in-U}, the ratios $v_u/v_f$ are known for all $u \in U$. Moreover, under the price vector $\mathbf 1_n$, the greedy ordering coincides with the decreasing order of item values. Hence, every item in $S$ has value larger than $v_f$, whereas every item in $U$ has value smaller than $v_f$. Therefore, $\max_{r\in U\cup\{f\}} v_r < \min_{s\in S} v_s$. Since $\widehat v = \min_{u\in U} v_u$, it follows that the ratios $v_r/\widehat v$ are known for all $r \in U \cup \{f\}$. Also, at initialization, $|R| = |U \cup \{f\}| = n - |S| = n - \lfloor B \rfloor \ge n - B - 1$. Thus, the invariants hold in the base case.

Now suppose the invariant holds at the start of some iteration of the while-loop. Hence, all the conditions required to apply \Cref{lemma:algorithm-get-values-S-analyzing-one-iteration} holds. By that lemma, the current iteration correctly identifies the item $\overline{s}$ and recovers the ratio $v_{\overline{s}}/\widehat v$. The algorithm then updates $R\gets R\cup\{\overline{s}\}$. Consequently, the ratios $v_r/\widehat v$ are known for all items in the updated set $R$, and since $R$ only grows, the bound $|R|\ge n-B-1$ continues to hold. Also, note that since $\overline{s}$ is the minimum-valued item in the old set $[n]\setminus R$, the invariant $\max_{r\in R} v_r < \min_{s\in [n]\setminus R} v_s$ is preserved as well. 

Thus, by induction, the stated invariants hold throughout the execution of \Cref{alg:get-value-ratio-of-items-in-S}. This proves the lemma.
\end{proof}

\begin{lemma}
\label{lemma:performance-guarantee-of-alg-for-computing-value-ratio-S}
\Cref{alg:get-value-ratio-of-items-in-S} correctly recovers the value ratios $\frac{v_s}{\widehat v}$ for all $s\in S$ using at most $O(B\log(1/\delta)+B^2)$ comparison-oracle queries. Moreover, it runs in time polynomial in $n$, $B$, and $\log(1/\delta)$.
\end{lemma}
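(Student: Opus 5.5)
Correctness follows from \Cref{lem:Per-iteration-correctness-and-invariant} and \Cref{lemma:algorithm-get-values-S-analyzing-one-iteration}; what remains is to count iterations and queries and to check running time.

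\textbf{Correctness.} The invariants of \Cref{lem:Per-iteration-correctness-and-invariant} hold at the start of every iteration: $\max_{r\in R} v_r < \min_{s\in[n]\setminus R} v_s$, $|R|\ge n-B-1$, and the ratios $v_r/\widehat v$ are known for $r\in R$. Hence \Cref{lemma:algorithm-get-values-S-analyzing-one-iteration} applies in every iteration. It gives three facts. Line~\ref{alg:get-value-S.line:Stern-Brocot-search} returns $p/q=\widehat v/v_{\overline s}$, the unique ratio in an interval of length $<\delta^2$ (\Cref{lemma:unique-p/q-in-interval-of-length-delta^2}). Next, $\widetilde\alpha=\alpha^*_{\overline s}$. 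Finally, $\widehat s=\overline s$. So each iteration assigns the correct value $v_{\overline s}/\widehat v=q/p$ to the correct item and adds exactly one new element of $[n]\setminus R$ to $R$.

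The loop starts with $R=U\cup\{f\}$, so $[n]\setminus R=S$. It therefore terminates after exactly $|S|=\lfloor B\rfloor$ iterations, having recovered $v_s/\widehat v$ for every $s\in S$. The initialization is valid because $\widehat v=\min_{u\in U}v_u$, since $U$ holds the lowest values. If $U=\emptyset$, we set $\widehat v=v_f$; this still satisfies the invariants.

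\textbf{Query count.} Each iteration makes two kinds of queries.
\begin{itemize}
\item The binary search over the grid $\{\beta,2\beta,\ldots,1\}$ of size $1/\beta=3n^2/\delta^3$ takes $O(\log(n/\delta))$ steps. Each step uses at most two queries, and a step where $\mathbf p(\alpha)$ does not exist uses none. Since $n<1/\delta$ (\Cref{footnote:n-less-than-1/delta}), this is $O(\log(1/\delta))$ queries.
\item The identification step (Line~\ref{alg:get-value-S.line:figure-out-s*}) uses at most $|[n]\setminus R|\le|S|\le\lfloor B\rfloor$ queries.
\end{itemize}
Summing over $\lfloor B\rfloor$ iterations gives $O(B\log(1/\delta)+B^2)$ queries.

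\textbf{Running time.} By \Cref{lemma:computation-of-triad-prices}, existence of $\mathbf p(\alpha)$, the vector $\mathbf p(\alpha)$ itself, and the triad item $k_\alpha$ are computable in polynomial time. The vectors $\mathbf q^1,\mathbf q^2,\mathbf q^3$ and $\widehat{\mathbf p}$ are explicit arithmetic on known ratios. The simplest intervening rational step takes $O(\operatorname{poly}\log(1/\delta))$ time (Appendix~\ref{appendix:Stern-Brocot}).

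The one subtle point is bit complexity: the random $\varepsilon$ appears in prices and in the membership test of Line~\ref{line:found-alpha-star}. I would handle this by sampling $\varepsilon$ to polynomially many bits, or by treating the sample as a real-RAM value. Either way every number stays of polynomial size in $n$ and $\log(1/\delta)$. So each iteration runs in $\operatorname{poly}(n,\log(1/\delta))$ time, and with $\lfloor B\rfloor<n$ iterations the total is polynomial in $n$, $B$, and $\log(1/\delta)$.
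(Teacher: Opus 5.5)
Your proposal is correct and follows the paper's proof essentially step for step. Correctness comes from the invariant lemma together with the one-iteration lemma. The loop runs exactly $|S|=\lfloor B\rfloor$ times, and each iteration costs $O(\log(1/\beta))=O(\log(1/\delta))$ binary-search queries plus at most $\lfloor B\rfloor$ identification queries. Polynomial time follows from \Cref{lemma:computation-of-triad-prices} and the simplest-intervening-rational routine.

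Your two additions go slightly beyond the paper. The $U=\emptyset$ fallback is never needed, since $B<n-1$ forces $|U|=n-1-\lfloor B\rfloor\ge 1$. The bit-complexity remark about $\varepsilon$ addresses a point the paper glosses over; note that discretizing $\varepsilon$ turns the almost-sure tie-avoidance into a high-probability guarantee rather than an almost-sure one.
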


\begin{proof}
We begin by proving correctness. Algorithm~\ref{alg:get-value-ratio-of-items-in-S} initializes $R=U\cup\{f\}$ and then proceeds iteratively to recover the value ratios of items in $S$. By \Cref{lem:Per-iteration-correctness-and-invariant,lemma:algorithm-get-values-S-analyzing-one-iteration}, in each iteration, the algorithm correctly identifies the item $\overline{s}=\arg\min_{s\in [n]\setminus R} v_s$ and recovers the ratio $v_{\overline{s}}/\widehat v$. Consequently, after $|S|$ iterations, every item in $S$ has been added to $R$. It follows that \Cref{alg:get-value-ratio-of-items-in-S} correctly recovers the ratio $v_s/\widehat v$ for every $s\in S$.

We now establish the query complexity of the algorithm. The algorithm performs exactly $|S|$ while-loop iterations, and since $S$ is the set of integrally packed items under the price vector $\mathbf 1_n$, we have $|S|=\lfloor B\rfloor$. Thus, the algorithm performs exactly $\lfloor B\rfloor$ iterations, so it suffices to bound the number of comparison queries used in a single iteration.

In one iteration, the algorithm first performs a binary search over the grid $\{\beta,2\beta,\ldots,1\}$, which uses $O(\log(1/\beta))=O(\log(1/\delta))$ comparison queries; recall that $\beta= \delta^3/3n^2$ and we have $n\le 1/\delta$. The algorithm then executes the item-identification step. At this stage, the unresolved set is $[n]\setminus R$, and we have $[n]\setminus R\subseteq S$. Hence $|[n]\setminus R|\le |S|=\lfloor B\rfloor$. Since the algorithm makes one comparison query for each item in $[n]\setminus R$, the item-identification step uses at most $\lfloor B\rfloor$ comparison queries in the iteration.

Thus, each iteration requires $O(\log(1/\delta)+B)$ comparison queries. As \Cref{alg:get-value-ratio-of-items-in-S} runs for $\lfloor B\rfloor$ iterations, the total number of comparison queries made by \Cref{alg:get-value-ratio-of-items-in-S} is $O\bigl(B\log(1/\delta)+B^2\bigr)$.

Finally, for bounding the time complexity of the algorithm note that each triad price can be computed in polynomial time (\Cref{lemma:computation-of-triad-prices}), and the Stern-Brocot search runs in time polynomial in $\log(1/\delta)$. Therefore, the overall running time is polynomial in $n$, $B$ and $\log(1/\delta)$. This completes the proof. 
\end{proof}

\subsubsection{Proofs of \Cref{lemma:computation-of-triad-prices,lemma:monotonicity-of-triad-prices}}
\label{sec:proof-of-computation-of-triad-prices}

\begin{proof}[Proof of \Cref{lemma:computation-of-triad-prices}]
As in \Cref{Definition:TriadPrices}, let the given set $R=\{r_1,r_2,\dots,r_{|R|}\}$ be indexed in decreasing order of values. Since the scaled values of the items in $R$ are known and $\arg\min_{j\in[n]} v_j \in R$---i.e., $\widehat v=\min_{r\in R} v_r$---we can efficiently compute both this indexing and the ratios $\{v_r/\widehat v\}_{r\in R}$.

Using these ratios, define $\pi_i \coloneqq \min \left\{1, \ \left(\alpha - 2 \varepsilon \right) \frac{v_{r_i}}{\widehat{v}} \right\}$ for each $1 \leq i \leq |R|$. Each $\pi_i$ is nonnegative since $\alpha \geq \beta > 2 \varepsilon$. 

By definition, for a given $\alpha$, the triad price vector $\mathbf p(\alpha)$ exists if there is an index $t<|R|$ such that $\sum_{i=1}^t \pi_i \ge B-n+|R|$, where $t$ is the smallest such index. Since each $\pi_i$ is nonnegative, this condition can be checked by scanning the prefix sums of $\pi_1,\pi_2,\ldots,\pi_{|R|-1}$ in order and selecting the first index $t$ at which the threshold is reached. Thus, one can efficiently determine whether such an index $t$ exists, and hence whether the triad price vector $\mathbf p(\alpha)$ exists. Whenever it does, we set $k_\alpha=r_{t+1}$ and compute $\mathbf p(\alpha)$ as specified in \Cref{Definition:TriadPrices}.

Also, for every maintained set $R$, we have $|R| \ge n-|S| = n-\lfloor B\rfloor$, and hence $B-n+|R| \ge B-n+(n-\lfloor B\rfloor)=B-\lfloor B\rfloor>0$. In particular, whenever the triad price vector exists, the corresponding index $t$ must satisfy $t\ge 1$. 

Overall, one can determine the existence of triad prices and compute them in $O(n\log n)$ time. This completes the proof.

\end{proof}

\begin{proof}[Proof of \Cref{lemma:monotonicity-of-triad-prices}]
Since the triad price vector $\mathbf p(\widehat{\alpha})$ exists for the given $\widehat{\alpha}\in(0,1)$, there exists an index $\widehat t<|R|$ such that $\sum_{i=1}^{\widehat t}\min\left\{1,\;(\widehat{\alpha}-2\varepsilon)\frac{v_{r_i}}{\widehat v}\right\}\ge B-n+|R|$.

Now fix any $\alpha\in[\widehat{\alpha},1)$. For each $r\in R$, we have
\[
\min\left\{1,\;(\widehat{\alpha}-2\varepsilon)\frac{v_r}{\widehat v}\right\}
\le
\min\left\{1,\;(\alpha-2\varepsilon)\frac{v_r}{\widehat v}\right\},
\]
because $\widehat{\alpha}\le \alpha$ and the map $x\mapsto \min\{1,x\}$ is non-decreasing. Therefore, 
\begin{align*}
B - n + |R|
\leq
\sum_{i=1}^{\widehat{t}}
\min\left\{
1,\,
(\widehat{\alpha}
- 2\varepsilon )\frac{v_{r_i}}{\widehat{v}}
\right\} \leq
\sum_{i=1}^{\widehat{t}}
\min\left\{
1,\,
(\alpha
- 2\varepsilon )\frac{v_{r_i}}{\widehat{v}}
\right\}.
\end{align*}
It follows that the threshold condition required for the existence of the triad price vector is also satisfied at $\alpha$. Hence, $\mathbf p(\alpha)$ exists, and the lemma stands proved. 
\end{proof}

\subsection{Proof of \Cref{theorem:query-upper-bound-KwC}}

Finally, we restate and prove our main result for the Knapsack using Comparison Oracle (KCO) problem. 

\TheoremMainKCO*
\begin{proof}[Proof of \Cref{theorem:query-upper-bound-KwC}]
As outlined above, our algorithm for the KCO problem proceeds in three phases. The first phase (\Cref{alg:get-partition}) identifies the set $S$ of integrally packed items, the unique fractionally packed item $f$, and the set $U$ of unpacked items for the optimal solution under the price vector $\mathbf{1}_n$. The second  (\Cref{alg:get-value-ratio-for-items-in-U}) and third (\Cref{alg:get-value-ratio-of-items-in-S}) phases recover the scaled values of items in $U \cup \{f\}$ and $S$, respectively. The correctness of the three phases---and therefore of the overall algorithm---follows directly from \Cref{lemma:identification-of-sets-S-U-f,lemma:estimating-profit-of-items-in-U,lemma:performance-guarantee-of-alg-for-computing-value-ratio-S}.

Furthermore, summing the query bounds of the three phases, we obtain that the total query complexity of our KCO algorithm is $O\bigl(n + n\log(1/\delta) + B\log(1/\delta) + B^2\bigr) = O\bigl(n\log(1/\delta) + B^2\bigr)$, as stated. Similarly, the polynomial-time guarantees for the three phases established in \Cref{lemma:identification-of-sets-S-U-f,lemma:estimating-profit-of-items-in-U,lemma:performance-guarantee-of-alg-for-computing-value-ratio-S} imply that the overall algorithm also runs in polynomial time. 

Therefore, the KCO problem can be solved by our algorithm with the stated query complexity and running-time guarantees. This completes the proof of the theorem.
\end{proof}

\section{Lower Bound for Knapsack using Comparison Oracle}\label{sec:lower_bound}

In this section, we establish a lower bound on the query complexity of the {Knapsack using Comparison Oracle} (KCO) problem (see \Cref{definition:knapsack-with-comparison}) via a randomized reduction from the \textsc{Index} problem in the one-way communication model.

\subsection{One-Way Communication Complexity of \textsc{Index}}
In the \textsc{Index} problem, Alice holds a bit-string ${\bf x} \in \{0, 1\}^m$ and Bob holds an index $i^* \in [m]$. Alice sends a single message to Bob, after which Bob must output the bit $x_{i^*}$. The communication complexity of this problem is stated as follows.

\begin{restatable}[{\cite{roughgarden16communication}}]{theorem}{IndexComplexity}
\label{thm:index_complexity}
Any one-way protocol that solves the \textsc{Index} problem on a string of length $m$ with probability at least $\frac{5}{9}$ requires Alice to send a message of size $\Omega(m)$ bits.
\end{restatable}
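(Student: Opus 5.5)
The plan is the standard information-theoretic argument. First we reduce to a distributional statement, and then we bound the entropy of Alice's message from below via Fano's inequality.

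\textbf{Reduction to the uniform distribution.} Fix a (possibly public-coin) randomized one-way protocol $\Pi$ in which Alice sends a message $M$ of at most $c$ bits, and which succeeds with probability at least $5/9$ on every input. Draw $\mathbf X$ uniformly from $\{0,1\}^m$ and $I$ uniformly from $[m]$, independently of each other and of the public randomness $\Rho$. Then $\Pr[\text{Bob outputs } X_I]\ge 5/9$ still holds. For each $i\in[m]$ let $e_i$ be the error probability of Bob's output when $I=i$; this averages over $\mathbf X$ and $\Rho$. Since $M$ is computed by Alice without knowledge of $I$, the pair $(M,\Rho)$ has the same joint distribution with $\mathbf X$ for every value of $I$. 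Hence Bob's guess for index $i$ is a function of $(M,\Rho,i)$ plus private coins, and $\frac1m\sum_i e_i\le 4/9$.

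\textbf{Entropy bound.} We compute
\[
c \;\ge\; H(M\mid \Rho)\;\ge\; I(\mathbf X;M\mid \Rho)\;=\;m-H(\mathbf X\mid M,\Rho),
\]
where the equality uses that $\mathbf X$ is uniform and independent of $\Rho$. Subadditivity gives $H(\mathbf X\mid M,\Rho)\le \sum_{i=1}^m H(X_i\mid M,\Rho)$. Because Bob predicts the binary variable $X_i$ from $(M,\Rho)$ with error $e_i$, Fano's inequality yields $H(X_i\mid M,\Rho)\le h(e_i)$, where $h$ is the binary entropy function. Extra private coins of Bob do not affect this, since they are independent of $\mathbf X$. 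If some $e_i$ exceeds $1/2$, we replace it by the trivial bound $h(\cdot)\le 1$; in fact $\min(h(e_i),1)$ with concavity handles this case. By concavity of $h$ and monotonicity on $[0,1/2]$,
\[
\sum_i H(X_i\mid M,\Rho)\;\le\; m\,h\Bigl(\tfrac1m\textstyle\sum_i e_i\Bigr)\;\le\; m\,h(4/9).
\]
Combining the two displays gives $c\ge m\bigl(1-h(4/9)\bigr)$. Since $h(4/9)<1$, this is $\Omega(m)$.

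\textbf{Main obstacle.} The delicate step is justifying the per-index Fano application. One must argue that Alice's message distribution is independent of $I$ (this is where one-wayness is used), so that each $e_i$ is a genuine prediction error of $X_i$ from $(M,\Rho)$. One must also handle indices with $e_i>1/2$ correctly, by using $\min\{h(e_i),1\}$, which is still concave and nondecreasing in $e_i$ up to $1/2$. Once these points are in place, the remaining steps are routine.
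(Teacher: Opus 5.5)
The paper gives no proof of this statement; it imports it from Roughgarden's notes and uses it as a black box in the \textsc{Index} reduction. Your argument is a correct, self-contained proof. It is phrased differently from the standard proof in the cited source, which goes as follows. Apply Yao's principle to fix a deterministic protocol with small constant error $\eta$ on the uniform distribution. By Markov's inequality, at least $2^{m-1}$ strings $\mathbf x$ have the property that Bob errs on at most $2\eta m$ indices. All such strings sharing a message $z$ then lie in the Hamming ball of radius $2\eta m$ around Bob's answer vector for $z$. That ball has at most $2^{h(2\eta)m}$ points, so Alice needs at least $2^{(1-h(2\eta))m-1}$ messages.

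You instead condition on the public coins $\mathcal R$ rather than invoking Yao's principle, and you replace the ball count by subadditivity plus binary Fano. The two are the same estimate in different languages: $2^{h(\cdot)m}$ bounds a ball's size in one, and $\sum_i h(e_i)$ bounds $H(\mathbf X\mid M,\mathcal R)$ in the other. Your version works directly at error $4/9$ and yields $c\ge (1-h(4/9))m$ with no amplification or Markov step. The counting proof needs $2\eta<1/2$, so at the paper's error level it requires either prior amplification (independent repetition with majority vote, which is valid for one-way protocols) or a Markov threshold chosen in $(4/9,1/2)$. In exchange, the counting proof uses no information theory beyond the binomial-sum estimate.

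Two of the points you flag as delicate are simpler than you make them. First, the $5/9$ guarantee holds on every input $(\mathbf x,i)$, so each $e_i\le 4/9$ individually, and Jensen's inequality is not needed. Second, the binary Fano bound $H(X_i\mid M,\mathcal R)\le h(e_i)$ holds for every $e_i\in[0,1]$. To see this, write $X_i=\widehat X_i\oplus E$ with $\Pr[E=1]=e_i$, so that $H(X_i\mid \widehat X_i)=H(E\mid\widehat X_i)\le H(E)$. Hence indices with $e_i>1/2$ need no special handling, and your $\min\{h(e_i),1\}$ is just $h(e_i)$. A minor point: if messages may have variable length up to $c$ bits, use $H(M\mid\mathcal R)\le c+1$.
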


\subsection{Lower Bound Construction}

\begin{restatable}{theorem}{QueryLowerBound}
\label{query-lower-bound}
Let $\delta < \frac{1}{6n^2}$ and $n \geq 3$. Any randomized algorithm $\mathcal{A}$ that solves the Knapsack using Comparison Oracle problem with success probability at least $2/3$ must make $\Omega \left(n\log(1/\delta)\right)$ queries to the comparison oracle $\mathcal{O}_{\bf w}$.
\end{restatable}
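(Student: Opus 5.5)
We reduce from \textsc{Index} (\Cref{thm:index_complexity}) on strings of length $m = \Theta(n\log(1/\delta))$. The plan: Alice encodes her bit-string $\mathbf x$ into a KCO instance $\mathbf v(\mathbf x)$, runs the algorithm $\mathcal A$ herself (she knows $\mathbf v$, hence can answer every oracle query), and sends Bob the sequence of oracle answers. Each answer lies in $\{-1,0,1\}$, so a run of $\mathcal A$ with $T$ queries is a message of $O(T)$ bits. Bob uses shared public randomness (or a fixed seed, with Newman's theorem absorbing the cost) to replay $\mathcal A$ on those answers, obtains $\widehat{\mathbf v}$, decodes $\mathbf x$, and outputs $x_{i^*}$. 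Since $\mathcal A$ succeeds with probability $2/3 \ge 5/9$, we get $T = \Omega(m) = \Omega(n\log(1/\delta))$.

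\textbf{Encoding.} Split $\mathbf x$ into $n-1$ blocks of $b = \lfloor \tfrac12\log_2(1/\delta)\rfloor - O(\log n)$ bits each. Fix a reference item $n$ with value $v_n = 1$. Block $j$ encodes an integer $c_j \in \{0,\dots,2^b-1\}$, and we set
\[
v_j = \bigl(K + j\cdot 2^{b} + c_j\bigr)\delta
\]
for some offset $K$. This gives distinct grid values inside a narrow window, say $[1/2, 1-\delta]$. Because $\delta<1/(6n^2)$, the window has room for $n\cdot 2^b \approx n/\sqrt\delta$ grid points. That is the reason for taking only half of the $\log(1/\delta)$ bits per item; this costs only a constant factor, so $m = \Omega(n\log(1/\delta))$ bits overall.

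We choose the capacity $B$ as any fixed non-integral value satisfying the constraint of \Cref{definition:knapsack-with-comparison} uniformly over all encoded instances. Since $\min_i v_i/\max_j v_j \ge 1/2$, the value $B = 1/2$ works once $n \ge 3$. The weight vector is $\mathbf w = \mathbf 1_n$.

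\textbf{Decoding.} Recovery up to scale yields the ratios $v_j/v_n = v_j$, which determine each $c_j$ and hence all of $\mathbf x$. Bob therefore knows $x_{i^*}$ whenever $\mathcal A$ succeeds.

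\textbf{Main obstacle.} The delicate step is making the instances satisfy every KCO promise (distinctness, the grid condition, and the capacity bound) uniformly, while keeping $B$ independent of $\mathbf x$ so that Bob can simulate without knowing it. A further subtlety is that $\mathcal A$'s queries are adaptive. This is handled automatically because Bob replays $\mathcal A$ deterministically given the shared randomness and the transcript of answers; I would state this simulation argument carefully.
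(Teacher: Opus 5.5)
Your proposal is correct, and its skeleton matches the paper's. Alice simulates $\mathcal A$ on an instance encoding her string and sends the ternary transcript of oracle answers; Bob replays $\mathcal A$ with the shared coins. The difference is the instance construction.

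The paper proceeds as follows:
\begin{itemize}
\item It uses all $k=\lceil\log_2(1/\delta)\rceil$ bits per item.
\item It XORs each block with a public random block, so the values are distinct only with probability at least $1-(n^2+3n)\delta/2$.
\item It appends an $(n+1)$-st anchor item of value $1$.
\item It takes the input-dependent capacity $B=(n-2)\min_i v_i$, so Alice must also send $\min_i v_i$ for Bob to know $B$.
\end{itemize}
You instead place the values deterministically in the disjoint integer ranges $\{K+j2^b,\dots,K+(j+1)2^b-1\}$, scaled by $\delta$, inside $[1/2,1-\delta]$. This makes distinctness automatic and keeps exactly $n$ items. It also permits the fixed capacity $B=1/2$, which is valid since $\min_i v_i/\max_j v_j\ge 1/2$ and $n\ge 3$. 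The success probability stays at $\mathcal A$'s $2/3$ with no failure event to subtract. You thereby avoid the paper's union bound, which as written uses $\delta<1/(9n^2)$ rather than the stated $\delta<1/(6n^2)$ to push the failure probability below $1/9$. The price is a constant factor in the number of bits per item.

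One parameter slip needs fixing: do not subtract $O(\log n)$ from $b$. The hypothesis allows $\delta$ just below $1/(6n^2)$, where $\tfrac12\log_2(1/\delta)=\log_2 n+O(1)$. Subtracting $c\log_2 n$ with $c\ge 1$ then leaves $b=O(1)$ and gives only an $\Omega(n)$ bound, contrary to your claim that the loss is a constant factor. The subtraction is unnecessary. Take $b=\lfloor\tfrac12\log_2(1/\delta)\rfloor$ and $K=\lceil 1/(2\delta)\rceil$. Then $(K+n2^b)\delta\le\tfrac12+\delta+n\sqrt\delta<\tfrac12+\delta+1/\sqrt6<1$, so all $n-1$ blocks lie in $[1/2,1-\delta]$. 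Moreover $b\ge\tfrac12\log_2(1/\delta)-1=\Omega(\log(1/\delta))$, because $\log_2(1/\delta)>\log_2 54$. Public coins, as in the paper, suffice, so you can drop the aside about a fixed seed and Newman's theorem.
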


\begin{proof}
Let $k=\lceil \log_2(1/\delta)\rceil$ be the bit precision. We reduce from an instance of \textsc{Index} on an input of length $m=nk$. Alice and Bob share a public random string ${\bf R}\in\{0,1\}^m$, partitioned into $n$ consecutive blocks ${\bf r}^{(1)},\ldots,{\bf r}^{(n)}$, each consisting of $k$ bits. They also share the random bits used by the algorithm $\mathcal A$.

Alice receives ${\bf x}\in\{0,1\}^m$ and partitions it into equal-length blocks $\{{\bf x}^{(i)}\}_{i=1}^n$. For each $i\in[n]$, she computes the perturbed string ${\bf y}^{(i)}={\bf x}^{(i)}\oplus{\bf r}^{(i)}$. She then interprets ${\bf y}^{(i)}$ as the binary representation of an integer $z_i\in\{1,\ldots,1/\delta\}$ and sets $v_i=z_i\delta$. Additionally, Alice appends an anchor item with value $v_{n+1}=1$ to form the value vector ${\bf v}=(v_1,\ldots,v_n,v_{n+1})$. Finally, Alice sets the knapsack capacity $B = (n - 2) \cdot \min_{i \in [n + 1]} v_i \leq (n-1)(\min_{i \in [n + 1]} v_i - \delta^2/n^2)$. Note that $\max_{i \in [n+1]} v_i = 1$.

The reduction yields a valid KCO instance provided that all $n+1$ values in ${\bf v}$ are distinct and are non-zero. To establish this, note that the random strings ${\bf r}^{(1)},\ldots,{\bf r}^{(n)}$ are independent and uniformly distributed over $\{0,1\}^k$. Hence, for any distinct indices $i,j\in[n]$, we have $$\Pr \left\{ {\bf x}^{(i)}\oplus{\bf r}^{(i)} = {\bf x}^{(j)}\oplus{\bf r}^{(j)}\right\} = 1/2^k \le \delta.$$ Therefore, $\Pr \left\{ v_i = v_j \right\} \leq \delta$ for all distinct $i, j \in [n]$. In addition, for any $i\in[n]$, the probability that $v_i=v_{n+1}=1$ or that $v_i = 0$ is at most $2\delta$. Applying a union bound over all pairs of items, we conclude that the probability that the reduction fails to produce a valid instance (i.e., that two values coincide or some value is $0$) is at most
$
\Pr \bigl\{ \text{there exists } i\neq j \in [n+1] \text{ s.t. } v_i=v_j\bigr\} \leq \binom{n}{2}\delta + 2n\delta = \left(\frac{n^2-n}{2}+n\right)\delta = \frac{(n^2+3n)\delta}{2}.
$

Since $\delta < \frac{1}{9n^2}$, the above probability is at most $\frac{(n^2+3n)\delta}{2} < \frac{n^2+3n}{18n^2} = \frac{1+3/n}{18} \leq \frac{1}{9}$, $n \geq 3$. Thus, with probability at least $8/9$, the reduction maps the given \textsc{Index} instance to a valid KCO instance in which all $n+1$ values are distinct and the knapsack capacity is appropriately bounded.

Now suppose, towards a contradiction, that there exists an algorithm $\mathcal{A}$ that solves KCO using $q=o(n\log(1/\delta))$ comparison-oracle queries and the knowledge of the knapsack capacity $B$. Alice inputs $B$ to $\mathcal{A}$ and then simulates the execution of $\mathcal{A}$ locally. For each comparison query $({\bf p},{\bf p'})$ issued by $\mathcal{A}$, she explicitly computes the oracle response $b = \operatorname{sign} \left( {\bf w}^\top {\bf x}^*({\bf p}) - {\bf w}^\top {\bf x}^*({\bf p'}) \right)$ and returns it to $\mathcal{A}$. Once the simulation is complete, Alice sends to Bob the transcript $\Pi$ consisting of the sequence of oracle responses observed during the execution and the binary representation $\Gamma$ of $\min_{i \in [n+1]} v_i$. Note that since the oracle responses lie in $\{-\mathsf 1, \mathsf 0, \mathsf 1\}$, the length of $\Pi$ is at most $2q$ bits. Furthermore, the length of $\Gamma$ is $\log(1/\delta)$ since $\min_{i \in [n+1]} v_i$ is of the form $j\delta$, where $j \in \{1,2,\ldots,k-1\}$. Hence, the length of the message sent by Alice to Bob is at most $2q + \log(1/\delta)$.

Bob recovers the knapsack capacity $B = (n-2) \min_{i \in [n+1]} v_i$ using $\Gamma$, and then re-simulates $\mathcal{A}$ using the transcript $\Pi$ and $B$. Upon termination, the algorithm---correctly with probability at least $\frac{2}{3}$---outputs a scaled value vector $\widehat{{\bf v}}={\bf v}/\lambda$. Since $v_{n+1}=1$, Bob can recover the scaling factor as $\lambda=1/\widehat{v}_{n+1}$ and hence reconstruct the entire value vector via ${\bf v}=\lambda\widehat{{\bf v}}$. From the recovered values $v_i$, he obtains the corresponding perturbed strings ${\bf y}^{(i)}$ and computes ${\bf x}^{(i)}={\bf y}^{(i)}\oplus{\bf r}^{(i)}$. In particular, Bob recovers the desired bit $x_{i^*}$ from the recovered blocks.

This protocol enables Bob to solve \textsc{Index} using only at most $2q + \log(1/\delta)=o(m)$ bits of communication, where $m=nk=n\lceil \log_2(1/\delta)\rceil$. Since the reduction produces a valid KCO instance with probability at least $8/9$ and the algorithm $\mathcal{A}$ succeeds with probability at least $2/3$, Bob correctly recovers $x_{i^*}$ with probability at least $1-\left(\frac{1}{9}+\frac{1}{3}\right)=\frac{5}{9}$. This contradicts the $\Omega(m)$ one-way communication lower bound for \textsc{Index} (Theorem~\ref{thm:index_complexity}). We conclude that any algorithm solving KCO must make $q=\Omega\left(n\log(1/\delta)\right)$ comparison-oracle queries, thereby proving the theorem.
\end{proof}

\begin{remark}
Our lower-bound construction continues to hold even when the algorithm is given the comparison weight vector $\mathbf w$ used by the oracle. In fact, the lower bound is more general: for the problem of recovering the value vector, the same argument applies to any oracle that returns only a constant number of bits per query. This demonstrates that the fundamental difficulty of the problem stems from the limited amount of information available to the algorithm from each query.
\end{remark}

\section{Packing Linear Program using Comparison Oracle}
\label{section:plco}
This section extends our fractional knapsack result to the more general setting of packing linear programs (LPs). We consider a model where the algorithm may query two constraint matrices $\mathbf{P}, \mathbf{Q} \in [0,1]^{m\times n}$. For an unknown but fixed objective vector $\mathbf{v} \in (0,1]^n$ and a known capacity vector $\mathbf{b} = (b_1, \ldots, b_m) \in \mathbb{R}^m_{>0}$, let $\mathbf x^*(\mathbf P)$ and $\mathbf x^*(\mathbf Q)$ denote the optimal solutions, selected according to an unknown but fixed tie-breaking rule, to the packing programs 
\[
\max_{\mathbf x\in[0,1]^n} \ \mathbf v^\top \mathbf x \quad \text{s.t.} \quad \mathbf{P} \mathbf{x} \le \mathbf{b} 
\qquad\text{and}\qquad
\max_{\mathbf x\in[0,1]^n} \ \mathbf v^\top \mathbf x \quad \text{s.t.} \quad \mathbf{Q} \mathbf{x} \le \mathbf{b},
\]
respectively. The oracle then reports 
\begin{align*}
    \mathcal{O}_{\mathbf w}(\mathbf P, \mathbf Q) = \operatorname{sign}\big(\mathbf w^\top \mathbf x^*(\mathbf P) - \mathbf w^\top \mathbf x^*(\mathbf Q)\big),
\end{align*} 

The \emph{Packing Linear Program using Comparison Oracle} problem is to recover the value vector $\mathbf v$ by making comparison queries with pairs of constraint matrices $\mathbf P, \mathbf Q \in [0,1]^{m \times n}$. Formally, 
\begin{definition}[Packing Linear Program using Comparison Oracle (PLCO)]
\label{defn:plco}
For an unknown value vector $\mathbf{v} \in (0,1]^n$ with distinct coordinates $v_i \in \delta \mathbb{Z}_{>0}$, known capacity vector $\mathbf b \in \mathbb R^m_{>0}$ satisfying $\min_{i \in [m]} \ b_i \leq (n - 1) \left(\frac{\min_{i \in [n]} v_i}{\max_{j \in [n]} v_j} - \frac{\delta^2}{n^2}\right)$, and given query access to the comparison oracle $\mathcal O_{\mathbf w}: [0,1]^{m \times n} \times [0,1]^{m \times n} \to \{-\mathsf 1, \mathsf 0, \mathsf 1\}$, the \emph{Packing Linear Program using Comparison Oracle} ({PLCO}) is to recover a vector $\widehat {\mathbf v} = \mathbf v / \lambda$ for some scalar $\lambda > 0$.
\end{definition}

Throughout, for any constraint matrix $\mathbf P\in[0,1]^{m\times n}$ and any index $i\in[m]$, we will write $\mathbf p^{(i)}\in[0,1]^n$ to denote the $i$th row of $\mathbf P$, viewed as a vector.

The following theorem establishes our query upper bound for the PLCO problem.

\begin{restatable}[Upper Bound]{theorem}{PLCOUpperBound}
\label{thm:plco-upper-bound}
    The Packing Linear Program using Comparison Oracle problem admits a polynomial-time algorithm with query complexity $O(n \log(1/\delta) + \min_{i \in [m]} b_i^2)$.
\end{restatable}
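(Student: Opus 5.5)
Fix an index $i^\star \in \arg\min_{i\in[m]} b_i$ and set $B \coloneqq b_{i^\star}$. The plan is to reduce PLCO to KCO with this capacity $B$. Every knapsack query $(\mathbf p,\mathbf q)$ made by the KCO algorithm of \Cref{theorem:query-upper-bound-KwC} is simulated by the PLCO query $(\mathbf P,\mathbf Q)$. Here $\mathbf P$ has $i^\star$-th row $\mathbf p^{(i^\star)}=\mathbf p$ and all other rows equal to $\mathbf 0$, and $\mathbf Q$ is built from $\mathbf q$ in the same way. Both matrices lie in $[0,1]^{m\times n}$ because $\mathbf p,\mathbf q\in[0,1]^n$.

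The first step is to check that the packing LP under $\mathbf P$ is literally the fractional knapsack problem. A zero row $i\neq i^\star$ gives the constraint $0\le b_i$, which always holds since $\mathbf b>0$. The feasible region is therefore $\{\mathbf x\in[0,1]^n:\mathbf p^\top\mathbf x\le B\}$, so the PLCO optimal solution $\mathbf x^*(\mathbf P)$ is an optimal solution of the knapsack instance with prices $\mathbf p$ and capacity $B$.

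The hypothesis of \Cref{defn:plco} on $\min_i b_i$ is exactly the capacity condition of \Cref{definition:knapsack-with-comparison}. The values are distinct and on the $\delta$-grid, so the KCO instance is valid. Consequently the simulated responses $\mathcal O_{\mathbf w}(\mathbf P,\mathbf Q)$ equal $\operatorname{sign}(\mathbf w^\top\mathbf x^*(\mathbf p)-\mathbf w^\top\mathbf x^*(\mathbf q))$, and running the KCO algorithm on them returns $\mathbf v/\lambda$. This takes $O(n\log(1/\delta)+B^2)=O(n\log(1/\delta)+\min_i b_i^2)$ queries, and polynomial time is preserved because each query costs $O(mn)$ extra time to write down.

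The main obstacle is tie-breaking. The KCO analysis assumes $\mathbf x^*(\mathbf p)$ is the greedy solution, while PLCO returns an optimal solution under an unknown fixed rule. Since the values are distinct, the LP optimum is unique unless two value-to-price ratios coincide at the marginal position.

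I will argue that every query price vector used by the algorithm avoids such ties. At $\mathbf 1_n$ and its perturbations the ratios are distinct values or have grid separation. In \Cref{alg:get-value-ratio-for-items-in-U} and \Cref{alg:get-value-ratio-of-items-in-S}, the random $\gamma,\varepsilon$ make ratio coincidences with the marginal item measure zero. Triad prices deliberately give equal ratios $\widehat v/(\alpha-2\varepsilon)$ to items in $R_1$, but the proofs only use that these items are all fully packed, or that capacity is exhausted before $k_\alpha$. I expect the marginal-tie case within a block to matter only through the total $\mathbf w$-weight, and I would re-verify \Cref{lemma:oracle-response-characterization-general-one-coordinate-change} for an arbitrary optimal solution.

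If that check fails, the fallback is to perturb the triad prices within $R_1$ by distinct infinitesimal amounts, well below $\varepsilon$. This makes the optimum unique without affecting any inequality used in the proofs.
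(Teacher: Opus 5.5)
Your reduction is the paper's proof. Fix $i^\star\in\argmin_{i} b_i$, embed each knapsack query $(\mathbf p,\mathbf q)$ as the $i^\star$-th row of otherwise-zero matrices, drop the zero rows (they only impose $0\le b_i$), and invoke \Cref{theorem:query-upper-bound-KwC} with $B=b_{i^\star}$. The paper stops there and simply asserts that $\mathcal O_{\mathbf w}(\mathbf P,\mathbf Q)$ equals the KCO response on $(\mathbf p,\mathbf q)$. In doing so it implicitly takes the PLCO tie-breaking rule, on these single-row instances, to return the greedy fixed-item-order solution of the KCO model. So your tie-breaking discussion is not needed to match its argument.

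If you do want to allow an arbitrary fixed tie-breaking rule, your sketch fails as written, because the claim that every query avoids marginal ties is false.

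\textbf{The identification vector.} In the item-identification step, $\widehat{\mathbf p}$ gives every $r\in R$ the same ratio $\widehat v/(\widetilde\alpha-2\varepsilon)$. By \Cref{lemma:item-s*-can-not-be-integrally-packed-under-p-hat}, the capacity runs out strictly inside $R$:
\begin{itemize}
\item the residual after $[n]\setminus(R\cup\{\overline s\})$ is at least $B-\lfloor B\rfloor+1>1$;
\item that residual is below the total price of $R$;
\item $|R|\ge 2$.
\end{itemize}
So the LP optimum is a whole face. Moreover, $\widehat{\mathbf p}$ and $\widehat{\mathbf p}-\frac{\varepsilon}{3}\mathbf e_{\overline s}$ have the same optimal face. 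An adversarial rule may therefore return two different points of it with different $\mathbf w$-weights, which breaks \Cref{lemma:oracle-response-for-determining-which-item-is-s*}. For the same reason, ``the marginal tie matters only through the total $\mathbf w$-weight'' is not an argument: that weight varies over the face.

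\textbf{The triad vectors.} The $R_2$ block is also tied, with common ratio $\widehat v/\alpha$. When $\alpha\le\alpha^*_{\overline s}$ and $\alpha<\widehat v/v_{\overline s}$, this block precedes $\overline s$, so the marginal item can lie in it.

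\textbf{What a fix would need.} Perturbing only $R_1$ is insufficient. You would need distinct tiny perturbations on every tied block: all of $R$ in $\widehat{\mathbf p}$, and both $R_1$ and $R_2$ in the triad vectors, with $R_1$ itself defined from the perturbed prices. Then every queried LP has a unique optimum, which coincides with the greedy solution under any rule. You would still have to re-verify the inequalities of \Cref{lemma:k-packed-only-at-alpha-lesser-than-optimal,corollary:orcale-response-at-alpha-lesser-than-optimal,lemma:item-s*-can-not-be-integrally-packed-under-p-hat} under the perturbed prices. The alternative is to adopt the paper's convention and drop the issue.
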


\begin{proof}
Let $\widehat{i} \in \argmin_{i \in [n]} b_i$ and note that $b_{\widehat i} \leq (n - 1)(\frac{\min_{i \in [n]} v_i}{\max_{j \in [n]} v_j} - \frac{\delta^2}{n^2})$. Also, we have that the values $v_i \in (0,1]$ lie on the $\delta$-grid and are distinct.

 Consider the following reduction to the {KCO} problem with value vector same as the PLCO instance's value vector $\mathbf v$, and the knapsack capacity as $b_{\widehat i}$. For a comparison query $\mathbf p, \mathbf q$ issued by the KCO algorithm, we construct the matrices $\mathbf P$ and $\mathbf Q$ specified row-wise as
    \begin{align*}
        \mathbf p^{(i)} = \begin{cases}
            \mathbf 0, & i \neq \widehat i\\
            \mathbf p, & i = \widehat i
        \end{cases}\ ,
        &&
        \mathbf q^{(i)} = \begin{cases}
            \mathbf 0, & i \neq \widehat i\\
            \mathbf q, & i = \widehat i
        \end{cases}.
    \end{align*}
    The PLCO comparison oracle is then queried with the matrices $\mathbf P$ and $\mathbf Q$, and its response, $\mathcal O_{\mathbf w}(\mathbf P,\mathbf Q)$, is forwarded to the KCO algorithm.

    We next focus only on the matrix $\mathbf P$, as the same argument applies to $\mathbf Q$. For every row $i\neq\widehat{i}$, we have $\mathbf p^{(i)\top}\mathbf x=0\le b_i$ for all $\mathbf x\in[0,1]^n$. Hence, all constraints corresponding to rows $i\neq\widehat{i}$ can be omitted. Therefore, under the constraint matrix $\mathbf P$, the packing LP is equivalent to the fractional knapsack problem
\[\max_{\mathbf x\in[0,1]^n}\ \mathbf v^\top\mathbf x \qquad \text{s.t.} \qquad \mathbf p^\top\mathbf x\le b_{\widehat{i}}.\]
Therefore, the PLCO oracle's response to the query $(\mathbf P,\mathbf Q)$ is identical to the KCO oracle's response to the query $(\mathbf p,\mathbf q)$.

Therefore, by \Cref{theorem:query-upper-bound-KwC}, the KCO algorithm recovers the value vector $\mathbf v$, up to a scaling factor, in polynomial time using $O(n\log(1/\delta)+b_{\widehat{i}}^2)$ comparison queries. Since the reduction from PLCO to KCO takes polynomial time, the claimed computational efficiency follows. This completes the proof of the theorem.
\end{proof}

\begin{restatable}[Lower Bound]{theorem}{PLCOLowerBound}
\label{thm:plco-lower-bound}
    Let $\delta < \frac{1}{6n^2}$ and $n \geq 3$. Any randomized algorithm $\mathcal A$ that solves the Packing Linear Program using Comparison Oracle problem with success probability at least $2/3$ must make $\Omega(n \log(1/\delta))$ queries to the comparison oracle $\mathcal O_{\bf w}$.
    
\end{restatable}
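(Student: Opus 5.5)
The plan is to reduce KCO to PLCO: any PLCO algorithm with few queries would yield a KCO algorithm with the same number of queries, and \Cref{query-lower-bound} then applies. The reduction is the converse of the embedding in the proof of \Cref{thm:plco-upper-bound}. Given a KCO instance with value vector $\mathbf v$ and capacity $B$, we construct a PLCO instance with the same $\mathbf v$ and the capacity vector $\mathbf b=(B,\ldots,B)\in\mathbb R^m_{>0}$. Then $\min_i b_i=B$, so the capacity condition in \Cref{defn:plco} follows directly from the one in \Cref{definition:knapsack-with-comparison}. Distinctness of the values and the $\delta$-grid condition carry over unchanged.

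The central step is to simulate PLCO oracle queries using KCO oracle queries. Suppose a PLCO algorithm $\mathcal A$ asks about matrices $(\mathbf P,\mathbf Q)$. For a general matrix, the packing LP optimum is not a knapsack optimum, so a direct query-by-query simulation is not available. We avoid this obstacle instead of simulating arbitrary matrices. We rerun the Index-based argument of \Cref{query-lower-bound} directly for PLCO.

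In that argument, Alice only needs to be able to compute every oracle answer locally, since she knows $\mathbf v$, $\mathbf w$ and $\mathbf b$. She can do this for arbitrary matrices $\mathbf P,\mathbf Q$. She solves the two packing LPs, choosing optimal solutions by the fixed tie-breaking rule. She then sends the sign of $\mathbf w^\top \mathbf x^*(\mathbf P)-\mathbf w^\top \mathbf x^*(\mathbf Q)$, which costs at most $2$ bits per query.

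The rest of the proof follows \Cref{query-lower-bound} step by step. Alice builds $\mathbf v$ from the masked blocks $\mathbf x^{(i)}\oplus\mathbf r^{(i)}$ and appends the anchor item $v_{n+1}=1$. She sets $b_j=(n-2)\min_i v_i$ for every $j\in[m]$, which respects the PLCO capacity bound. As before, the instance is valid with probability at least $8/9$ when $\delta<1/(9n^2)$.

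Alice's message consists of the transcript of oracle answers together with the $O(\log(1/\delta))$-bit encoding of $\min_i v_i$. From this encoding Bob reconstructs $\mathbf b$. He then replays $\mathcal A$ with the shared randomness and obtains $\widehat{\mathbf v}$. Normalizing by $\widehat v_{n+1}$ gives him $\mathbf v$, from which he decodes $x_{i^*}$.

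If $\mathcal A$ made $q=o(n\log(1/\delta))$ queries, the protocol would solve \textsc{Index} on $nk$ bits with success probability at least $5/9$ using $o(nk)$ communication. This contradicts \Cref{thm:index_complexity}.

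The only point that needs checking is that Bob's replay is faithful. This requires that the queries of $\mathcal A$ depend only on past oracle answers, $\mathbf b$, and the shared randomness. That holds because the algorithm never observes $\mathbf v$ directly.

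The main subtlety is validity of the instance: the PLCO definition constrains only $\min_i b_i$, and choosing all $b_j$ equal to the knapsack capacity satisfies it. The $\delta$ condition $\delta<1/(6n^2)$ is handled exactly as in the knapsack proof, with the same union-bound computation.
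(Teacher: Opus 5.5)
Your proposal is correct and matches the paper's proof. After noting that a black-box KCO-to-PLCO simulation fails for general constraint matrices, you rerun the \textsc{Index} argument of \Cref{query-lower-bound} with $\mathbf b = B\mathbf 1_m$; Alice simulates the PLCO oracle locally and sends the $2q$-bit transcript together with the $O(\log(1/\delta))$-bit encoding of $\min_i v_i$, exactly as the paper does.

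One bookkeeping point, which you inherit from the paper, concerns the probability calculation. The $8/9$ validity figure needs $\delta<1/(9n^2)$. Under the stated hypothesis $\delta<1/(6n^2)$, the union bound only guarantees validity probability above $5/6$. Adding the failure probabilities as in the paper then gives success probability only above $1/2$ at $n=3$. Multiplying instead, $\tfrac56\cdot\tfrac23=\tfrac59$, reaches the required bound, and this is justified because $\mathcal A$'s random bits are independent of the public string that determines the instance.
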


\begin{proof}
    The proof closely follows that of \Cref{query-lower-bound}. Alice uses the same randomized reduction from \textsc{Index} to KCO to obtain a value vector $\mathbf v$ and knapsack capacity $B$. She then constructs a PLCO instance with value vector $\mathbf v$ and capacity vector $\mathbf b=B\mathbf 1_m$. As in the proof of \Cref{query-lower-bound}, the vector $\mathbf v \in (0,1]^n$ has distinct coordinates with constant probability, and each coordinate satisfies $v_i\in \delta\mathbb Z_{>0}$. Moreover, $B\le (n-1)\left(\frac{\min_{i\in[n+1]}v_i}{\max_{j\in[n+1]}v_j}-\frac{\delta^2}{n^2}\right)$. Therefore, $\mathbf v$ and $\mathbf b$ constitutes a valid PLCO instance.

Assume, towards a contradiction, that there exists a PLCO algorithm $\mathcal A$ with query complexity $q=o(n\log(1/\delta))$. Using $\mathcal A$, we construct a communication protocol that closely follows the proof of \Cref{query-lower-bound}. For each query $(\mathbf P,\mathbf Q)$ issued by $\mathcal A$, Alice simulates the comparison oracle $\mathcal O_{\mathbf w}(\mathbf P,\mathbf Q)$ and returns the resulting oracle response to $\mathcal A$. Once $\mathcal A$ terminates, Alice collects the oracle responses into a transcript $\Pi$ and sends it to Bob, together with the $\log_2(1/\delta)$-bit representation of $\min_{i\in[n+1]}\ v_i$. As shown in the proof of \Cref{query-lower-bound}, this information suffices for Bob to reconstruct $B$, and hence the capacity vector $\mathbf b=B\mathbf 1_m$. Finally, since each oracle response belongs to $\{ -\mathsf1,\mathsf0,\mathsf1 \}$, the transcript satisfies $|\Pi|\le 2q$.

Bob then re-simulates $\mathcal A$ using the reconstructed capacity vector $\mathbf b$ and the oracle responses from $\Pi$. Therefore, Bob recovers the value vector $\mathbf v$ and thereby solves the \textsc{Index} instance. By \Cref{thm:index_complexity}, Alice must communicate $\Omega(n\log(1/\delta))$ bits to Bob. However, since $|\Pi|\le 2q$, the total communication is at most $|\Pi|+\log_2(1/\delta)\le 2q+\log_2(1/\delta) = o(n\log(1/\delta))$. Since this bound contradicts \Cref{thm:index_complexity}, no such PLCO algorithm $\mathcal A$ exists, completing the proof.
\end{proof}

\section{Profit Maximization using Comparison Oracle}
\label{section:profit-maximization}
This section addresses the profit-maximization problem under comparison oracles. In the profit-maximization problem, a seller posts prices $\mathbf p \in [0,1]^n$ for $n$ divisible items to a buyer with unknown additive valuations $\mathbf v \in [0,1]^n$. Given $\mathbf p$, the buyer selects a value-maximizing bundle $\mathbf x^*(\mathbf p)$ subject to her budget, i.e., solves a fractional knapsack problem. The seller seeks to maximize the resulting profit, $(\mathbf p-\mathbf c)^\top \mathbf x^*(\mathbf p)$, where $\mathbf c\in[0,1]^n$ is a known cost vector. Throughout this section, as previously, $\mathbf x^*(\mathbf p)$ denotes the optimal fractional knapsack solution corresponding to the price vector $\mathbf p$, where ties in the value-to-price ratios are broken according to an unknown but fixed ordering of the items. Also, for every pair of queried price vectors $\mathbf p,\mathbf q\in[0,1]^n$, the comparison oracle returns $\mathcal O_{\mathbf w}(\mathbf p,\mathbf q) \coloneqq \operatorname{sign} \left( \mathbf w^\top\mathbf x^*(\mathbf p) - \mathbf w^\top\mathbf x^*(\mathbf q) \right)$.

Formally, the \emph{Profit Maximization using Comparison Oracle} problem is defined as follows.

\begin{definition}[Profit Maximization using Comparison Oracle]
\label{definition:profit-maximization}
Let $\mathbf{v}\in[0,1]^n$ be an unknown value vector with distinct coordinates $v_i\in\delta\mathbb{Z}_{>0}$, let $B>0$ be a known budget satisfying $B \le (n-1)\left(\frac{\min_{i\in[n]}v_i}{\max_{i\in[n]}v_i}-\frac{\delta^2}{n^2}\right)$, let $\mathbf c\in[0,1]^n$ be a known cost vector, and suppose we are given access to the comparison oracle $\mathcal O_{\mathbf w}(\cdot,\cdot)$. The \emph{Profit Maximization using Comparison Oracle} problem is to find a price vector $\mathbf p^* \in \argmax_{\mathbf p\in[0,1]^n} \ (\mathbf p-\mathbf c)^\top\mathbf x^*(\mathbf p)$.
\end{definition}

Fix a value vector $\mathbf v$ and budget (capacity) $B$. Let $\mathsf{OPT} \coloneqq \max_{\mathbf p\in[0,1]^n} \ (\mathbf p-\mathbf c)^\top\mathbf x^*(\mathbf p)$ denote the optimal profit. Also, for price vectors $\mathbf p \in [0,1]^n$, let $\mathcal X^*(\mathbf p)$ denote the set of optimal solutions of the corresponding fractional knapsack problem, i.e., $\mathcal X^*(\mathbf p) \coloneqq \argmax_{\mathbf x\in[0,1]^n} \left\{ \mathbf v^\top\mathbf x : \mathbf p^\top\mathbf x\le B \right\}$. 

The following theorem is the main result of this section. It shows that the {Profit Maximization using Comparison Oracle} problem can be solved to arbitrarily small additive error by a polynomial-time algorithm using $O(n\log(1/\delta)+B^2)$ comparison queries.

\begin{restatable}{theorem}{theoremprofitmax}
\label{theorem:profit-maximization}
For every $\varepsilon>0$, the Profit Maximization using Comparison Oracle problem admits a polynomial-time algorithm that, using $O(n\log(1/\delta)+B^2)$ comparison queries, computes a price vector $\widehat{\mathbf p}$ satisfying $(\widehat{\mathbf p}-\mathbf c)^\top \mathbf{x}^*(\widehat{\mathbf p}) \ge \mathsf{OPT}-\varepsilon.$ 
\end{restatable}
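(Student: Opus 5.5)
The plan has two stages. First, I will learn the values using only comparison queries. Second, I will solve the resulting profit-maximization problem offline, with no further queries, using the known scaled values.

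For the first stage, I run the KCO algorithm of \Cref{theorem:query-upper-bound-KwC}. The instance hypotheses match those of \Cref{definition:knapsack-with-comparison}: distinct grid values and the same bound on $B$. So in polynomial time and with $O(n\log(1/\delta)+B^2)$ queries we obtain $\widehat{\mathbf v}=\mathbf v/\lambda$ for some $\lambda>0$. By the scaling remark after \Cref{definition:knapsack-with-comparison}, the greedy solution $\mathbf x^*(\mathbf p)$ depends on $\mathbf v$ only through the value-to-price ratio ordering, and hence only up to scale. The one exception is tie-breaking, which is unknown. Consequently, for every $\mathbf p$ we can compute $\mathbf x^*(\mathbf p)$ exactly, except at price vectors where two items have equal ratios $v_i/p_i$. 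The profit function $\Phi(\mathbf p)=(\mathbf p-\mathbf c)^\top\mathbf x^*(\mathbf p)$ is therefore a known function away from these tie hyperplanes.

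For the second stage, I reduce to the offline revealed-preference analysis of \cite{amin2015onlineProfitMax}. Once $\mathbf v$ is known up to scale, their structural characterization of optimal prices applies directly. At an optimal price the buyer's packed items all share a common ratio, so they should be priced proportionally to value, $p_i=\min\{1,\, v_i/\mu\}$, and unwanted items are priced at $1$ or made unattractive. The optimum can then be found by optimizing over the scalar $\mu$ and the chosen set, which are polynomially many candidates ordered by value. I would write $\Phi$ for each candidate support and ordering as a piecewise-linear function of $\mu$ and maximize it over finitely many breakpoints. If I cannot import their result as a black box, I would carry out this argument directly: fix the greedy order and the fractional item, and observe that within each such cell $\Phi$ is a rational function of $\mathbf p$ of simple form.

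The main obstacle is that optimal prices typically sit exactly on ties, for example items priced proportionally to value, where the unknown tie-breaking could change $\mathbf x^*(\mathbf p)$ and ruin profit. This is exactly why the theorem only promises $\mathsf{OPT}-\varepsilon$. I would handle it by perturbation. Take the optimal candidate $\mathbf p^\circ$ and decrease the prices of the items intended to be bought earlier in the order by tiny, strictly ordered amounts $\eta_1>\eta_2>\cdots$, so that all ratios become strictly separated in the intended order. Then $\mathbf x^*(\widehat{\mathbf p})$ is determined regardless of tie-breaking. I also need the bundle to move continuously as $\eta\to0$ along this direction: the packed amounts change by $O(n\eta/\min_i p_i)$ and the prices by $O(\eta)$. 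Choosing $\eta$ polynomially small in $\varepsilon,\delta,1/n$ then gives $\Phi(\widehat{\mathbf p})\ge\Phi(\mathbf p^\circ)-\varepsilon$. To make this precise, I need $\mathsf{OPT}$ to be attained, or approached, by the tie-favorable limit, and I would check it as a supremum over the closures of the greedy-order cells. Since the construction uses no further queries, the query bound is unchanged.
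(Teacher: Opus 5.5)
Your proposal matches the paper's proof. You recover $\mathbf v$ up to scale with the KCO algorithm, which is the only place queries are used; invoke the structural lemma of \cite{amin2015onlineProfitMax} that some price vector $p_i=\min\{1,v_i/\mu\}$ (the paper restricts to $\mu\in\{v_1,\ldots,v_n\}$) attains $\mathsf{OPT}$ under favorable tie-breaking; and then perturb prices to force that tie-breaking regardless of the oracle's rule, at a loss of $\varepsilon$. One detail to tighten: the decrements must be relative to the prices (e.g.\ multiplicative factors $1-\ell\eta$, as in the paper), because additive amounts $\eta_1>\eta_2>\cdots$ alone need not give $\eta_a/p_a>\eta_b/p_b$ for tied items with different prices.
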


The proof of \Cref{theorem:profit-maximization} closely follows the approach of \cite{amin2015onlineProfitMax}. We include the details here for completeness. We begin by restating {\cite[Lemma 1]{amin2015onlineProfitMax}} and providing an alternative proof. The lemma asserts that there exists a set $\mathcal P$ of $n$ price vectors, one of which achieves the optimal profit. Note that, for price vector $\overline{\mathbf p}$, the lemma considers the best profit attainable over all (optimal) solutions in $\mathcal X^*(\overline{\mathbf p})$, rather than the particular tie-broken solution $\mathbf x^*(\overline{\mathbf p})$. Furthermore, the set $\mathcal P$ can be constructed once the value ratios are known.

\begin{lemma}[{\cite[Lemma 1]{amin2015onlineProfitMax}}]
\label{lemma:optimal-price-lies-in-P-structural-result}
For each index $k\in[n]$, define the price vector $\mathbf p^k$ by $p_i^k=\min\left\{\frac{v_i}{v_k},\,1\right\}$, for all $i\in[n]$. Let $\mathcal P\coloneqq\{\mathbf p^k:k\in[n]\}$. Then, there exists  $\overline{\mathbf p}\in\mathcal P$ such that 
\[
\max_{\mathbf x\in\mathcal X^*(\overline{\mathbf p})}
\ (\overline{\mathbf p}-\mathbf c)^\top\mathbf x
\ge
\mathsf{OPT}.
\]
\end{lemma}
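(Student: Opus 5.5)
The plan is to start from an arbitrary price vector $\mathbf p$ and the buyer's bundle $\mathbf x = \mathbf x^*(\mathbf p)$, and to transform $\mathbf p$ in a profit-non-decreasing way until it has the form $\mathbf p^k$. The tie-breaking freedom is absorbed by maximizing over $\mathcal X^*(\cdot)$. Since $\mathcal X^*(\mathbf p^k)$ allows any optimal tie-breaking, this maximization is what lets us keep a bundle of our choosing inside the optimal set.

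\textbf{Step 1 (normalization).} Let $\lambda \coloneqq \min\{v_i/p_i : x_i>0\}$ be the ratio of the last item touched by the greedy. Every item with $v_i > \lambda$ has ratio $v_i/p_i \ge v_i > \lambda$ because $p_i\le 1$, so it is packed fully. For every packed item we have $p_i \le v_i/\lambda$, and also $p_i \le 1$. Hence $p_i \le \min\{v_i/\lambda, 1\} \eqqcolon p^{\lambda}_i$ on the support of $\mathbf x$. Under $\mathbf p^{\lambda}$, items with $v_i\ge\lambda$ have price $1$ and ratio $v_i$, while all other items have ratio exactly $\lambda$. Thus $\mathcal X^*(\mathbf p^\lambda)$ consists of all bundles that pack the items with $v_i>\lambda$ fully and then fill the remaining budget arbitrarily with ratio-$\lambda$ items.

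\textbf{Step 2 (comparing profits).} I would define a bundle $\mathbf y\in\mathcal X^*(\mathbf p^\lambda)$ that reuses the support of $\mathbf x$, shrinking the fractions where $\mathbf p^\lambda$ is more expensive so that the budget is respected. I would then argue $(\mathbf p^\lambda-\mathbf c)^\top\mathbf y \ge (\mathbf p-\mathbf c)^\top\mathbf x$. The intuition is that revenue weakly increases, toward the cap of $B$ or to the full uncapped spending, while the cost $\mathbf c^\top\mathbf y$ does not increase because $\mathbf y\le\mathbf x$ coordinatewise. One subtlety is that shrinking could lower revenue if the budget was previously slack. To handle it I would split into two cases: the case $\mathbf p^\lambda{}^\top\mathbf x\le B$, where we take $\mathbf y=\mathbf x$ and revenue only grows, and the case where the budget binds, where revenue equals $B\ge \mathbf p^\top\mathbf x$.

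\textbf{Step 3 (moving $\lambda$ to a breakpoint).} Consider $g(\lambda)\coloneqq\max_{\mathbf y\in\mathcal X^*(\mathbf p^\lambda)}(\mathbf p^\lambda-\mathbf c)^\top\mathbf y$ for $\lambda$ in an open interval $(v_{(j+1)},v_{(j)})$ between consecutive sorted values. On such an interval the set of capped items is fixed and every other price equals $v_i/\lambda$. The optimal bundle either exhausts the budget, so that revenue is $B$ and the seller minimizes cost over a feasible region that enlarges as $\lambda$ grows, or it packs everything, so that revenue $\sum v_i/\lambda$ grows as $\lambda$ decreases. In each regime the objective is monotone or convex in $1/\lambda$ on the interval. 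Consequently the supremum is attained by pushing $\lambda$ to an endpoint $v_k$, using upper semicontinuity of $g$ at the endpoints, which holds because $\mathcal X^*$ can only grow at ties. This yields some $\mathbf p^k\in\mathcal P$ with $g(v_k)\ge\mathsf{OPT}$.

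The main obstacle I expect is Step 3. The piecewise analysis of $g$ needs care at interval endpoints, where new ties appear in $\mathcal X^*(\mathbf p^{v_k})$. It also needs care in the regime switch between ``budget binds'' and ``everything is affordable.'' If the monotonicity argument becomes messy, my fallback is a direct exchange argument: take the optimal $\mathbf p$ from Step 1, and let $k$ be the item with the smallest $v_k\ge\lambda$ among packed items. I would then show directly that $\mathbf p^k$ weakly raises every packed item's price while keeping an appropriately trimmed $\mathbf x$ in $\mathcal X^*(\mathbf p^k)$.
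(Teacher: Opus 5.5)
\textbf{There is a genuine gap, and it is in Step 3.} Steps 1 and 2 are essentially sound. In the case $(\mathbf p^{\lambda})^\top\mathbf x<B$, taking $\mathbf y=\mathbf x$ is admissible because then $\mathbf p^\top\mathbf x<B$, so the greedy packed everything and $\mathbf x=\mathbf 1_n$. The paper handles that case only implicitly, by writing $\mathsf{OPT}=B-\mathbf c^\top\mathbf x^*(\mathbf p^*)$.

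Step 3 is where the content of the lemma lives, and its key monotonicity claim is backwards. Take an interval $(v_{(j+1)},v_{(j)})$ with $j<B$ and tie class $L=\{(j+1),\dots,(n)\}$. In the budget-binding regime, $g(\lambda)=B-\sum_{i\le j}c_{(i)}-C\bigl((B-j)\lambda\bigr)$, where $C(R)=\min\{\sum_{i\in L}c_iy_i:\sum_{i\in L}v_iy_i=R,\ \mathbf y\in[0,1]^L\}$ is nondecreasing in $R$ (as $\mathbf c\ge\mathbf 0$). Raising $\lambda$ lowers the tied prices and forces you to buy \emph{more} value mass, so cost weakly rises. Nothing ``enlarges'' in your favor.

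The inference ``monotone in each regime, hence the supremum is at an endpoint'' is also invalid as stated. With the directions you describe (profit increasing in $\lambda$ while the budget binds, decreasing once everything is affordable), the maximum would sit at the interior switch point $\lambda_0=\sum_{i\in L}v_i/(B-j)$. Convexity cannot rescue this either, because in the binding regime $g$ is concave, not convex, in $1/\lambda$. What is true, and what you need, is that $g$ is non-increasing in $\lambda$ on the whole interval: this holds in both regimes, and $g$ is continuous at $\lambda_0$. So the supremum sits at the \emph{lower} breakpoint $v_{(j+1)}$, where the capped set is unchanged, and no semicontinuity argument is needed.

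\textbf{Your fallback points the wrong way.} If $v_k>\lambda$, then $\mathbf p^k\le\mathbf p^\lambda$. The marginal item $f$, with $p_f=v_f/\lambda<1$, gets $p^k_f=v_f/v_k<p_f$, so packed prices go down. The threshold must be at most $\lambda$.

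\textbf{Step 3 is in fact unnecessary.} Rerun Step 2 with $\tau\coloneqq$ the largest item value that is $\le\lambda$; it exists because $v_f\le v_f/p_f=\lambda$. Then:
\begin{itemize}
\item Every item with $v_i>\tau$ has $v_i>\lambda$ and is therefore fully packed in $\mathbf x$.
\item Every other item has ratio exactly $\tau$ under $\mathbf p^\tau$. This includes every unpacked item, whose value is at most its ratio, which is at most $\lambda$.
\item $\mathbf p^\tau\ge\mathbf p^\lambda\ge\mathbf p$ on the support of $\mathbf x$.
\end{itemize}
Your trimming argument then produces $\mathbf y\le\mathbf x$ in $\mathcal X^*(\mathbf p^\tau)$ with profit $B-\mathbf c^\top\mathbf y\ge(\mathbf p-\mathbf c)^\top\mathbf x$, and $\mathbf p^\tau\in\mathcal P$. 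This is exactly the paper's proof. The paper takes $\tau^*=\max\{v_f,v_k\}$, with $v_k$ the largest unpacked value; any item value in $[\max\{v_f,v_k\},\lambda]$ works.
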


\begin{proof}
Let $\mathbf p^* \in \argmax_{\mathbf p\in[0,1]^n} \ (\mathbf p-\mathbf c)^\top\mathbf x^*(\mathbf p)$ denote an optimal price vector for the profit-maximization problem. For the associated optimal solution $x^*(\mathbf p^*) \in \mathcal{X}^* (\mathbf p^*)$, let $O^* \coloneqq \{i \in [n] \ : \ x^*_i(\mathbf p^*) >0 \}$ be the set of (integrally or fractionally) packed items. 

We will identify a price vector $\overline{\mathbf p}\in\mathcal P$ and a solution $\overline{\mathbf y}\in\mathcal X^*(\overline{\mathbf p})$ such that $ (\overline{\mathbf p}-\mathbf c)^\top\overline{\mathbf y} \geq (\mathbf p^*-\mathbf c)^\top\mathbf x^*(\mathbf p^*) = \mathsf{OPT}$, thereby establishing the lemma. Toward this end, let $k$ be the highest-valued item outside $O^*$, i.e., $k=\argmax_{i\notin O^*}v_i$, and let $f\in O^*$ denote a packed item with the smallest value-to-price ratio under $\mathbf p^*$. Also, let threshold $\tau^* \coloneqq \max\{v_f, v_k\}$. Define the price vector $\overline{\mathbf p}$ by
\[
\overline p_i=\min\left\{\frac{v_i}{\tau^*},\,1\right\} 
\qquad \text{for all } i\in[n].
\]
Since $\tau^*\in\{v_f,v_k\}$, we have $\overline{\mathbf p}\in\mathcal P$.

We next show that under the price vector $\overline{\mathbf p}$, every item $i\in O^*$ has value-to-price ratio at least $\tau^*$, whereas every item $j\notin O^*$ has value-to-price ratio at most $\tau^*$. Furthermore, $\overline{p}_i \geq p^*_i$ for every item $i \in O^*$.

Note that every item $j\notin O^*$ satisfies $v_j\le v_k \leq \tau^*$. Hence, $\overline p_j=v_j/\tau^*$, and therefore every such item has value-to-price ratio equal to $\tau^*$. Now consider an item $i\in O^*$. If $v_i\le\tau^*$, then $\overline p_i=v_i/\tau^*$, and hence its value-to-price ratio is $\tau^*$. Otherwise, $v_i>\tau^*$, in which case $\overline p_i=1$, and its value-to-price ratio is $v_i>\tau^*$. Therefore, under $\overline{\mathbf p}$, every item in $O^*$ has value-to-price ratio at least $\tau^*$, whereas every item outside $O^*$ has value-to-price ratio at most $\tau^*$. Moreover, the value-to-price ratio of $f$ is exactly $\tau^*$. Consequently,
\begin{align} 
\frac{v_i}{\overline p_i} \ge \frac{v_f}{\overline p_f} & \ge \frac{v_j}{\overline p_j}, \qquad \text{for all } i\in O^*,\ j\notin O^* \label{ineq:ties} 
\end{align}

Next, observe that the greedy construction of $\mathbf x^*(\mathbf p^*)$ ensures that every packed item $i\in O^*$ satisfies $v_i/p_i^*\ge v_f/p_f^*\ge v_k/p_k^*$. Since $p_f^*,p_k^*\le1$, it follows that $v_i/p_i^*\ge\max\{v_f,v_k\}=\tau^*$. Hence, $p^*_i \leq v_i/\tau^*$. Combining this with the bound $p^*_i \leq 1$, we obtain 
\begin{align}
    p^*_i & \leq \min \left\{ \frac{v_i}{\tau^*}, 1 \right\} = \overline{p}_i \qquad \text{for all } i \in O^*\label{ineq:go}
\end{align} 

Inequalities \Cref{ineq:ties,ineq:go} show that, by breaking ties appropriately, we can construct an optimal solution $\overline{\mathbf y}\in\mathcal X^*(\overline{\mathbf p})$ such that $\overline{\mathbf y}\le\mathbf x^*(\mathbf p^*)$ coordinate-wise. Indeed, $x_i^*(\mathbf p^*)=1$ for every $i\in O^*\setminus\{f\}$ and $0<x_f^*(\mathbf p^*)\le1$. Therefore, under $\overline{\mathbf p}$, by breaking ties in favor of items in $O^*\setminus\{f\}$ and then $f$, we obtain an optimal solution $\overline{\mathbf y}$. Since the prices of items in $O^*$ do not decrease when moving from $\mathbf p^*$ to $\overline{\mathbf p}$, and ties are broken so that $f$ is considered last, every item is packed to no greater extent than under $\mathbf p^*$. Hence, $\overline{\mathbf y}\le\mathbf x^*(\mathbf p^*)$ coordinate-wise. Therefore, 
\begin{align*}
(\overline{\mathbf p}-\mathbf c)^\top\overline{\mathbf y} = B-\mathbf c^\top\overline{\mathbf y} \ge B-\mathbf c^\top\mathbf x^*(\mathbf p^*) = (\mathbf p^*-\mathbf c)^\top\mathbf x^*(\mathbf p^*) = \mathsf{OPT}.
\end{align*}
This completes the proof of the lemma.
\end{proof}

\begin{proof}[Proof of \Cref{theorem:profit-maximization}] 
For the given profit-maximization instance, we first recover the scaled values of all items using Algorithms~\ref{alg:get-partition}, \ref{alg:get-value-ratio-for-items-in-U}, and \ref{alg:get-value-ratio-of-items-in-S}. That is, we recover the value ratios specified in \Cref{theorem:query-upper-bound-KwC} using $O(n\log(1/\delta)+B^2)$ comparison queries.

Next, observe that each price vector in the set $\mathcal P$, defined in \Cref{lemma:optimal-price-lies-in-P-structural-result}, can be constructed from the recovered value ratios. We then iterate over the $n$ price vectors in $\mathcal P$ and, for each $\mathbf p\in\mathcal P$, compute $\max_{\mathbf x\in\mathcal X^*(\mathbf p)} \mathbf x^\top(\mathbf p-\mathbf c)$ in $O(n\log n)$ time using {\cite[Lemma 2]{amin2015onlineProfitMax}}. Consequently, we identify a price vector $\overline{\mathbf p}\in\mathcal P$ satisfying \Cref{lemma:optimal-price-lies-in-P-structural-result} in $O(n^2\log n)$ time. That is, we efficiently identify a price vector $\overline{\mathbf p}\in\mathcal P$ satisfying $\max_{\mathbf x\in\mathcal X^*(\overline{\mathbf p})} (\overline{\mathbf p}-\mathbf c)^\top\mathbf x \ge \mathsf{OPT}$.

Note, however, that this profit guarantee holds for some solution in $\mathcal X^*(\overline{\mathbf p})$, which need not coincide with the solution $\mathbf x^*(\overline{\mathbf p})$ selected under the oracle's tie-breaking rule. We address this issue via \Cref{lemma:price-perturbation-to-induce-optimal-bundle} (stated and proved below), which allows us to construct, in $O(n\log n)$ time, a price vector $\widehat{\mathbf p}$ satisfying
\begin{align*}
(\widehat{\mathbf p}-\mathbf c)^\top\mathbf x^*(\widehat{\mathbf p}) &\ge \max_{\mathbf x\in\mathcal X^*(\overline{\mathbf p})} (\overline{\mathbf p}-\mathbf c)^\top\mathbf x -\varepsilon \\ &\ge \mathsf{OPT}-\varepsilon.
\end{align*}
Therefore, we obtain a polynomial-time algorithm that computes a price vector whose profit is within $\varepsilon$ of the optimum using $O(n\log(1/\delta)+B^2)$ comparison queries. This establishes the theorem.
\end{proof}

\begin{lemma}[{\cite[Lemma 3]{amin2015onlineProfitMax}}]
\label{lemma:price-perturbation-to-induce-optimal-bundle}
    For price vector $\overline{\mathbf p}$ define in \Cref{lemma:optimal-price-lies-in-P-structural-result}, and given $\varepsilon > 0$, there exists a perturbation $\widehat{\mathbf p}$ of $\overline{\mathbf p}$ such that $\sum_{i=1}^n (\overline p_i - \widehat p_i) \leq \varepsilon$, and the following holds:
    \begin{itemize}
    \item[(i)] For any two items $i$ and $j$, $i \neq j$, with the same value-to-price ratio under $\overline{\mathbf p}$, it holds that under $\widehat{\mathbf p}$, the value-to-price of $i$ is larger than that of $j$ if $c_i < c_j$.
    \item[(ii)] For any two items $i$ and $j$ such that $v_i/\overline p_i > v_j/\overline p_j$, it holds that $v_i/\widehat p_i > v_j / \widehat p_j$.
    \item[(iii)] $(\widehat{\bf p} - \mathbf c)^\top \mathbf x^*(\widehat{\mathbf p}) \geq \max_{\mathbf x \in \mathcal{X}^*(\overline{\mathbf p})}\ (\overline{\bf p} - \mathbf c)^\top \mathbf x - \varepsilon$.
    \end{itemize}
    Moreover, $\widehat{\bf p}$ can be computed in polynomial-time.
\end{lemma}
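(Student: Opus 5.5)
The construction is a small multiplicative perturbation of $\overline{\mathbf p}$ that encodes the cost ordering inside each tie class of value-to-price ratios. Concretely, I group the items into classes $C_1, C_2, \ldots$ by their value-to-price ratio under $\overline{\mathbf p}$. Within each class, I sort items by increasing cost, breaking remaining ties by index, and give each item a rank $\rho_i \in \{0,1,\ldots,n-1\}$. I then set $\widehat p_i = \overline p_i(1-\eta\,(n-\rho_i))$ for a small $\eta>0$. The item with the smallest cost in a class gets the largest discount, so its value-to-price ratio $v_i/\widehat p_i = (v_i/\overline p_i)/(1-\eta(n-\rho_i))$ is the largest in that class. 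This gives (i) directly. Prices stay in $[0,1]$ because they only decrease, and they stay nonnegative since $\eta n<1$.

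For (ii), let $\Delta>0$ be the minimum gap between distinct ratios $v_i/\overline p_i$. This gap is computable because $\overline p_i=\min\{v_i/\tau^*,1\}$ and the ratios are taken from the finitely many values $\{\tau^*\}\cup\{v_i\}$. Multiplying each ratio by a factor in $[1,1/(1-\eta n)]$ preserves strict inequalities as long as $\eta n$ is small compared with $\Delta$ divided by the maximum ratio, which is at most $1/\delta$ by the grid assumption. For the budget, note that $\sum_i(\overline p_i-\widehat p_i)\le \eta n\sum_i\overline p_i\le \eta n^2$. Choosing $\eta\le\min\{\varepsilon/n^2,\ \delta\Delta/(2n)\}$ then gives $\sum_i(\overline p_i-\widehat p_i)\le\varepsilon$ and (ii). Computing the classes, the sorting, and $\eta$ is clearly polynomial.

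Part (iii) is the main obstacle. Under $\widehat{\mathbf p}$, the greedy order is now strict, by (i) and (ii), so the oracle's unknown tie-breaking rule no longer matters. The order agrees with the $\overline{\mathbf p}$ order across classes and sorts each class by increasing cost. I need to compare $\mathbf x^*(\widehat{\mathbf p})$ with the best tie-breaking for $\overline{\mathbf p}$.

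First, I will argue that $\max_{\mathbf x\in\mathcal X^*(\overline{\mathbf p})}(\overline{\mathbf p}-\mathbf c)^\top\mathbf x$ is attained by the greedy solution that processes each tie class in increasing cost order. Every optimal solution fills the higher classes fully and splits budget only inside the marginal class, where profit per unit of budget is $1-c_i/\overline p_i$. Here I expect to use that, within a class, the greedy ordering by $c_i$ is the right one; more precisely, it should be ordering by $c_i/\overline p_i$. I will check whether this issue arises in the relevant classes. In the class of $\tau^*$, prices are proportional to values, and following \cite{amin2015onlineProfitMax} I will take the cost ordering as stated. If a discrepancy appears, I will rank by $c_i/\overline p_i$ instead; this does not affect the other parts of the argument.

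Second, I will run greedy under $\widehat{\mathbf p}$ and $\overline{\mathbf p}$ with the same order. The packed amounts change by $O(\eta n^2)$ in total, since the budget freed up is at most $\sum_i(\overline p_i-\widehat p_i)$ and prices are bounded below by $\delta$. Revenue equals $B$ in both runs whenever the budget binds, and the cost term changes by at most the extra packing. So the profit loss is $O(\varepsilon/\delta)$. Rescaling $\varepsilon$ by $\delta$ in the choice of $\eta$ yields the claimed bound $\varepsilon$.
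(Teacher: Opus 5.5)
Your perturbation, your arguments for (i), (ii) and the budget bound, and your overall plan for (iii) are essentially the paper's. The paper discounts the item in global position $\ell$ by the factor $1-\ell\varepsilon\delta/n^3$, listing tied items by decreasing cost, and uses that distinct ratios $\max\{v_i,\tau^*\}$ differ by at least $\delta$. Your coupling argument for the second half of (iii) is sound: the strict order is the same in both runs, $\mathbf x^*(\widehat{\mathbf p})\ge\overline{\mathbf y}$ coordinatewise, revenue is $B$ in both runs, and the extra packing is at most $\sum_i(\overline p_i-\widehat p_i)/\min_i\widehat p_i$. It is cleaner than the paper's bookkeeping with $f$ and $u$, once you also treat the case where the budget does not bind.

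The genuine gap is the first step of (iii), which you leave open. It cannot be closed in favour of cost order. As you essentially observe, every $\mathbf x$ that fills the higher classes and spends the residual budget $B'$ anywhere inside the class of $\tau^*$ lies in $\mathcal X^*(\overline{\mathbf p})$. So the maximum in (iii) equals a constant minus $\min\{\sum_i c_ix_i:\sum_i\overline p_ix_i=B'\}$ over that class. That minimum is attained by filling in increasing order of $c_i/\overline p_i$. Since the prices $\overline p_i=v_i/\tau^*$ are distinct there, this need not be the order of $c_i$.

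Here is a concrete instance:
\begin{itemize}
\item take zero-cost items valued above an item $a$;
\item take an item $b$ with $v_b=v_a/2$, with $c_a=0.4$ and $c_b=0.3$;
\item take enough cost-$1$ fillers valued between $v_b$ and $v_a$ to meet the capacity bound;
\item choose $B$ so that $x^*_a(\mathbf 1_n)=\tfrac12$.
\end{itemize}
Then $\mathbf p^*=\mathbf 1_n$ is optimal with fractional item $a$, and $\overline{\mathbf p}=\mathbf p^a$. The best element of $\mathcal X^*(\overline{\mathbf p})$ spends the residual $\tfrac12$ on half of $a$, at cost $0.2$. Any $\widehat{\mathbf p}$ satisfying (i) and (ii), however, must price $b$ below $v_b/v_a=\tfrac12$ and place it before $a$. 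So $b$ is bought in full, at cost $\ge 0.3$, and (iii) fails for every $\varepsilon<0.1$.

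The missing idea is your own fallback: rank each tie class by $c_i/\overline p_i$. With that ranking, your perturbation and coupling argument give (ii), (iii), the bound $\sum_i(\overline p_i-\widehat p_i)\le\varepsilon$, and polynomial-time computability. This works because $\mathbf x^*(\widehat{\mathbf p})$ now follows the packing order of a profit-maximizing element of $\mathcal X^*(\overline{\mathbf p})$.

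Contrary to your remark, though, the switch does affect the statement. Part (i) then holds with $c_i/\overline p_i$ in place of $c_i$, and the example shows that (i) as literally stated is incompatible with (iii). The paper's proof contains the same unjustified step: it asserts that the profit-maximizing $\overline{\mathbf y}$ packs tied items in non-decreasing order of $c_i$. So you located the real weak point. Since the profit-maximization theorem only uses (iii), the budget bound and efficient computability, the corrected version is what that theorem actually needs.
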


\begin{proof}
    Under $\overline{\mathbf p}$, let the non-increasing value-to-price ordering of the $n$ items be $\{i_1, i_2, \ldots, i_n\}$ where ties are broken in favour of larger cost $c_i$, i.e., if two items $i$ and $j$ have the same value-to-price ratio, then $i$ gets a smaller index in the ordering than $j$ if $c_i > c_j$ (in case of $c_i = c_j$ ties are broken arbitrarily).

Consider the following perturbation of price vector $\overline{\mathbf p}$. For $\ell \in [n]$, define
\[
\widehat p_{i_{ \ell}} \coloneqq \overline{p}_{i_{\ell}} \big(1 - (\ell \varepsilon \delta/n^3)\big).
\]
In other words, we reduce the price of item $i_{\ell}$ by an amount $\ell \varepsilon \delta/n^2$. Hence, we have
\begin{align*}
    \sum_{\ell=1}^n \overline p_{i_\ell} - \widehat p_{i_\ell} = \sum_{\ell=1}^n \overline p_{i_\ell} (\ell \varepsilon \delta/n^3) \leq \sum_{\ell=1}^n \overline p_{i_\ell} (\varepsilon \delta / n^2) \leq \varepsilon \delta / n,
\end{align*}
where the first inequality uses that $\ell \leq n$ for all $i \in [n]$ and the second inequality is due to $\overline p_i \leq 1$ for all $i \in [n]$. Therefore, indeed $\sum_{i \in [n]} \overline{p}_i - \widehat p_i \leq \varepsilon/2$.

For \textbf{part (i)}, observe that if for items $i_\ell$ and $i_{\ell'}$, $v_{i_\ell}/\overline p_{i_\ell} = v_{i_{\ell'}}/\overline p_{i_{\ell'}}$, then $\ell < \ell'$ if $c_{i_\ell} > c_{i_{\ell'}}$. However, we have
\begin{align*}
\frac{v_{i_{\ell'}}}{\widehat p_{i_{\ell'}}} - \frac{v_{i_{\ell}}}{\widehat p_{i_{\ell}}} &= \frac{v_{i_{\ell'}}}{\overline p_{i_{\ell'}}(1 - (\ell' \varepsilon \delta/n^3))} - \frac{v_{i_{\ell}}}{\overline p_{i_{\ell}} (1 - (\ell \varepsilon \delta / n^3))} \\
&= \frac{v_{i_\ell}}{\overline{p}_{i_\ell}} \left(\frac{1}{1 - (\ell' \varepsilon \delta/ n^3)} - \frac{1}{1 - (\ell \varepsilon \delta/ n^3})\right) \\
&>0 \tag{$\ell < \ell'$}
\end{align*}
Hence, for two items $i$ and $j$ with the same value-to-price ratio under $\overline{\bf p}$, it holds that item $i$ has higher value-to-price ratio than item $j$ under $\widehat{\bf p}$ if $c_i < c_j$.

For \textbf{part (ii)}, note that for any $i \in [n]$, we have $v_i/\overline{p}_i = \max\{v_i, v_{j^*}\}$ for some $j^* \in [n]$. Thus, it holds that if $v_i/\overline{p}_i > v_{i'}/\overline{p}_{i'}$, then $(v_i/\overline{p}_i) - (v_{i'}/\overline{p}_{i'}) \ge \delta$. Let $\widehat p_i = \overline p_i (1 - (\ell \varepsilon \delta/n^2))$ and $\widehat p_{i'} = \overline p_{i'} (1 - (\ell' \varepsilon \delta/n^2))$ for some $\ell, \ell' \in [n]$. Hence, we have
\begin{align*}
    \frac{v_i}{\widehat p_i} - \frac{v_{i'}}{\widehat p_{i'}} &= \frac{v_i}{\overline p_i (1 - (\ell \varepsilon \delta/n^3))} - \frac{v_{i'}}{\overline p_{i'} (1 - (\ell' \varepsilon \delta/n^3))}\\
    &\geq \frac{v_i}{\overline{p}_i} - \frac{v_{i'}}{\overline p_{i'} (1 - (\varepsilon \delta/n^2))}  \tag{$1 \leq \ell, \ell' \leq n$}\\
    &\geq \frac{v_i}{\overline{p}_i} - \frac{v_{i'}}{\overline p_{i'}} (1 + (2 \varepsilon \delta/n^2)) \tag{$1/(1 - x) \leq 1 + 2x$ for $x \in [0, 1/2]$}\\
    &= \frac{v_i}{\overline{p}_i} - \frac{v_{i'}}{\overline p_{i'}} - \max\{v_{i'}, v_{j^*}\} \frac{2 \varepsilon \delta}{n^2} \\
    &\geq \delta - \frac{2 \varepsilon \delta}{n^2} \tag{$v_t \leq 1$ for all $t \in [n]$}\\
    &>0.
\end{align*}

For \textbf{part (iii)}, consider two distinct elements $\mathbf x $ and $ \mathbf x'$ in the set $\mathcal X(\mathbf v, \overline{\mathbf p}, B)$. Both of them, by virtue of being the optimal solution for the knapsack problem, starts by packing the items in non-increasing order of value-to-price ratio. The packing in $\mathbf x$ and $\mathbf x'$ remains identical as items are packed integrally, till the remaining knapsack capacity $B' < \sum_{\ell = 1}^t \overline p_{j_t}$, where all items in $\{j_1, \ldots, j_t\}$ have the same value-to-price ratio, which is different from items in $[n] \setminus \{j_1, \ldots, j_t\}$. At this point $\mathbf x$ and $\mathbf x'$ pack items differently. For the cost optimal solution $\overline{\mathbf y} \in \argmax_{\mathbf x \in \mathcal{X}(\mathbf v, \overline{\mathbf p}, B)}$, the order of packing items in $\{j_1, \ldots, j_t\}$ must be in non-decreasing order of cost $c_{j_i}$, $i \in [t]$. Between items $j_i$ and $j_{i'}$ with the same cost, let $\overline{\mathbf y}$ break tie in favour of the item that appears before the other in the greedy ordering under $\widehat{\mathbf p}$.

Consider the value-to-price ordering of $\{j_1, \ldots, j_t\}$ under $\widehat{\bf p}$. Note that items that had strictly higher and strictly lower value-to-price ratio than $j_1$ under $\overline{\bf p}$ continues to maintain this relative ordering under $\widehat{\bf p}$ (shown in part (ii)). For each item in $\{j_1, \ldots, j_t\}$, the value-to-price ratio under $\widehat{\bf p}$ is unique, hence the packing order of these items are unique. Moreover, this order is nothing but the increasing order of cost $c_{j_i}$, $i \in [t]$, as shown in part (i). Therefore, $\mathcal X(\mathbf v, \widehat{\bf p}, B)$ is a singleton containing $\mathbf x^*(\widehat{\bf p})$. Note that upto the last integrally packed item in $\overline{\bf y}$, the packing of $\mathbf x^*(\widehat{\bf p})$ is identical. This is because the ordering of items is the same in both the cases, whereas $\widehat{p}_i \leq \overline{p}_i$. Note that in $\mathbf x^*(\widehat{\bf p})$, after packing the same set of items integrally as $\mathbf y$, an extra capacity of at most $\sum_{i=1}^n \overline{p}_i - \widehat{p}_i \leq \varepsilon \delta/n$ remains. Since for any $i$, $\widehat{p}_i = \overline{p}_i(1 - (\ell \varepsilon \delta/ n^3))$ for some $\ell \in [n]$, it holds that
\begin{align}
\label{eq:max-packing-possible}
    \frac{\varepsilon \delta / n}{\widehat{p}_i} = \frac{\varepsilon \delta / n}{\overline{p}_i(1 - (\ell \varepsilon \delta/ n^3))} \leq \frac{\varepsilon}{n (1 - (\varepsilon \delta/ n^2))} \leq 2\varepsilon/n,
\end{align}
where the first inequality is because $\overline{p}_i = \min\{v_i/v_{j^*}, 1\} \geq \delta$ and $\ell \leq n$, and the last inequality is via $\varepsilon \delta/ n^2 \leq 1/2$. Hence, a total of at most $2 \varepsilon/n$ quantity of any item can be packed using the remaining capacity of at most $\varepsilon \delta/n$.

Let $I = \{i \in [n]: \overline y_i = 1\}$ and $f$ be the item such that $\overline y_f \in (0,1)$. Further, let $u$ be the item appearing right after $f$ in the value-to-price ordering under $\widehat{\bf p}$. Note that after packing all items in $I$ integrally, $\mathbf x^*(\widehat{\bf p})$ is left with a capacity of at most 
$$\overline p_f \overline y_f + \varepsilon\delta/n \leq \widehat{p}_f + 3 \varepsilon \delta/n \leq \widehat{p}_f + \widehat{p}_u,$$ 
where we have used $\overline{p}_i = \widehat{p}_i/(1 - (\ell \varepsilon \delta/n^3)) \leq \widehat{p}_i + 2\delta\varepsilon/n$. The last step $\widehat{p}_u = \overline{p}_u (1 - (\ell \varepsilon \delta/n^3)) \geq \overline{p}_u/2 \geq \delta/2$. Hence, at most $f$ and $u$ can be integrally packed, and no other items appearing after $u$ can be packed in $\mathbf x^*(\widehat{\bf p})$. 

Next, we have $x^*_f(\widehat{\bf p}) \geq \overline{y}_f$ since prices are only lower under $\widehat{\bf p}$, and by \Cref{eq:max-packing-possible}, $x^*_u(\widehat{\bf p})$. Moreover, $x^*_f \leq \frac{\overline{p}_f \overline{y}_f + \varepsilon \delta/n}{\widehat{p}_f} \leq \overline{y}_f + (2 \overline{y}_f \varepsilon \delta/n^2) + 2\varepsilon/n$, where we have used $1/(1 - x) \leq 1 + 2x$ for $x \in [0,1/2]$ and \Cref{eq:max-packing-possible}. Then, we have
\begin{align*}
    (\widehat{\mathbf p} - \mathbf c)^\top \mathbf x^*(\widehat{\mathbf p}) - (\overline{\mathbf p} - \mathbf c)^\top \overline{\mathbf y} &= \left(\sum_{i \in I} \widehat{p}_i - \overline{p}_i\right) + (\widehat{p}_f - c_f)  x^*_f(\widehat{\bf p}) - (\overline{p}_f - c_f) \overline{y}_f + (\widehat{p}_u - c_u)  x^*_u(\widehat{\bf p})\\
    &\geq -\varepsilon\delta/n + (\overline{p}_f - c_f)(x^*_f(\widehat{\bf p}) - \overline{y}_f) - (\ell \varepsilon \delta/n^3) x^*_f(\widehat{\bf p}) - c_u x^*_u(\widehat{\bf p}) \\
    &\geq -\varepsilon\delta/n - c_f(x^*_f(\widehat{\bf y}) - \overline{y}_f) - \varepsilon \delta/n^2 - c_u x^*_u(\widehat{\bf p}) \tag{$x^*_f(\widehat{\bf p}) \geq \overline{y}_f$, $\ell \leq n$} \\
    &\geq -\varepsilon\delta/n - \varepsilon \delta/n^2 - c_f \left((2 \overline{y}_f \varepsilon \delta/n^2) + 2\varepsilon/n\right) - c_u (2\varepsilon/n) \tag{$x^*_u(\widehat{\bf p}) \leq 2\varepsilon/n$} \\
    &\geq - 5 \varepsilon/n \tag{$0 \leq c_f, c_u \leq 1$, $3\delta / n \leq 1$}
\end{align*}

Finally, note that the construction of $\widehat{\bf p}$ only requires sorting the items by value-to-price ratio, and among the items with equal value-to-price ratio, by cost. This step requires $O(n \log n)$ time. Lastly, to assign $\widehat{p}_i$ to every item $i \in [n]$ requires $O(n)$ time. Hence, a total of $O(n \log n)$ time is required to construct $\widehat{\bf p}$.
This concludes the proof of the lemma.
\end{proof}

\section{Conclusion and Future Work}
\label{section:conclusion}
This work establishes that, despite observing only pairwise comparisons between optimal solutions, it is possible to recover the underlying objective of the fractional knapsack problem and, more generally, packing linear programs in a query-efficient manner. Two particularly relevant instantiations of the comparison-oracle model are those in which the oracle compares either the total sum of the coordinates of the optimal solutions or their objective values. Our algorithms are built upon structural properties of optimal packings with carefully designed price vectors for comparison queries. We complement this algorithmic result with a query lower bound that matches the upper bound up to a linear factor.

Beyond these specific algorithmic insights, our results highlight the power of ordinal information in important linear optimization problems.

An interesting direction for future work is to understand whether similar query-efficient algorithms can be obtained for general linear programs and convex optimization problems. It would also be interesting to understand the role of adaptivity in this setup and to study noisy oracle models.

\section*{Acknowledgments}
Siddharth Barman and Nirjhar Das gratefully acknowledges the support of the Walmart Centre for Tech Excellence (CSR WMGT-23-0001) and Ittiam CSR Grant (OD/OTHR-24-0032).

\addcontentsline{toc}{section}{References}
\bibliographystyle{alpha}
\bibliography{references}

\newpage
\appendix
\section{Simplest Intervening Rational via Continued Fractions}
\label{appendix:Stern-Brocot}

\begin{theorem} 
\label{thm:rational-reconstruction-short-interval}
Let $N\in\mathbb Z_{>0}$, and let $I=[a,b]$ be an interval satisfying $b-a<1/N^2$. Suppose that $I$ contains a rational number $p^*/q^*$ with $p^*,q^*\in\{1,2,\ldots,N\}$. Then $p^*/q^*$ is unique. Moreover, the unique rational $p^*/q^*$ can be identified in time polynomial in $\log N$ and the bit complexity of the endpoints $a$ and $b$.
\end{theorem}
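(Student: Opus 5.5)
Uniqueness follows the argument of \Cref{lemma:unique-p/q-in-interval-of-length-delta^2}. If $p/q\neq p'/q'$ both lie in $I$ with $p,q,p',q'\le N$, then
\[
|p/q-p'/q'| = |pq'-p'q|/(qq') \ge 1/N^2 > b-a,
\]
which is a contradiction. So $p^*/q^*$ is the only fraction with denominator at most $N$ in $I$.

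For the algorithm, I would compute the \emph{simplest} rational in $I$, meaning the one with smallest denominator. I would then argue that it equals $p^*/q^*$. Let $p'/q'$ be the rational in $I$ of least denominator; it exists because $I$ contains at least one rational. Then $q'\le q^*\le N$. If $p'>N$, then $p'/q'>N/q'\ge 1$. But $I$ contains $p^*/q^*\le N$ and has length below $1$, so I would handle the numerator bound separately. The cleanest way is to choose, among rationals in $I$ with least denominator, the one with least numerator. Since $p^*/q^*$ and $p'/q'$ both lie in $I$ and $q'\le q^*$, we get $p'\le q'(p^*/q^*+(b-a))<p^*+1$ roughly. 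So $p'\le N$ after a short check, and uniqueness then forces $p'/q'=p^*/q^*$. Alternatively, one can simply verify the output against $p,q\le N$, since we know a valid fraction exists.

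To compute the simplest rational in $[a,b]$, I would use the standard continued-fraction recursion. If $\lfloor a\rfloor<a$ and $\lceil a\rceil\le b$, return the smallest integer in $[a,b]$ (the case $a$ integer is handled directly). Otherwise $a$ and $b$ share the integer part $m$. Write $a=m+a'$ and $b=m+b'$ with $0<a'\le b'<1$, and recurse on $[1/b',1/a']$, returning $m+1/(\text{result})$. Correctness is the classical fact that the simplest rational in an interval is given by the common prefix of the continued fractions of $a$ and $b$, followed by one more partial quotient. I would cite \cite{murakami2010continued} for this rather than reprove it.

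The main obstacle is the running-time bound. The number of recursion steps is at most the length of the common continued-fraction prefix. The convergent denominators grow at least like Fibonacci numbers, and the recursion only continues while the current convergents' denominators stay below roughly $q^*\le N$. So the depth is $O(\log N)$. Each step involves exact rational arithmetic on endpoints whose bit size grows additively, so the total cost is polynomial in $\log N$ and the bit complexity of $a,b$. I would make the depth bound precise by noting that the denominators of the successive approximating intervals' convergents at least double every two steps and cannot exceed $q^*$ before termination.
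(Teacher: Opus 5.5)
Your proposal is correct and follows essentially the same route as the paper. Uniqueness comes from the $1/N^2$ separation of fractions with numerator and denominator at most $N$, and $p^*/q^*$ is then identified with the minimum-denominator rational in $I$, computed by the continued-fraction method of \cite{murakami2010continued}; your bound $p'\le q'b<p^*+1$ fills in the step the paper only asserts, namely that this minimum-denominator rational has numerator at most $N$ and therefore equals $p^*/q^*$.
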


\begin{proof}
The uniqueness of the rational $p^*/q^*$ follows from \Cref{lemma:unique-p/q-in-interval-of-length-delta^2}; in particular, for distinct rationals $p/q\neq p'/q'$ with $p,q,p',q' \in \{1, 2, \ldots, N\}$ it holds that 
\[
    \left|\frac pq-\frac{p'}{q'}\right| = \frac{|pq'-p'q|}{qq'} \ge \frac1{N^2}.
\]
Hence, interval $I$ contains at most one rational number with numerator and denominator at most $N$. Therefore, recovering $p^*/q^*$ 
is equivalent to computing the rational number of minimum denominator contained in $I$. The claimed recovery then follows from the continued-fraction based method of Murakami~\cite{murakami2010continued} (see also Stern-Brocot algorithm \cite{bosma2012continued}), which finds a rational number of minimum denominator contained in a given interval. This rational is referred to as the simplest intervening rational. The time complexity of this algorithm in polynomial in $\log N$ and the bit complexity of the interval's endpoints. This completes the proof.  
\end{proof}

\section{Towards Necessity of Bound on Capacity}
\label{appendix:bounded-capacity}
In this section, we present two examples showing that the bounded-capacity assumption is essentially necessary. For a value vector $\mathbf v$, let $ v_{\min}:=\min_{i\in[n]} v_i \text{ and } v_{\max}:=\max_{i\in[n]} v_i .$ We show that if the capacity $B$ exceeds the threshold $(n-1)v_{\min}/v_{\max}$ by even a small multiplicative factor, then value recovery from comparison queries may become impossible. More precisely, we construct KCO instances with $B=(1+o(1))(n-1)\frac{v_{\min}}{v_{\max}}$ for which there exist two non-proportional value vectors that induce identical comparison-oracle responses. Hence, no algorithm can recover the value vector $\mathbf v$ up to scaling in these instances.

\begin{example}
\label{example:large-capacity-indistinguishability}
There exists a family of KCO instances with $ \frac{v_{\min}}{v_{\max}}=1-o(1)$ and $B=(1+o(1))(n-1)\frac{v_{\min}}{v_{\max}},$ for which the value vector cannot be recovered from comparison queries.
\begin{proof}
Consider an instance in which
$
    v_{\max}:=\max_{i\in[n]} v_i = 1\text{ and } 
    v_{\min}:=\min_{i\in[n]} v_i = 1-\frac{1}{n^2},
$
with all item values distinct and lying in the interval
$
    \left[1-\frac{1}{n^2},1\right].
$
For this instance, the bound on the capacity equals
$
    (n-1)\left(\frac{v_{\min}}{v_{\max}}-\frac{\delta^2}{n^2}\right)
    =
    (n-1)\left(1-\frac{1+\delta^2}{n^2}\right)
    =
    n-1-O\left(\frac1n\right).
$

With the multiplicative factor $1 + o(1)$ chosen as $1 + O(1/n)$, we have
$
    B
    =
    (1+o(1))(n-1)\left(\frac{v_{\min}}{v_{\max}}-\frac{\delta^2}{n^2}\right)
    \ge n.
$

Consequently, for every price vector $\mathbf p\in[0,1]^n$, the all-ones vector $\mathbf 1_n$ is feasible for the knapsack LP, since
$
    \mathbf p^\top \mathbf 1_n
    =
    \sum_{i=1}^n p_i
    \le n
    \le B.
$
Furthermore, since all item values are positive, the unique optimal solution is to pack every item fully. Hence
$
    \mathbf x^*(\mathbf p)=\mathbf 1_n
     \text{ for every }\mathbf p\in[0,1]^n.
$
Therefore, for any queried price vectors $\mathbf p,\mathbf q\in[0,1]^n$,
\[
    \mathbf w^\top \mathbf x^*(\mathbf p)
    =
    \mathbf w^\top \mathbf 1_n
    =
    \mathbf w^\top \mathbf x^*(\mathbf q).
\]
Thus, every comparison query returns $\mathsf 0$, independent of the underlying value vector. Consequently, the oracle responses are identical for all positive value vectors satisfying this feasibility condition (i.e., all item values being distinct and lying in the interval $[1, 1-1/n^2]$ with $v_{\max} = 1$ and $v_{\min} = 1 - 1/n^2$), and hence, the item values cannot be recovered.
\end{proof}
\end{example}

\begin{example}
\label{example:small-value-ratio-indistinguishability}
There exists a family of KCO instances with $\frac{v_{\min}}{v_{\max}}=o(1)$ and $B=(1+o(1))(n-1)\frac{v_{\min}}{v_{\max}},$ for which the value vector cannot be recovered from comparison queries.

\begin{proof}
Let $T=T(n)$ be a parameter satisfying $T\to\infty$ as $n \to \infty$ and $T + 3\le n/2$; for instance, one may take $T=\sqrt n$ or, more generally, $T=n^{1-\eta}$ for any fixed $\eta\in(0,1)$. Consider two value vectors $\mathbf v^1,\mathbf v^2\in[0,1]^n$ defined as follows:
$
    v^1_1=\frac1T,
    v^2_1=\frac{1-\frac1n}{T},$
and, for every $i\in\{2,\ldots,n\}$,
$
    v^1_i=v^2_i=\frac1n+\frac{n-i}{n^3}.
$

Thus, the two value vectors agree on all coordinates except the first one, and hence they are not positive scalar multiples of each other.

For both instances, the minimum value is
$
    v_{\min}=\frac1n,
$
while, since $T\le n/2$, the maximum value is attained by item $1$ and is
$
    v_{\max}=\frac1T.
$
Therefore, due to the fact that $\delta^2/T \ll 1$, we have,
$$
    (n-1)\left(\frac{v_{\min}}{v_{\max}} - \frac{\delta^2}{n^2}\right)
    =
    (n-1)\left(\frac{T}{n} - \frac{\delta^2}{n^2}\right)
    \geq
    T\left(1-\frac2n\right).
$$
Hence,
$
    B
    =
    \left(1+o(1)\right)(n-1)\left(\frac{v_{\min}}{v_{\max}} - \frac{\delta^2}{n^2} \right) \geq (1 + o(1)) T (1 - 2/n).
$
An appropriate choice of $o(1)$, e.g., $3/T$, (note that $T \to \infty$ as $n \to \infty$) allows the value to $B$
\[
B = \left(1 + \frac{3}{T}\right) T \left(1 - \frac{2}{n}\right) = T + 3 - \frac{2(T + 3)}{n} \geq T + 2.
\]
where we have used $T + 3 \leq n/2$.

We now show that the two value vectors $\mathbf v^1$ and $\mathbf v^2$ are indistinguishable using comparison queries. Fix an arbitrary price vector $\mathbf p\in[0,1]^n$. Let $\mathbf x^1(\mathbf p)$ and $\mathbf x^2(\mathbf p)$ denote the greedy optimal fractional-knapsack solutions under the value vectors $\mathbf v^1$ and $\mathbf v^2$, respectively.

The relative ordering of the items in $\{2,\ldots,n\}$ by value-to-price ratio is the same in both instances, since their values are identical in the two instances. The only difference is that item $1$ has value $1/T$ in the first instance and value $(1-1/n)/T$ in the second instance. Hence, relative to the common ordering of the items in $\{2,\ldots,n\}$, item $1$ can only move later in the second instance.

Thus, for some ordering $i_1,\ldots,i_{n-1}$ of the items in $\{2,\ldots,n\}$ and some integers $k,\ell\ge 0$, the greedy orderings in the two instances have the form
$
    \text{Instance }1:
    i_1,i_2,\ldots,i_k,1,i_{k+1},\ldots,i_{n-1},
$
and
$
    \text{Instance }2:  i_1,i_2,\ldots,i_k,i_{k+1},\ldots,i_{k+\ell},1,
    i_{k+\ell+1},\ldots,i_{n-1}.
$
We claim that all items in
    $\{i_1,\ldots,i_{k+\ell},1\}$
can be packed integrally in both instances.

For any $j\le k$, since $i_j$ appears before item $1$ in the first instance, we have
\(
    \frac{\frac1n+\frac{n-i_j}{n^3}}{p_{i_j}}
    \ge
    \frac{1/T}{p_1},
\) or equivalently,
\[
    p_{i_j}
    \le
    T\left(\frac1n+\frac{n-i_j}{n^3}\right)p_1 \le 1,
\]
where the last inequality is via $T \leq n/2$. Similarly, for any $j\in\{k+1,\ldots,k+\ell\}$, since $i_j$ appears before item $1$ in the second instance, we have
\(
    \frac{\frac1n+\frac{n-i_j}{n^3}}{p_{i_j}}
    \ge
    \frac{(1-\frac1n)/T}{p_1},
\)
or equivalently,
\[
    p_{i_j}
    \le
    \frac{T}{1-\frac1n}
    \left(\frac1n+\frac{n-i_j}{n^3}\right)p_1 \leq 1,
\]
where the last inequality is via $T \leq n/2$. Combining the two bounds and using $p_1\le 1$, we obtain
\begin{align*}
    p_1+\sum_{j=1}^{k+\ell}p_{i_j}
    &\le
    p_1\left(
        1+
        \frac{T}{1-\frac1n}
        \sum_{j=1}^{k+\ell}
        \left(\frac1n+\frac{n-i_j}{n^3}\right)
    \right) \\
    &\le
    1+
    \frac{T}{1-\frac1n}
    \sum_{i=2}^{n}
    \left(\frac1n+\frac{n-i}{n^3}\right) \\
    &= 1 + \frac{T}{1 - 1/n} \left(\frac{n-1}{n}
    +
    \frac{(n-1)(n-2)}{2n^3}\right) \\
    &= 1 + T + \frac{T(n-2)}{2n^2}\\
    &\leq T + 5/4 \tag{$T \leq n/2$}\\
    &< B.
\end{align*}
Hence the items $\{i_1,\ldots,i_{k+\ell},1\}$ are packed integrally in both instances.

After these items have been packed, the remaining items appear in the same order in both instances, namely
$
    i_{k+\ell+1},\ldots,i_{n-1},
$
and the remaining capacity is also the same in both instances. Since these remaining items have identical values in the two instances, the greedy algorithm proceeds identically from this point onward. Thus, for every price vector $\mathbf p\in[0,1]^n$,
$
    \mathbf x^1(\mathbf p)=\mathbf x^2(\mathbf p).
$

Consequently, for any two queried price vectors $\mathbf p,\mathbf q\in[0,1]^n$ and any comparison weight vector $\mathbf w\in\mathbb R^n_{>0}$, we have
\[
    \operatorname{sign}\left(
        \mathbf w^\top \mathbf x^1(\mathbf p)
        -
        \mathbf w^\top \mathbf x^1(\mathbf q)
    \right)
    =
    \operatorname{sign}\left(
        \mathbf w^\top \mathbf x^2(\mathbf p)
        -
        \mathbf w^\top \mathbf x^2(\mathbf q)
    \right).
\]
Thus, the comparison oracle's responses are identical on $\mathbf v^1$ and $\mathbf v^2$, even though $\mathbf v^1 \neq \lambda \mathbf v^2$ for any $\lambda > 0$. Therefore, in this regime, the value vector cannot be recovered from comparison queries.
\end{proof}
\end{example}

\section{Existence of a Fractionally Packed Item under Uniform Perturbation}
\label{appendix:BZwlog}

The algorithms in this work are described using
the all-ones price vector \(\mathbf 1_n\). When \(B\notin\mathbb Z\), the
greedy solution under \(\mathbf 1_n\) contains a unique fractionally packed
item. If \(B\in\mathbb Z\), this need not be true, since the buyer may pack an
integer number of items exactly. The following lemma shows that this issue can
be avoided by replacing \(\mathbf 1_n\) with a uniform price vector
\(\rho\mathbf 1_n\), where \(\rho\) is chosen arbitrarily close to \(1\).

\begin{lemma} \label{lemma:fractional-item-exists-at-uniform-size} Let \(B\in\mathbb Z\) with \(B<n\). Then there exists a scalar \(\rho\in(0,1)\), arbitrarily close to \(1\), such that \(B/\rho\notin \mathbb Z\). Consequently, under the price vector \(\rho\mathbf 1_n\), the optimal fractional solution \(\mathbf x^*(\rho\mathbf 1_n)\) contains a unique fractionally packed item. Moreover, if the items are ordered so that \(v_{(1)}>v_{(2)}>\cdots>v_{(n)}\), and \(m\coloneqq \lfloor B/\rho\rfloor\), then the unique fractionally packed item is \((m+1)\), with \[ x^*_{(m+1)}(\rho\mathbf 1_n) = \frac{B-m\rho}{\rho} \in(0,1). \] \end{lemma}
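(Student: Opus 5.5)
The plan is to argue directly from the greedy structure of $\mathbf x^*(\rho\mathbf 1_n)$ under a uniform price vector. There are three steps: choose $\rho$ so that $B/\rho$ is non-integral, identify the greedy order, and then compute the packing.

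\textbf{Choosing $\rho$.} Because $B$ is a positive integer with $B<n$, the map $\rho\mapsto B/\rho$ is continuous and strictly decreasing on $(0,1]$, and it equals $B$ at $\rho=1$. For $\rho\in(B/(B+1),1)$ we get $B<B/\rho<B+1$, so $B/\rho\notin\mathbb Z$ and $\lfloor B/\rho\rfloor=B$. This interval is a nonempty open interval with right endpoint $1$. Hence $\rho$ can be taken arbitrarily close to $1$, and then $m=\lfloor B/\rho\rfloor=B\le n-1$, so item $(m+1)$ exists. More generally, any $\rho$ with $B/\rho\notin\mathbb Z$ and $m<n$ works for the rest of the argument. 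The case $m\ge n$ cannot occur for $\rho$ near $1$, since $B<n$.

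\textbf{Greedy order.} Under $\rho\mathbf 1_n$ every item has price $\rho$, so the value-to-price ratios are $v_i/\rho$. These are distinct because the values are distinct. Therefore the greedy order is exactly $(1),(2),\dots,(n)$, and the unknown tie-breaking rule never applies.

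\textbf{Computing the packing.} I will run the greedy procedure step by step. After integrally packing items $(1),\dots,(j)$, the residual capacity is $B-j\rho$. This stays at least $\rho$ exactly when $j+1\le B/\rho$, that is, when $j+1\le m$. So items $(1),\dots,(m)$ are packed integrally, and the residual capacity is then $B-m\rho$. By the definition of the floor and the non-integrality of $B/\rho$, we have $0<B/\rho-m<1$, which is equivalent to $0<B-m\rho<\rho$. Item $(m+1)$ is therefore packed to the extent $(B-m\rho)/\rho\in(0,1)$. This exhausts the budget, so all later items are unpacked.

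Uniqueness of the fractional item follows immediately: every other coordinate is $0$ or $1$. The only care needed is the strictness in $0<B-m\rho<\rho$, and that is exactly where $B/\rho\notin\mathbb Z$ is used. I do not expect any real obstacle here.
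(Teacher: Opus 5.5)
Your proof is correct and follows the paper's argument: the paper fixes the specific value $\rho=\frac{aB+1}{aB+2}$, which lies in your interval $(B/(B+1),1)$ and gives $B/\rho=B+\frac{B}{aB+1}$, and then reads off the greedy packing under uniform prices exactly as you do. Your explicit check that $m=B\le n-1$, so that item $(m+1)$ exists, is a small point the paper leaves implicit.
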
 \begin{proof} Set \( \rho=\frac{aB+1}{aB+2}, \) where \(a\) is a sufficiently large integer. Then \(\rho\in(0,1)\), and \(\rho\) can be chosen arbitrarily close to \(1\) by choosing \(a\) appropriately. Moreover, \[ \frac{B}{\rho} = B\left(\frac{aB+2}{aB+1}\right) = B+\frac{B}{aB+1}. \] For sufficiently large \(a\), we have \(0<\frac{B}{aB+1}<1\), and hence \(B/\rho\notin\mathbb Z\). Now consider the optimal solution \(\mathbf x^*(\rho\mathbf 1_n)\). Since every item has price \(\rho\), the value-to-price ratio of item \(i\) is \(v_i/\rho\). Hence the greedy ordering is the same as the ordering of values, \(v_{(1)}>v_{(2)}>\cdots>v_{(n)}\). Let \(m=\lfloor B/\rho\rfloor\). Then the first \(m\) items are packed integrally. The remaining budget is \(B-m\rho\in(0,\rho)\), where the inclusion follows from \(B/\rho\notin\mathbb Z\). Therefore item \((m+1)\) is packed fractionally with \[ x^*_{(m+1)}(\rho\mathbf 1_n) = \frac{B-m\rho}{\rho} \in(0,1). \] All subsequent items remain unpacked. This proves the lemma. 
\end{proof}

\end{document}